\newtheorem{theorem}{Theorem}[section] %(If you want theorem numbered
\newtheorem{lemma}[theorem]{Lemma} %%    with section number.
\newtheorem{corollary}[theorem]{Corollary}
\newtheorem{proposition}[theorem]{Proposition}
\newtheorem{definition}{Definition}[section]
\newcommand{\customlabel}[2]{%
   \protected@write \@auxout {}{\string \newlabel {#1}{{#2}{\thepage}{#2}{#1}{}} }%
   \hypertarget{#1}{#2}
}
\newcommand{\E}{\mathbb{E}}
\newcommand{\Pb}{\mathbb{P}}
\newcommand{\R}{\mathbb{R}}
\DeclareMathOperator*{\argmin}{arg\,min}
\def\ind{\perp\!\!\!\perp}
\newenvironment{assumptionp}[1]{
  
  \assumptionalt
}{\endassumptionalt}
\newcommand{\cmmnt}[1]{\ignorespaces}  % To make sections invisible
\title{Hierarchical and Density-based Causal Clustering}
\author{%
Kwangho Kim \\  
Korea University\\  
\texttt{kwanghk@korea.ac.kr} \\
\And
Jisu Kim \\
Seoul National University\\
\texttt{jkim82133@snu.ac.kr} \\  
\AND
Larry A. Wasserman \\
Carnegie Mellon University \\
\texttt{larry@stat.cmu.edu} \\ 
\And   
Edward H. Kennedy \\
Carnegie Mellon University \\
\texttt{edward@stat.cmu.edu} \\
}
\begin{document}

\maketitle

\begin{abstract}
Understanding treatment effect heterogeneity is vital for scientific and policy research. However, identifying and evaluating heterogeneous treatment effects pose significant challenges due to the typically unknown subgroup structure. Recently, a novel approach, causal k-means clustering, has emerged to assess heterogeneity of treatment effect by applying the k-means algorithm to unknown counterfactual regression functions. In this paper, we expand upon this framework by integrating hierarchical and density-based clustering algorithms. We propose plug-in estimators that are simple and readily implementable using off-the-shelf algorithms. Unlike k-means clustering, which requires the margin condition, our proposed estimators do not rely on strong structural assumptions on the outcome process. We go on to study their rate of convergence, and show that under the minimal regularity conditions, the additional cost of causal clustering is essentially the estimation error of the outcome regression functions. Our findings significantly extend the capabilities of the causal clustering framework, thereby contributing to the progression of methodologies for identifying homogeneous subgroups in treatment response, consequently facilitating more nuanced and targeted interventions. The proposed methods also open up new avenues for clustering with generic pseudo-outcomes. We explore finite sample properties via simulation, and illustrate the proposed methods in voting and employment projection datasets. 
\end{abstract}

\section{Introduction} 

\subsection{Heterogeneity of Treatment Effects} \label{sec:introduction-1}
Causal effects are typically summarized using population-level measures, such as the average treatment effect (ATE). However, these summaries may be insufficient when treatment effects vary across subgroups. For example, finding the subgroups that experience the least or greatest benefit from a specific treatment is of particular importance in personalized medicine or policy evaluation, where the subgroup effects of interest may diverge significantly from the population effect. Even while experiencing the same treatment effects, some people may have been exposed to a significantly higher baseline risk. In the presence of effect heterogeneity, the typically unknown subgroup structure poses significant challenges in accurately identifying and evaluating subgroup effects compared to population-level effects.

To delve deeper than the information provided by the population summaries and to better understand treatment effect heterogeneity, investigators often estimate the conditional average treatment effect (CATE) defined by
\begin{align*}
    \E(Y^1 - Y^0 \mid X), %\quad \text{where} \quad X \in \mathcal{X},
\end{align*}
where $Y^a$ is the potential outcome that would have been observed, possibly contrary to fact, under treatment $A = a$, and $X\in\mathcal{X}$ is a vector of observed covariates. 
The estimation of the CATE has the potential to facilitate the personalization of treatment assignments, taking into account the characteristics of each individual.
Admittedly, the CATE is the most commonly-used estimand to study treatment effect heterogeneity. 
Various methods have been proposed to obtain accurate estimates of and valid inferences for the CATE, with a special emphasis in recent years on leveraging the rapid development of machine learning methods \citep[e.g.,][]{foster2011subgroup, imai2013estimating, van2014targeted, athey2016recursive, grimmer2017estimating, shalit2017estimating, zhang2017mining,  kunzel2017meta, nie2017quasi, wager2018estimation, kennedy2020optimal}. 

Subgroup analysis has been the most common analytic approach for examining heterogeneity of treatment effect. Selection of subgroups reflecting one's scientific interest plays a central role in the subgroup analysis. Statistical methods aimed at finding such subgroups from observed data have been termed \emph{subgroup discovery} \citep{lipkovich2017tutorial}. The selection of such subgroups may be informed by mechanisms and plausibility (e.g., clinical judgment), taking into account prior knowledge of treatment effect modifiers. They could be chosen by directly subsetting the covariate space, often in a one-variable-at-a-time fashion \citep[e.g.,][]{radice2012evaluating}. Most existing studies on data-driven subgroup discovery identify subgroups where the CATE exceeds a prespecified threshold of clinical relevance, allowing researchers to prioritize subgroups with enhanced efficacy or favorable safety profiles \citep[e.g.,][]{zhao2013effectively, ondra2016methods, schnell2016bayesian, chen2017general, ballarini2018subgroup, loh2019subgroup}. Some recent advances proposed heuristics for discovering rules based on a specific CATE estimator subject to a certain optimality criterion, yet without any theoretical exploration \citep[e.g.,][]{dwivedi2020stable, budhathoki2021discovering, hejazi2021framework, qi2021explaining}. \citet{wang2022causal} proposed an algorithm to automatically find a subgroup based on the causal rule: (CATE $>$ ATE). \citet{kallus2017recursive} proposed a subgroup partition algorithm for determining a subgroup structure that minimizes the personalization risk.
% \citet{nagpal2020interpretable} proposed an interpretable subgroup discovery method based on a parametric mixture model.

\subsection{Causal Clustering} \label{sec:introduction-2}

In contrast to earlier work predominantly focused on supervised learning approaches, there is a growing interest in analyzing heterogeneity in causal effects from an unsupervised learning perspective, particularly within the causal discovery literature. Based on the causal graph or structural causal model framework, there has been a series of recent attempts to learn \emph{structural heterogeneity} through clustering analysis \citep[e.g.,][]{kummerfeld2016causal, hu2018causal, huang2019specific, markham2022distance}. Conversely, the exploration of \emph{treatment effect heterogeneity} in the potential outcome/counterfactual framework using unsupervised learning methods has received significantly less attention. To our knowledge, only one paper has developed such methods; \citet{kim2024causal} have proposed \emph{Causal k-Means Clustering}, a new framework for exploring heterogeneous treatment effects leveraging tools from cluster analysis, specifically k-means clustering. It allows one to understand the structure of effect heterogeneity by identifying underlying subgroups as clusters without imposing a priori assumptions about the subgroup structure. 

To illustrate, we consider binary treatments and project a sample onto the two-dimensional Euclidean space $(\E[Y^0 \mid X], \E[Y^1 \mid X])$. It is immediate to see that closer units are more homogeneous in terms of the CATE, which provides vital motivation for uncovering subgroup structure via cluster analysis on this particular counterfactual space. (See (a) \& (e) in Figure \ref{fig:motivation-1-2}). This approach has the capability to uncover complex subgroup structures beyond those identified by CATE summary statistics or histograms. Moreover, it holds particular promise in outcome-wide studies featuring multiple treatment levels \citep{vanderweele2017outcome, vanderweele2016association}, because instead of probing a high-dimensional CATE surface to assess the subgroup structure, one may attempt to uncover lower-dimensional clusters with similar responses to a given treatment set.

However, the method proposed by \citet{kim2024causal} only applies to k-means clustering. Despite is popularity, k-means has some drawbacks. It works best when clusters are at least roughly spherical. It also has trouble clustering data when the  clusters are of varying sizes and density, or based on non-Euclidean distance. Furthermore, the cluster centers (centroids) might be dragged by outliers, or outliers might even get their own cluster. Other commonly-employed clustering algorithms, particularly hierarchical and density-based approaches, could mitigate some of these limitations \citep{kriegel2005density, jain2010data, almahmoud2024effect, hahsler2019dbscan, cabezas2023hierarchical}. Density-based clustering is applicable for identifying clusters of arbitrary sizes and shapes, while concurrently exhibiting robustness to noise and outlier data points. Hierarchical clustering proves beneficial in scenarios where the data exhibit a nested structure or inherent hierarchy, irrespective of their shape, and can accommodate various distance metrics. It enables for the creation of a dendrogram, which provides insights into the interrelations among clusters across multiple levels of granularity. Figure \ref{fig:motivation-1-2} illustrates the three methods in the causal clustering framework with binary treatments, where hierarchical and density-based clustering methods produce more reasonable subgroup patterns.

\begin{figure}[t!]
    \centering
    \subfigure[Sample 1]{\includegraphics[width=0.24\textwidth]{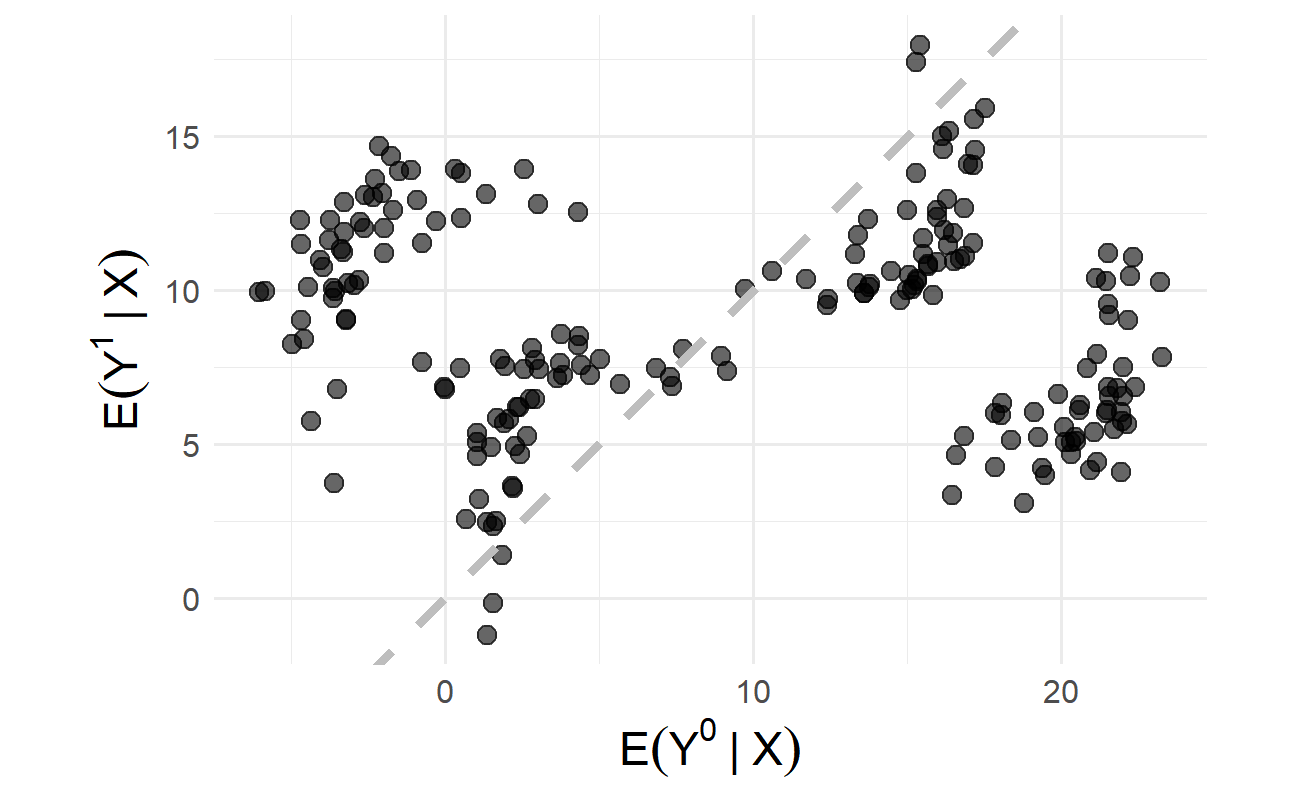}}
    \hfill%
    \subfigure[k-Means]{\includegraphics[width=0.245\textwidth]{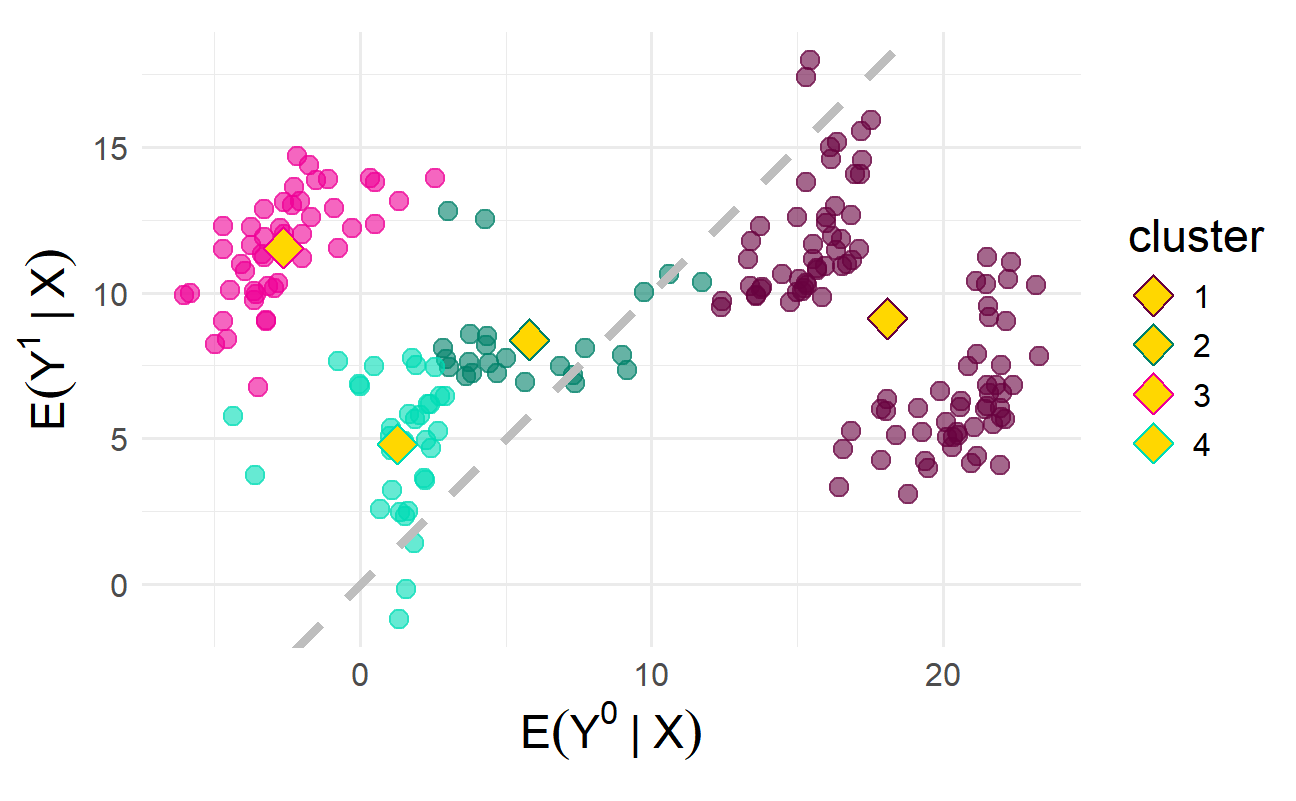}}
    \hfill%
    \subfigure[Hierarchical]{\includegraphics[width=0.245\textwidth]{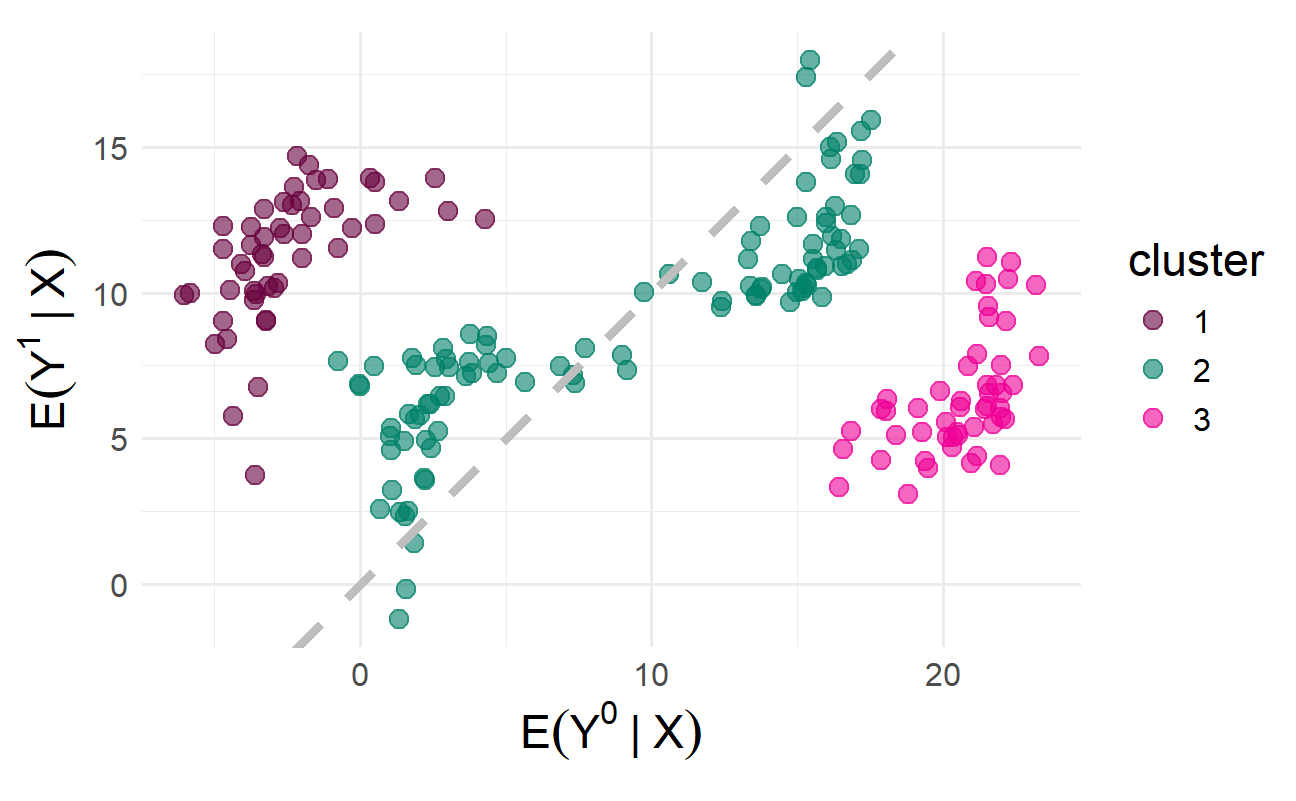}}
    \hfill%
    \subfigure[Density]{\includegraphics[width=0.245\textwidth]{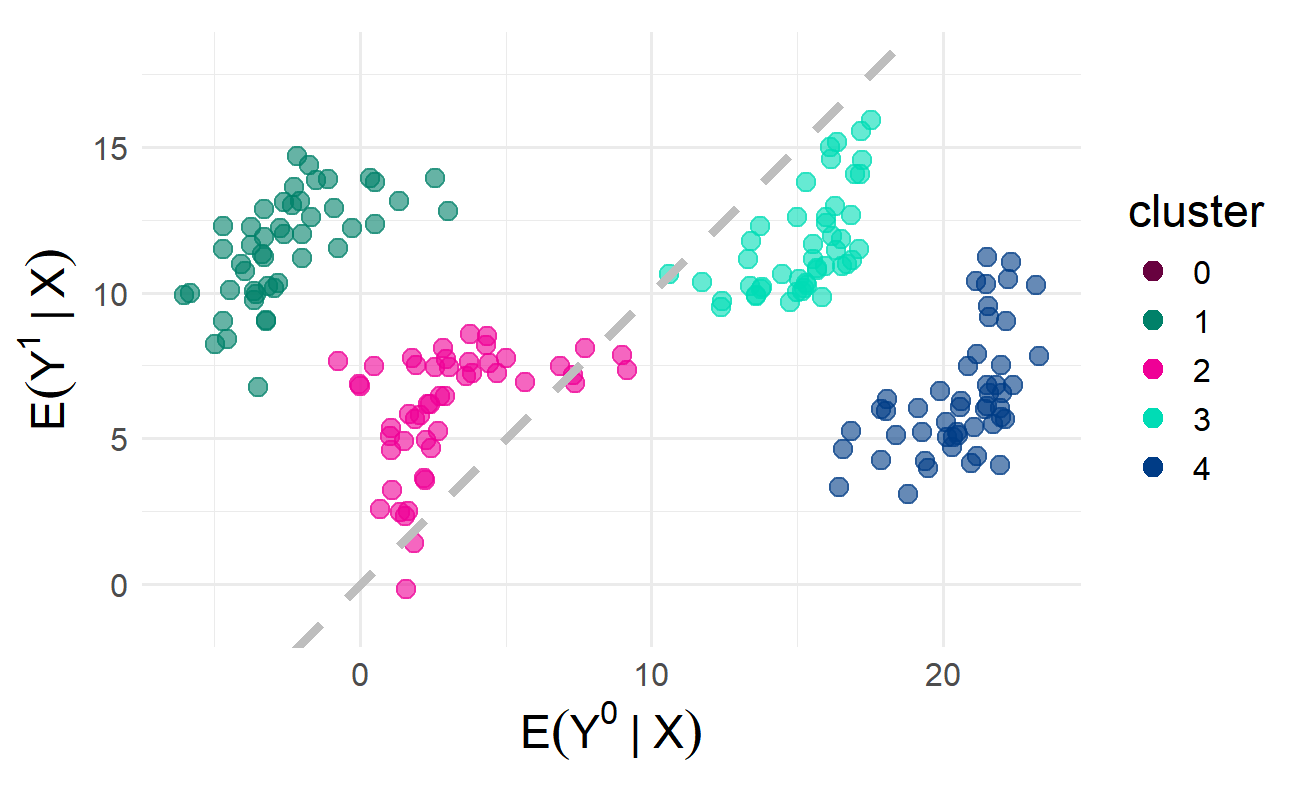}}    
    % \caption{.}
    % \label{fig:motivation-1}
\end{figure}
% \vspace*{-.5in}
\begin{figure}[t!]
    \centering
    \subfigure[Sample 2]{\includegraphics[width=0.24\textwidth]{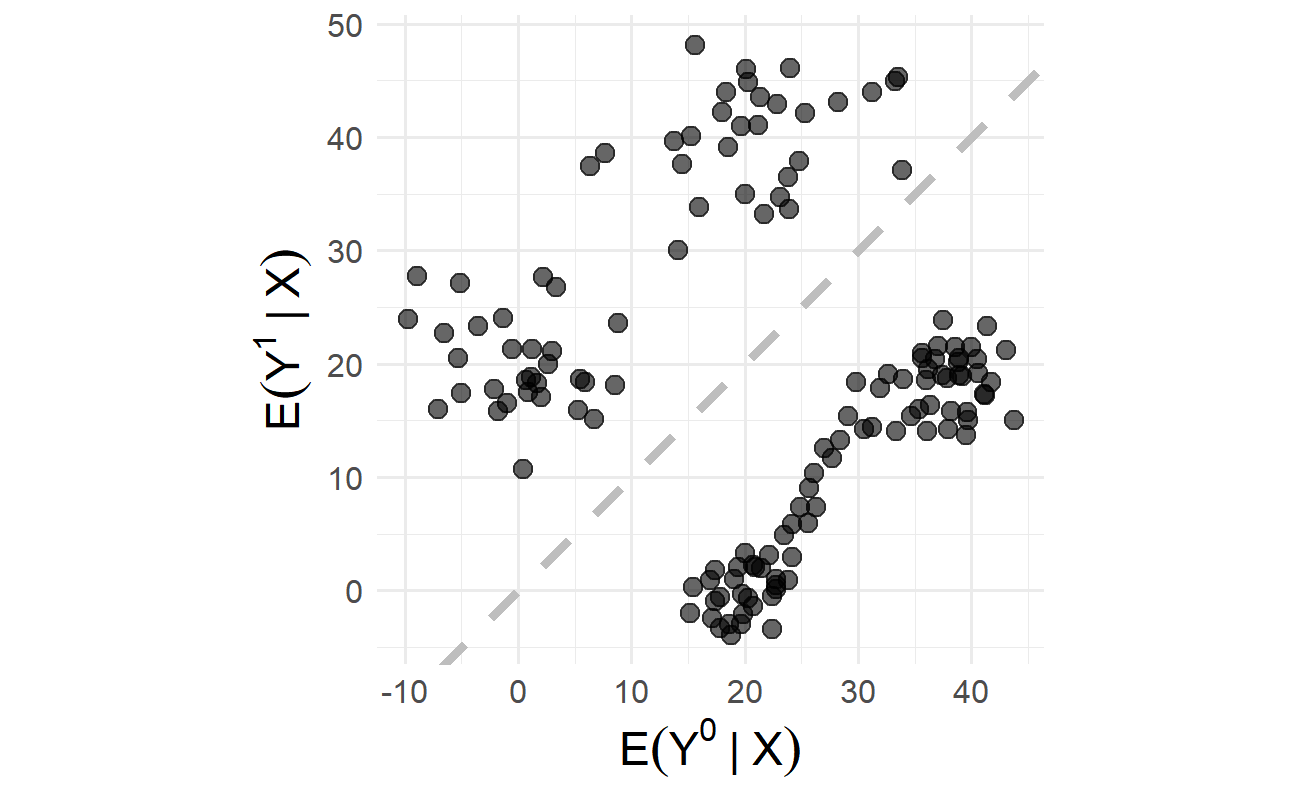}}
    \hfill%
    \subfigure[k-Means]{\includegraphics[width=0.245\textwidth]{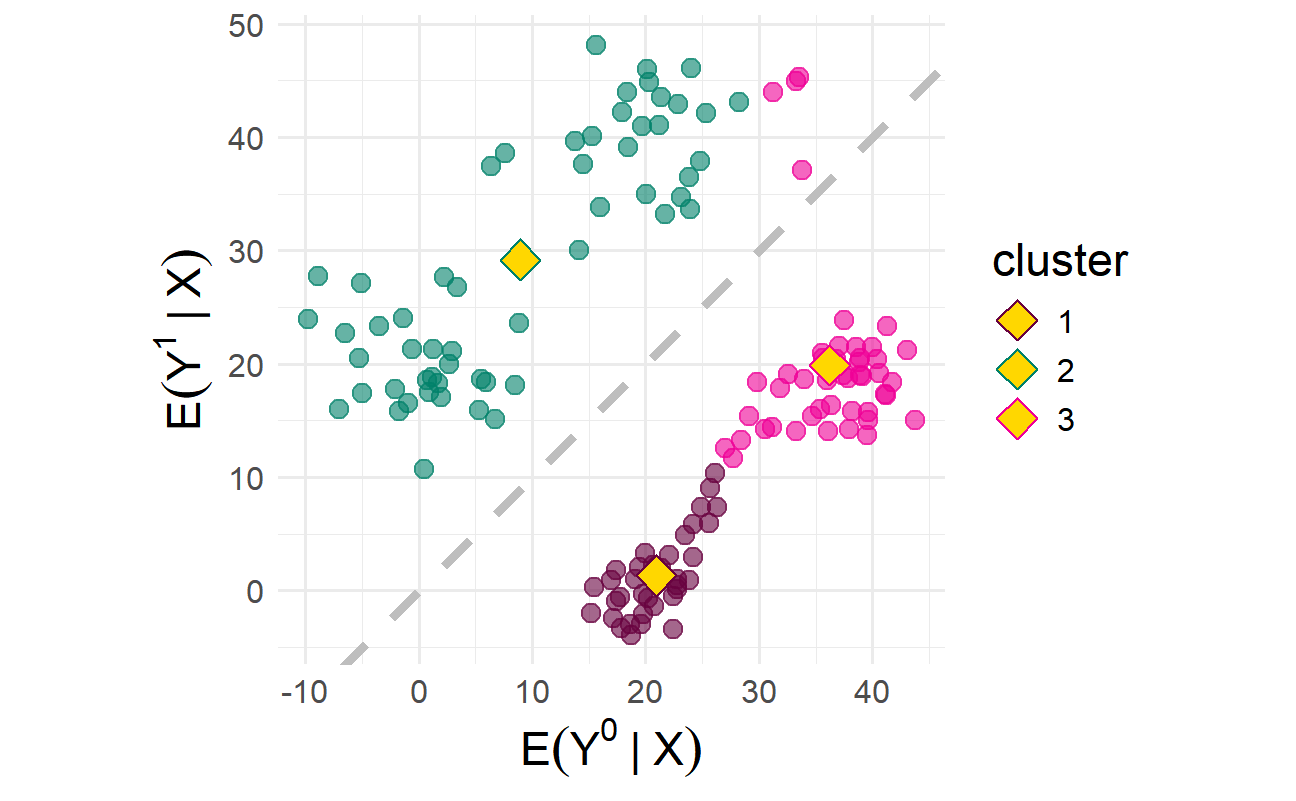}}
    \hfill%
    \subfigure[Hierarchical]{\includegraphics[width=0.245\textwidth]{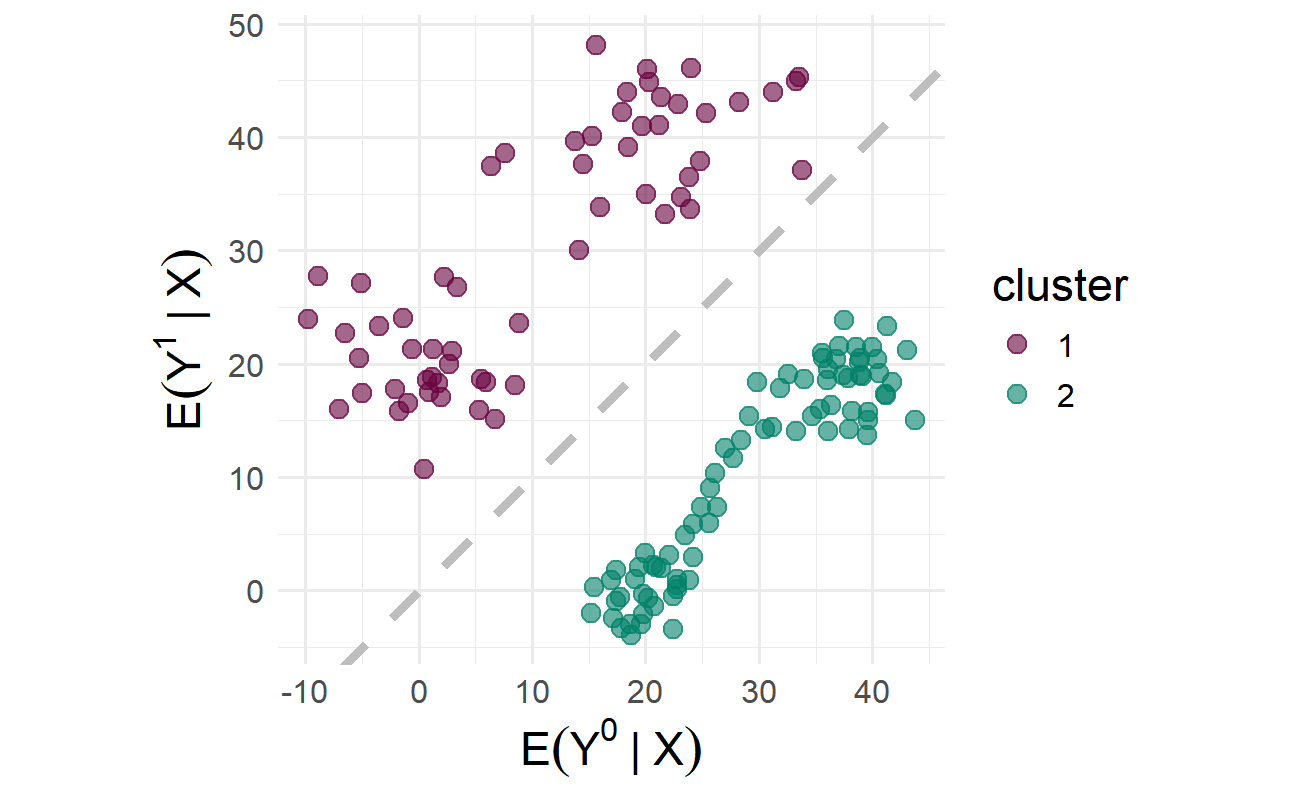}}
    \hfill%
    \subfigure[Density]{\includegraphics[width=0.245\textwidth]{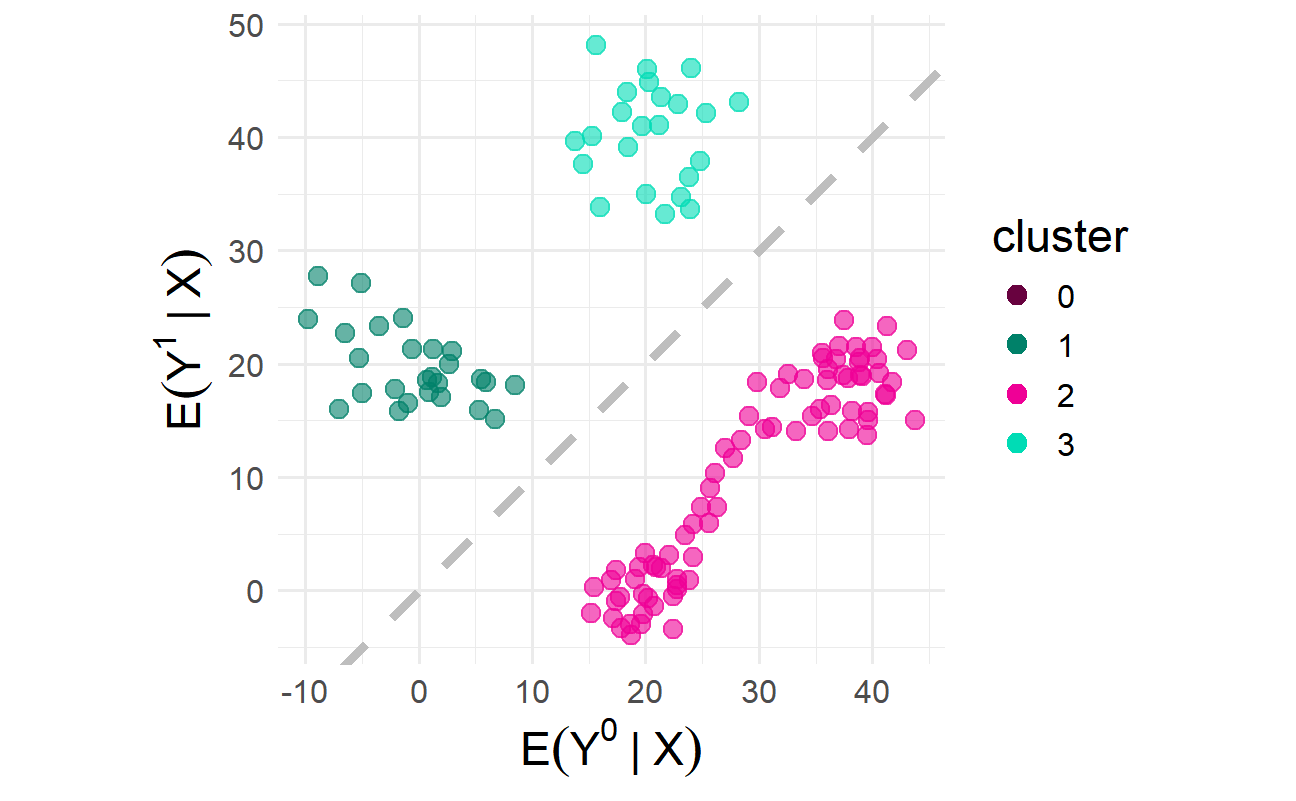}}    
    \caption{Two instances in which the three clustering techniques result in distinct subgroups for the projected sample. The grey dotted diagonal line indicates no treatment effects.}
    \label{fig:motivation-1-2}
\end{figure}

In this work, we extend the work of \citet{kim2024causal} by integrating hierarchical and density-based clustering algorithms into the causal clustering framework. We present plug-in estimators, which are simple and readily implementable using off-the-shelf algorithms. Unlike k-means clustering, which requires the margin condition \citep{kim2024causal}, our proposed estimators do not rely on such strong structural assumptions on the outcome process. We study their rate of convergence, and show that under the minimal regularity conditions, the additional cost of causal clustering is essentially the estimation error of the outcome regression functions. Our findings significantly extend the capabilities of the causal clustering framework, thereby contributing to the progression of methodologies for identifying homogeneous subgroups in treatment response, consequently facilitating more nuanced and targeted interventions. In a broader sense, causal clustering may be construed as a nonparametric approach to clustering involving unknown functions, a domain that has received far less attention than conventional clustering techniques applied to fully observed data, notwithstanding its substantive importance. Therefore, the proposed methods also open up new avenues for clustering with generic pseudo-outcomes that have never been observed, or have been observed only partially.

\section{Framework} \label{sec:framework}
Following \citet{kim2024causal},
we consider a random sample $(Z_{1}, ... , Z_{n})$ of $n$ tuples $Z=(Y,A,X) \sim \Pb$, where $Y \in \R$ represents the outcome, $A \in \mathcal{A}= \{1,...,q\}$ denotes an intervention with finite support, and $X \in \mathcal{X} \subseteq \R^d$ comprises observed covariates. For simplicity, we focus on univariate outcomes, although our methodology can be easily extended to multivariate outcomes.
Throughout, we rely on the following widely-used identification assumptions:

\begin{assumptionp}{C1}[consistency] \label{assumption:A1-consistency}
$Y = Y^a$ if $A=a$. 
\end{assumptionp}
\vspace*{-.03in}
\begin{assumptionp}{C2}[no unmeasured confounding] \label{assumption:A2-no-unmeasured-confounding} 
 $A \ind Y^{a} \mid X$.
\end{assumptionp}
\vspace*{-.03in}
\begin{assumptionp}{C3}[positivity] \label{assumption:A3-positivity}
$\Pb(A=a \mid X)$ is bounded away from 0 a.s. $[\Pb]$.
\end{assumptionp}

For $a \in \mathcal{A}$, let the outcome regression function be denoted by
\begin{align*} 
\mu_a(X) & = \E(Y^a \mid X) =  \E(Y \mid X, A=a).
\end{align*}
Then, the pairwise CATE can be consequently defined as $\tau_{aa'}(X) = \mu_a(X) - \mu_{a'}(X)$ for any pair $a,a' \in \mathcal{A}$. The \textit{conditional counterfactual mean vector} $\mu: \mathcal{X} \to \R^q$ projects a unit characteristic onto a $q$-dimensional Euclidean space spanned by the outcome regression functions $\{\mu_a\}$:
% \begin{definition} [Conditional counterfactual mean vector] \label{def:CCMV}
% We define $\mu$ by % for $X \in \mathcal{X}$,
\begin{align} \label{eqn:CCMV-map}
    \mu(X) = \left[\mu_1(X), \ldots ,\mu_q(X)\right]^\top.
\end{align}

Adjacent units in the above counterfactual mean vector space would have similar responses to a given set of treatments by construction. If all coordinates of a point $\mu(X)$ are identical for a given $X$, it indicates the absence of treatment effects on the conditional mean scale. Hence, conducting cluster analysis on the transformed space by $\mu$ allows for the discovery of subgroups characterized by a high level of within-cluster homogeneity in terms of treatment effects. Crucially, standard clustering theory is not immediately applicable here since the variable to be clustered is $\mu$, a set of the unknown regression functions that must be estimated. We let $\{\widehat{\mu}_a\}$ be some estimators of $\{\mu_a\}$. In Sections \ref{sec:hierarchical-clustering} and \ref{sec:density-clustering}, we analyze the nonparametric \emph{plug-in} estimators for hierarchical and density-based causal clustering, respectively, where we estimate each $\mu_a$ with flexible nonparametric methods and perform clustering based on $\widehat{\mu}=(\widehat{\mu}_1, \ldots ,\widehat{\mu}_q)^\top$. 

It is worth noting that $\mu$ can be easily customized for a specific need through reparametrization, without affecting our subsequent results. For example, it is possible that the difference in regression functions may be more structured and simple than the individual components \citep[e.g.,][]{chernozhukov2018generic, kennedy2020optimal}. In this case, a parametrization such as $\mu =(\mu_2 - \mu_1, \mu_3 - \mu_1, \cdots)$ could render our clustering task easier by allowing us to harness this nontrivial structure (e.g., smoothness or sparsity) \citep[see][Section 2]{kim2024causal}.

\textbf{Notation.}
We use the shorthand $\mu_{(i)} = \mu(X_i)$ and $\widehat{\mu}_{(i)} = \widehat{\mu}(X_i) = \left[\widehat{\mu}_1(X_i), ... , \widehat{\mu}_q(X_i)\right]^\top$. We let $\Vert x \Vert_p$ denote $L_p$ norm for any fixed vector $x$. For a given function $f$ and $r \in \mathbb{N}$, we use the notation
$\Vert f \Vert_{\Pb,r} = \left[\Pb (\vert f \vert^r) \right]^{1/r} = \left[\int \vert f(z)\vert^r d\Pb(z)\right]^{1/r}$
as the $L_{r}(\mathbb{P})$-norm of $f$. We use the shorthand $a_n \lesssim b_n$ to denote $a_n \leq \mathsf{c} b_n$ for some universal constant $\mathsf{c} > 0$. Further, for $x \in \mathbb{R}^q$ and any real number $r>0$, we let $\mathbb{B}(x,r)$ denote an open ball centered at $x$ with radius $r$ with respect to $L_2$ norm, i.e., $\mathbb{B}(x,r) = \{y\in\mathbb{R}^q: \Vert x - y\Vert_2 < r\}$ and use the notation $\overline{\mathbb{B}(x,r)}$ for the closed ball. Lastly, we use the symbol $\equiv$ to denote equivalence relation between two notationally distinct quantities, especially when introducing a simplified notation.

\section{Hierarchical Causal Clustering} \label{sec:hierarchical-clustering}

Hierarchical clustering methods build a set of nested clusters at different resolutions, typically represented by a binary tree or dendrogram. Consequently, they do not necessitate a predetermined number of clusters and allow for the simultaneous exploration of data across multiple granularity levels based on the user's preferred similarity measure. Moreover, hierarchical clustering can be performed even when the data is only accessible via a pairwise similarity function. There are two types of hierarchical clustering: agglomerative and divisive. The agglomerative approach forms a dendrogram from the bottom up, finding similarities between data points and iteratively merging clusters until the entire dataset is unified into a single cluster. The divisive approach employs a top-down strategy, whereby clusters are recursively partitioned until individual data points are reached. Here we only consider the agglomerative approach which is more common in practice \citep{xu2005survey}. We remark that the similar argument in this section may be applicable to the divisive approach as well.

Consider a distance or dissimilarity between points, i.e., $d: \mathbb{R}^q\times\mathbb{R}^q \rightarrow [0,1]$. As in previous studies \citep[e.g.,][]{jain1999data, dasgupta2005performance, eriksson2011active},
we extend $d$ so that we can compute the distance, or \emph{linkage}, between sets of points $S_1(\mu) \equiv S_1$ and $S_2(\mu) \equiv S_2$ in the conditional counterfactual mean vector space as $D(S_1, S_2)$. There are three common distances between sets of points used in hierarchical clustering: letting $N_1$ be the number of points in $S_1$ and similarly for $N_2$, we define the \emph{single}, \emph{average}, and \emph{complete} linkages by $\min_{s_1 \in S_1,s_2 \in S_2} d(s_1,s_2)$, $\frac{1}{N_1N_2}\sum_{s_1 \in S_1,s_2 \in S_2}d(s_1,s_2)$, and $\max_{s_1 \in S_1,s_2 \in S_2} d(s_1,s_2)$, respectively. Single linkage often produces thin clusters while complete linkage is better at spherical clusters. Average linkage is in between. Causal clustering entails estimating the nuisance regression functions $\{\mu_a\}$, which necessitates the following assumption.

\begin{assumptionp}{A1} \label{assumption:A1-sample-splitting-entropy}
    Assume that either (i) $\{\mu_a\}$ and $\{\widehat{\mu}_a\}$ are contained in a Donsker class, or (ii) $\{\widehat{\mu}_a\}$ is constructed from a separate independent sample of same size.
\end{assumptionp}

Assumption \ref{assumption:A1-sample-splitting-entropy} is required essentially because in our estimation procedure, we use the sample twice, once for estimating the nuisance functions $\{\mu_a\}$ and again for determining the clusters. One may use the full sample if we restrict the flexibility and complexity of each $\widehat{\mu}_a$ through the empirical process conditions, as in (i), which may not be satisfied by many modern machine learning tools. In order to accommodate this added complexity from employing flexible machine learning, we can instead use sample splitting \citep[e.g.,][]{chernozhukov2016double, zheng2010asymptotic}, as in (ii). We refer the readers to \citet{kennedy2016semiparametric, Edward18, kennedy2022semiparametric} for more details.

In the following proposition, we give an error bound of computing the set distance with the conditional counterfactual mean vector estimates.

\begin{proposition} \label{lem:hier-link-bound}        
    Let $D$ denote the single, average, or complete linkage between sets of points, induced by the distance function such that
    $
        d(x,y) \lesssim \left\Vert x - y \right\Vert_{1}
    $.
    Then under Assumption \ref{assumption:A1-sample-splitting-entropy}, for any two sets $S_1, S_2$ in $\{\mu_{(i)}\}$ and their estimates $\widehat{S}_1, \widehat{S}_2$ with $\{\widehat{\mu}_{(i)}\}$, 
    \[
%    \left\vert D(S_1, S_2) - D(\widehat{S}_1, \widehat{S}_2) \right\vert \leq 2 \sum_{a\in\mathcal{A}}\left\Vert\widehat{\mu}_a - \mu_a \right\Vert_{\infty}.
    \left\vert D(S_1, S_2) - D(\widehat{S}_1, \widehat{S}_2) \right\vert \lesssim \sum_{a\in\mathcal{A}}\left\Vert\widehat{\mu}_a - \mu_a \right\Vert_{\infty}.
    \]    
\end{proposition}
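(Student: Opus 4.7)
The plan is to prove a deterministic, sample-wise bound: given the realized $L_\infty$ errors $\|\widehat{\mu}_a - \mu_a\|_\infty$, I would derive a uniform pointwise bound on $|d(s,t) - d(\widehat{s},\widehat{t})|$ and then lift it to each of the three linkage rules by routine $\inf$/$\sup$/average manipulations. The first building block is that for every sample index $i$, since $\mu$ is defined coordinatewise,
\[
\|\mu_{(i)} - \widehat{\mu}_{(i)}\|_1 = \sum_{a \in \mathcal{A}} |\mu_a(X_i) - \widehat{\mu}_a(X_i)| \;\leq\; \sum_{a \in \mathcal{A}} \|\widehat{\mu}_a - \mu_a\|_\infty \;=:\; \epsilon_\infty .
\]
Thus every true point $s = \mu_{(i)} \in S_k$ admits a canonical plug-in counterpart $\widehat{s} = \widehat{\mu}_{(i)} \in \widehat{S}_k$ with $\|s - \widehat{s}\|_1 \leq \epsilon_\infty$, and the sets $S_k, \widehat{S}_k$ are in bijective correspondence by the shared sample index.

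Next I would combine this with the triangle inequality for the distance $d$ and the assumed Lipschitz property $d(x,y) \lesssim \|x-y\|_1$ to deduce the pointwise bound
\[
|d(s,t) - d(\widehat{s},\widehat{t})| \;\leq\; d(s,\widehat{s}) + d(t,\widehat{t}) \;\lesssim\; \|s - \widehat{s}\|_1 + \|t - \widehat{t}\|_1 \;\leq\; 2\epsilon_\infty ,
\]
uniformly over pairs. The three linkages then follow by standard arguments. Average linkage is immediate from the bijective pairing and the fact that the absolute value of an average is bounded by the maximum of the absolute values of the summands. For single linkage, if $(s^*,t^*)$ attains the minimum defining $D(S_1,S_2)$ with plug-in counterparts $(\widehat{s}^*,\widehat{t}^*)$, then $D(\widehat{S}_1,\widehat{S}_2) \leq d(\widehat{s}^*,\widehat{t}^*) \leq d(s^*,t^*) + 2\epsilon_\infty = D(S_1,S_2) + 2\epsilon_\infty$; the symmetric choice of minimizer on the estimated side controls the reverse direction. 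Complete linkage is identical with $\max$ in place of $\min$ and the roles of the two arguments reversed in the obvious way.

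The main analytic point, and the only place the hypothesis is really used beyond bookkeeping, is in converting the single-argument Lipschitz bound $d(x,y) \lesssim \|x-y\|_1$ into the two-argument stability $|d(s,t)-d(\widehat{s},\widehat{t})| \lesssim \|s-\widehat{s}\|_1 + \|t-\widehat{t}\|_1$; this is where the triangle inequality for the genuine distance $d$ is invoked. Beyond that, there is no substantive obstacle: the inequality is deterministic in the realized errors and Assumption \ref{assumption:A1-sample-splitting-entropy} is not explicitly invoked in the derivation itself, but is retained as the standing assumption that makes the subsequent stochastic analysis (where $\epsilon_\infty$ is later replaced by its rate of convergence) meaningful.
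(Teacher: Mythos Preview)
Your proposal is correct and follows essentially the same strategy as the paper: first establish the uniform pointwise bound $|d(s,t)-d(\widehat s,\widehat t)|\leq 2\mathsf{C}\,\epsilon_\infty$ for corresponding pairs, then transfer it to each of the three linkages by the obvious $\min$/$\max$/average argument. The only minor difference is in how the pointwise bound is derived: you invoke the metric triangle inequality for $d$ and then apply $d(x,y)\lesssim\|x-y\|_1$ to each perturbation term, whereas the paper applies $d(x,y)\leq\mathsf{C}\|x-y\|_1$ together with the reverse triangle inequality $\|x\|_1-\|y\|_1\leq\|x-y\|_1$; both routes yield the same constant $2\mathsf{C}$.
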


A proof of the above proposition and all subsequent proofs can be found in Appendix. Proposition~\ref{lem:hier-link-bound} suggests that in the agglomerative clustering we shall obtain identical cluster sets beyond a certain level of the dendrogram, where the distance between the closest pair of branches exceeds the outcome regression error. The result applies to a wide range of distance functions in Euclidean space. 

In some problems it might be expensive to compute similarities between all $n$ items to be clustered (i.e., $O(n^2)$ complexity). \citet{eriksson2011active} proposed the hierarchical clustering of $n$ items only based on an adaptively selected small subset of pairwise similarities on the order of $O(n\log n)$. By virtue of Proposition~\ref{lem:hier-link-bound} their algorithm is also applicable to our framework as long as $\widehat{\mu}_a$ is a consistent estimator for $\mu_a$ \citep[see][Theorem 4.1]{eriksson2011active}.

In contrast to k-means clustering, it is not straightforward to analyze the performance of hierarchical clustering with respect to the true target clustering, because we build a set of nested clusters across various resolutions (a hierarchy) such that the target clustering is close to some pruning of that hierarchy. Moreover, conventional linkage-based algorithms may have difficulties in the presence of noise. \citet{balcan2014robust} proposed a novel robust hierarchical clustering algorithm capable of managing these issues. Their algorithm produces a set of clusters that closely approximates the target clustering with a specified error rate even in the presence of noise, and it is adaptable to an inductive setting, where only a small subset of the entire sample is utilized. We shall adapt their algorithm for causal clustering, and analyze the performance of our proposed algorithm.

We consider an inductive setting where we only have access to a small subset of points from a much larger data set. This can be particularly important when running an algorithm over the entire dataset is computationally infeasible. Suppose that $\{C_1, ... , C_k\}$ is the target clustering, and that there exist $N$ samples in total. Assuming we are given a random subset $\mathsf{U}^n$ of size $n$, $n \ll N$, consider a clustering problem $(\mathsf{U}^n,l)$ in the conditional counterfactual mean vector space where each point $\mu \in \mathsf{U}^n$ has a true cluster label $l(\mu) \in \{C_1, ... , C_k\}$. Further we let $C(\mu)$ denote a cluster corresponding to the label $l(\mu)$, and $n_{C(\mu)}$ denote the size of the cluster $C(\mu)$. To proceed, we define the following \emph{good-neighborhood} property to quantify the level of noisiness in our population distribution.

\begin{definition} [$(\alpha, \nu)$-Good Neighborhood Property for Distribution] \label{def::good-ngh-distribution}
For a fixed $\mu' \in \R^q$, let $\mathbb{C}(\mu') = \{\mu : C(\mu) = C(\mu')\}$, i.e., a set whose label is equal to $C(\mu')$, and $r_{\mu'} = \underset{r}{\inf} \{r: \Pb[\mu \in \mathbb{B}(\mu',r)] \equiv \Pb [\mathbb{C}(\mu')] \}$.
The distribution $\Pb_{\alpha,\nu}$ satisfies $(\alpha, \nu)$-good neighborhood property if $\Pb_{\alpha,\nu} = (1-\nu)\Pb_\alpha + \nu\Pb_{\text{noise}}$ where $\Pb_\alpha$ is a probability distribution that satisfies
\[
    \Pb_{\alpha} \{\mu \in \mathbb{B}(\mu',r_{\mu'}) \setminus \mathbb{C}(\mu') \} \leq \alpha,
\]
and $\Pb_{\text{noise}}$ is a valid probability distribution. 	
\end{definition}

The good-neighborhood property in Definition \ref{def::good-ngh-distribution} is a distributional generalization of both the $\nu$-strict separation and the $\alpha$-good neighborhood property from \citet{balcan2008theory, balcan2014robust}. $\alpha, \nu$ can be viewed as noise parameters indicating the proportion of data susceptible to erroneous behavior. Next, we assume the following mild boundedness conditions on the population distribution and outcome regression function.

\begin{assumptionp}{A2} \label{assumption:A4-boundedness}
    $\Vert \mu \Vert_2 , \Vert \widehat{\mu} \Vert_2 \leq B$ for some finite constant $B$ a.s. $[\Pb]$.
\end{assumptionp}

\begin{assumptionp}{A3} \label{assumption:A5-bounded-density}
$\Pb_{\alpha,\nu}$ in Definition \ref{def::good-ngh-distribution} has a bounded Lebesgue density.
\end{assumptionp}

In the next theorem, we analyze the inductive version of the robust hierarchical clustering \citep[][Algorithm 2]{balcan2014robust} in the causal clustering framework. We prove that when the data satisfies the good neighborhood properties, the algorithm achieves small error on the entire data set, requiring only a small random sample
whose size is independent of that of the entire data set. 

\begin{theorem}
\label{thm:robust-hier}
Suppose that $\mathsf{U}^N$ consists of $N$ i.i.d samples from $\Pb_{\alpha,\nu}$ that satisfies the $(\alpha, \nu)$-good neighborhood property in Definition \ref{def::good-ngh-distribution}. For $n \ll N$, consider a random subset $\mathsf{U}^n = \{\mu_{(1)}, \ldots, \mu_{(n)}\} \subset \mathsf{U}^N$ and its estimates $\widehat{\mathsf{U}}^n=\{\widehat{\mu}_{(1)}, \ldots, \widehat{\mu}_{(n)}\}$ in which clustering to be performed. Let $\gamma=\sum_{a\in\mathcal{A}}\left\Vert \widehat{\mu}_{a}-\mu_{a}\right\Vert _{\infty}$, and for any 
% $\delta_N, \delta_{n} \in (0,1)$,
$\delta_N \in (0,1)$,
define
% \[
% \alpha' = \alpha + O\left(\sqrt{\frac{1}{N}\log\frac{1}{\delta_N}}\right), \quad \nu' = \nu + O\left(\sqrt{\frac{1}{N}\log\frac{1}{\delta_N}}\right), \quad \varepsilon=O\left(\gamma+\frac{1}{n}\log(\frac{1}{\delta_{n}})\right).
% \]
\begin{align*}
& \alpha' = \alpha + O\left(\sqrt{\frac{1}{N}\log\frac{1}{\delta_N}}\right), \quad \nu' = \nu + O\left(\sqrt{\frac{1}{N}\log\frac{1}{\delta_N}}\right),\\
& \varepsilon=O\left(\gamma+\frac{1}{N}\log\left(\frac{1}{\delta_{N}}\right)+\sqrt{\frac{\gamma}{N}\log\left(\frac{1}{\gamma}\right)}\right).
\end{align*}
Then under Assumptions \ref{assumption:A1-sample-splitting-entropy},\ref{assumption:A4-boundedness}, and \ref{assumption:A5-bounded-density}, as long as the smallest target cluster has size greater than $12(\nu'+\alpha' + \varepsilon)N$, the inductive robust hierarchical clustering \citep[][]{balcan2014robust} on $\widehat{\mathsf{U}}^n$ with $n=\Theta\left(\frac{1}{\min(\alpha'+\varepsilon,\nu')}\log\frac{1}{\delta\min(\alpha'+\varepsilon,\nu')}\right)$ produces a hierarchy with a pruning that has error at most $\nu'+\delta$ with respect to the true target clustering with probability at
least 
% $1-\delta-\delta_N-\delta_n$.
$1-\delta-2\delta_N$.
\end{theorem}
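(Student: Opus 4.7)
The plan is to reduce Theorem~\ref{thm:robust-hier} to the inductive guarantee of \citet{balcan2014robust} (their Algorithm~2) applied to $\widehat{\mathsf{U}}^N$. The bulk of the work is showing that, on an event of probability at least $1-2\delta_N$, the labeled sample $\widehat{\mathsf{U}}^N$ (with labels inherited from $l$ on the non-noise component) satisfies the discrete $(\alpha'+\varepsilon,\nu')$-good neighborhood property that their analysis requires. Granting this, their inductive robust linkage algorithm, run on a random sub-sample of size $n=\Theta\bigl((\min(\alpha'+\varepsilon,\nu'))^{-1}\log(1/(\delta\min(\alpha'+\varepsilon,\nu')))\bigr)$, produces with conditional probability at least $1-\delta$ a hierarchy whose best pruning has error at most $\nu'+\delta$ on $\widehat{\mathsf{U}}^N$; a union bound with the two $\delta_N$ concentration events yields the stated $1-\delta-2\delta_N$.

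The first bridge converts the population $(\alpha,\nu)$ property into a sample property on $\mathsf{U}^N$ at $(\alpha',\nu')$, still in the $\mu$-geometry. Since the class of Euclidean balls in $\R^q$ has finite VC dimension (of order $q+1$), standard empirical-process arguments give, with probability at least $1-\delta_N$,
\[
    \sup_{x\in\R^q,\,r\geq0}\bigl\vert \Pb_N[\mathbb{B}(x,r)]-\Pb_{\alpha,\nu}[\mathbb{B}(x,r)]\bigr\vert\;\lesssim\;\sqrt{\tfrac{1}{N}\log\tfrac{1}{\delta_N}}.
\]
Applying this uniform bound to the balls $\mathbb{B}(\mu_{(i)},r_{\mu_{(i)}})$ of Definition~\ref{def::good-ngh-distribution} inflates the per-point ``bad-neighbor'' count by $O(N\sqrt{N^{-1}\log(1/\delta_N)})$ and inflates the noise fraction analogously, giving exactly the parameters $\alpha'$ and $\nu'$ stated in the theorem.

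The second bridge, which is the main technical step, transfers this property from $\mathsf{U}^N$ to $\widehat{\mathsf{U}}^N$. By Assumption~\ref{assumption:A1-sample-splitting-entropy} the perturbation $\Delta_i\equiv\widehat{\mu}_{(i)}-\mu_{(i)}$ satisfies $\Vert\Delta_i\Vert_2\leq\Vert\Delta_i\Vert_1\leq\gamma$ uniformly, so for any ball in the $\widehat{\mu}$-geometry the symmetric difference with the corresponding ball in the $\mu$-geometry is contained in a shell of radial width $2\gamma$. Assumption~\ref{assumption:A5-bounded-density} bounds the population mass of any such shell by $O(\gamma)$, and Assumption~\ref{assumption:A4-boundedness} confines the centers to a bounded region so that the class of width-$2\gamma$ shells admits an $O(1/\gamma)$-covering. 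A uniform Bernstein inequality over this cover then gives, on an additional event of probability at least $1-\delta_N$, an empirical shell mass bounded by
\[
    O\Bigl(\gamma+\tfrac{1}{N}\log\tfrac{1}{\delta_N}+\sqrt{\tfrac{\gamma}{N}\log\tfrac{1}{\gamma}}\Bigr)\;=\;\varepsilon,
\]
where the three terms correspond, respectively, to the population shell mass, the Bernstein $M/N$ correction, and the $\sqrt{\sigma^2 V/N}$ fluctuation with log-entropy $V=O(\log(1/\gamma))$. Adding $\varepsilon$ to $\alpha'$ converts the $(\alpha',\nu')$-good neighborhood property on $\mathsf{U}^N$ into the discrete $(\alpha'+\varepsilon,\nu')$-good neighborhood property on $\widehat{\mathsf{U}}^N$.

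With this property in hand, the hypothesis that the smallest target cluster exceeds $12(\nu'+\alpha'+\varepsilon)N$ is exactly the separation condition of \citet{balcan2014robust} at noise level $(\alpha'+\varepsilon,\nu')$, and their Algorithm~2 delivers the claimed pruning guarantee with conditional probability at least $1-\delta$; a final union bound across the two $\delta_N$ concentration events and the $\delta$ event of their theorem completes the argument. The main obstacle is the second bridge: one must carefully track how the $\gamma$-level estimation error reshuffles neighborhood memberships and verify that the resulting three-term bound matches the exact form of $\varepsilon$ stated in the theorem, with the $\sqrt{\gamma N^{-1}\log(1/\gamma)}$ term arising from the Bernstein concentration over the $O(1/\gamma)$-covering of shells.
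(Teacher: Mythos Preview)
Your proposal is correct and follows essentially the same two-bridge architecture as the paper: first pass from the population $(\alpha,\nu)$ property to a sample $(\alpha',\nu')$ property on $\mathsf{U}^N$, then absorb the $\gamma$-perturbation into an additive $\varepsilon$ by bounding the empirical mass of radial shells, and finally invoke the inductive guarantee of \citet{balcan2014robust}. The only technical divergence is in how the shell concentration is obtained: the paper bounds the VC dimension of the full annulus class $\{\mathbb{B}(w,r+s)\setminus\mathbb{B}(w,r)\}$ (as intersections of a ball and a ball-complement) and applies a Talagrand-type inequality, whereas you sketch a covering-plus-Bernstein argument; both yield the same $O\bigl(\gamma+N^{-1}\log(1/\delta_N)+\sqrt{\gamma N^{-1}\log(1/\gamma)}\bigr)$, though your stated covering size ``$O(1/\gamma)$'' should really be polynomial in $1/\gamma$ (the log-entropy conclusion $V=O(\log(1/\gamma))$ is unaffected).
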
	

The main implication of Theorem \ref{thm:robust-hier} is that, in essence, the natural misclassification error $\alpha$ from the $\alpha$-good neighborhood property has increased by $O_\Pb(\sum_{a\in\mathcal{A}}\left\Vert \widehat{\mu}_{a}-\mu_{a}\right\Vert _{\infty})$ due to the costs associated with causal clustering. 
%\footnote{This is more clear when we compare the result of Theorem \ref{thm:robust-hier} with \citet[][Theorem 11]{balcan2014robust}}.

\section{Density-based Causal Clustering} \label{sec:density-clustering}

The idea of density-based clustering was initially proposed as an effective algorithm for clustering large-scale, noisy datasets \citep{ester1996density, hinneburg1998efficient}. The density-based methods work by identifying areas of high point concentration as well as regions of relative sparsity or emptiness. It offers distinct advantages over other clustering techniques due to their adeptness in handling noise and capability to find clusters of arbitrary sizes and shapes. Further, it does not require a-priori specification of number of clusters. Here, we focus on the level-set approach \citep[see][and the references therein]{rinaldo2010generalized}. 
	
With a slight abuse of notation, we let $P$ be the probability distribution of $\mu$ to distinguish it from the observational distribution $\Pb$, and $p$ be the corresponding Lebesgue density. We also let $K$ denote a valid \emph{kernel}, i.e., a nonnegative function satisfying $\int K(u)du = 1$. We construct the oracle kernel density estimator $\widetilde{p}_{h}$ with the bandwidth $h>0$ as 
\[
\widetilde{p}_{h}(\mu')=\frac{1}{n}\sum_{i=1}^{n}\frac{1}{h^{q}}K\left(\frac{ \mu_{(i)} - \mu' }{h}\right),
\]
for $\forall \mu' \in \mathbb{R}^q$. Then we define an average oracle kernel density estimator by $\E(\widetilde{p_{h}}) \equiv p_{h}$ and the corresponding upper level set by $L_{h,t} = \{\mu : p_{h}(\mu) > t \}$. Suppose that for each $t$, $L_{h,t}$ can be decomposed into finitely many disjoint sets: $L_{h,t} = C_1 \cup \cdots \cup C_{l_t}$. Then $\mathcal{C}_t = \{C_1, ... , C_{l_t}\}$ is the \textit{level set clusters} of our interest at level $t$.

With regard to the analysis of topological properties of the distribution $P$, the upper level set of $p_{h}$ plays a role akin to that of the upper level set of the true density $p$, yet it presents various advantages, as indicated in previous studies \citep[e.g.,][]{wasserman2006all,rinaldo2010generalized,FasyLRWBS2014, KimSRW2019}; $p_h$ is well-defined even when $p$ is not, $p_{h}$ provides simplified topological information, and the convergence rate of the kernel density estimator with respect to $p_{h}$ is faster than with $p$. For such reasons, we typically target the level set $L_{h,t}$ induced from $p_h$ in lieu of that from $p$ \citep[see, e.g.,][Section 2]{kim2018causal}.

When each $\mu_{(i)}$ is known (or has it been observed), the level sets could be estimated by computing $\widetilde{L}_{h,t} = \{\mu : \widetilde{p}_h(\mu) > t \}$. Specifically, for each $t$ we let $\widetilde{\mathcal{W}}_t = \{\mu : \widetilde{p}_h(\mu) > t \}$, and construct a graph $G_t$ where each $\mu_{(i)} \in \widetilde{\mathcal{W}}_t$ is a vertex and there is an edge between $\mu_{(i)}$ and $\mu_{(j)}$ if and only if $\Vert \mu_{(i)} - \mu_{(j)} \Vert_2 \leq h$. Then the clusters at level $t$ are estimated by taking the connected components of the graph $G_t$, which is referred to as a \emph{Rips graph}. \emph{Persistent homology} measures how the topology of $R_t$ varies by the value of $t$. See, for example, \citet{FasyLRWBS2014, kent2013debacl, bobrowski2017topological} more information on the algorithm and its theoretical features.

However, in our causal clustering framework, the oracle kernel density estimator $\widetilde{p}_{h}$ is not computable since we do not observe each $\mu_{(i)}$. Thus we construct a plug-in version of the kernel density estimator:
\[
%\widehat{p}_{h}(\mu') = \frac{1}{n}\sum_{i=1}^n \frac{1}{c_\kappa h^q}K\left(\frac{ \widehat{\mu}_{(i)} - \mu' }{h}\right)
\widehat{p}_{h}(\mu') = \frac{1}{n}\sum_{i=1}^n \frac{1}{h^q}K\left(\frac{ \widehat{\mu}_{(i)} - \mu' }{h}\right),
\]
with estimates $\{\widehat{\mu}_{(i)}\}$.
%and a normalizing constant $c_\kappa$.
Then we target the corresponding level set $\widehat{L}_{h,t} = \{\mu : \widehat{p}_h(\mu) > t \}$. To account for the added complications in estimating $\widehat{L}_{h,t}$, we introduce the following regularity conditions on the kernel $K$, along with the bounded-density condition from Assumption \ref{assumption:A5-bounded-density} on the distribution $P$.

\begin{assumptionp}{A3$'$} \label{assumption:A5'-bounded-density-2}
    $p$ is bounded a.s. $[P]$.
\end{assumptionp}

\begin{assumptionp}{A4} \label{assumption:A6-lipschitz-kernel}
    The kernel function $K$ has a support on $\overline{\mathbb{B}(0,1)}$. Moreover, it is Lipschitz continuous with constant $M_{K}$, i.e., for all $x,y\in\mathbb{R}^{q}$, $\left|K(x)-K(y)\right|\leq M_{K}\left\Vert x-y\right\Vert _{2}$.
\end{assumptionp}

The Hausdorff distance is a common way of measuring difference between two sets that are embedded in the same space. In what follows, we define the Hausdorff distance for any two subsets in Euclidean space.

\begin{definition}[Hausdorff Distance]
    
    Consider sets $S_1,S_2 \subset\mathbb{R}^{q}$. We define the Hausdorff distance $H(S_1,S_2)$
    as 
    \[
    H(S_1,S_2)=\max\left\{ \sup_{x\in S_1}\inf_{y\in S_2}\left\Vert x-y\right\Vert_2 ,\sup_{y\in S_2}\inf_{x\in S_1}\left\Vert x-y\right\Vert_2 \right\}.
    \]
    
\end{definition}

Note that the Hausdorff distance can be equivalently defined as 
\[
H(S_1,S_2)=\inf\left\{ \epsilon\geq0:\,S_1\subset S_{2,\epsilon}\text{ and } S_2 \subset S_{1,\epsilon}\right\} ,
\]
where for $i=1,2,$
%\[
$
S_{i,\epsilon}:=\{y \in\mathbb{R}^{q}:\,\text{there exists }x\in S_i \text{ with }\left\Vert x-y\right\Vert_2 \leq\epsilon\}.
$
%\]

To estimate the target level set $L_{t,h}=\{p_{h}>t\}$ using the
estimator $\widehat{L}_{t,h}=\{\widehat{p}_{h}>t\}$, we normally assume that the
function difference $\left\Vert \widehat{p}_{h}-p_{h}\right\Vert _{\infty}$
is small. To apply this condition to the set difference $H(L_{t,h},\widehat{L}_{t,h})$, we have to ensure that the target level set $L_{t,h}$ does not change drastically when the level $t$ perturbs. We formalize this notion as follows.

\begin{definition}[Level Set Stability] \label{def:level-set-stability}
    
    We say that the level set $L_{t,h}=\{w\in\mathbb{R}^{q}:\,p_{h}(w)>t\}$
    is stable if there exists $a>0$ and $C>0$ such that, for all $\delta<a$,
    \[
    H(L_{t-\delta,h},L_{t+\delta,h})\leq C\delta.
    \]
    
\end{definition}

The next theorem shows provided that the target level set $L_{h,t}$ is stable in the sense of Definition \ref{def:level-set-stability}, our level set estimator $\widehat{L}_{h,t}$ is close to the target level set $L_{h,t}$ in the Hausdorff distance.

\begin{theorem} \label{thm:density-stability_OP}
    Suppose that $L_{h,t}$ is stable and let $H(\cdot,\cdot)$ be the Hausdorff distance between two sets. Let the bandwidth $h$ vary with $n$ such that $\{h_{n}\}_{n\in\mathbb{N}}\subset(0,h_{0})$ and
    \[
    \limsup_{n}\frac{(\log(1/h_{n}))_{+}}{nh_{n}^{q}}<\infty.
    \]
%    $\limsup_{n}(\log(1/h_{n}))_{+}/(nh_{n}^{q})<\infty$.
    Then, under Assumptions \ref{assumption:A1-sample-splitting-entropy}, \ref{assumption:A4-boundedness}, \ref{assumption:A5'-bounded-density-2}, and \ref{assumption:A6-lipschitz-kernel}, we have that with probability at least $1-\delta$,
    % 		\begin{align*}
    %         & H(\widehat{L}_{h,t},L_{h,t}) =O_{P}\left(\sqrt{\frac{(\log(1/h_{n}))_{+}}{nh_{n}^{2}}}\right.\\
    %          & \qquad\left.+\frac{1}{h_{n}^{3}}\min\left\{ \sum_{a}\left\Vert \widehat{\mu_{a}}-\mu_{a}\right\Vert _{1},\,h_{n}\right\} \right)
    %         \end{align*}
    \begin{align*}
    & H(\widehat{L}_{h,t},L_{h,t}) \lesssim \sqrt{\frac{(\log(1/h_n))_{+}+\log(2/\delta)}{nh_n^{q}}}+\frac{1}{h_n^{q+1}}\min\left\{ \sum_{a}\left\Vert \widehat{\mu}_{a}-\mu_{a}\right\Vert _{1}+\sqrt{\frac{\log(2/\delta)}{n}},\,h_n\right\}. 
    \end{align*}
\end{theorem}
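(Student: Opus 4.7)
The plan is to bound $H(\widehat{L}_{h,t}, L_{h,t})$ by the sup-norm density error $\|\widehat{p}_h - p_h\|_\infty$ using the stability hypothesis, and then decompose that error through the oracle kernel density estimator $\widetilde{p}_h$:
$$
\|\widehat{p}_h - p_h\|_\infty \;\leq\; \|\widehat{p}_h - \widetilde{p}_h\|_\infty + \|\widetilde{p}_h - p_h\|_\infty,
$$
controlling each piece on an event of probability at least $1-\delta/2$ and taking a union bound. The stability reduction is elementary: if $\|\widehat{p}_h - p_h\|_\infty \leq \eta$ with $\eta$ smaller than the threshold $a$ of Definition \ref{def:level-set-stability}, the sandwich $L_{h,t+\eta}\subset \widehat{L}_{h,t}\subset L_{h,t-\eta}$ combined with $H(L_{h,t-\eta},L_{h,t+\eta})\leq C\eta$ gives $H(\widehat{L}_{h,t},L_{h,t})\leq C\eta$.

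For the oracle piece $\|\widetilde{p}_h - p_h\|_\infty$, I will apply a standard uniform concentration result for kernel density estimators of Talagrand / Einmahl--Mason type. Under Assumptions \ref{assumption:A5'-bounded-density-2} and \ref{assumption:A6-lipschitz-kernel}, the class $\{K((\cdot-\mu')/h):\mu'\in\mathbb{R}^q\}$ is a uniformly bounded VC-subgraph family with envelope $\|K\|_\infty$ and $L_2(P)$ radius of order $h^{q/2}$, and under the bandwidth hypothesis $\limsup_n (\log(1/h_n))_+/(nh_n^q)<\infty$ one obtains, with probability at least $1-\delta/2$,
$$
\|\widetilde{p}_h - p_h\|_\infty \;\lesssim\; \sqrt{\frac{(\log(1/h_n))_+ + \log(2/\delta)}{nh_n^q}}.
$$

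For the plug-in piece $\|\widehat{p}_h - \widetilde{p}_h\|_\infty$, the Lipschitzness of $K$ together with the fact that both $K$-evaluations vanish outside a bounded set gives, pointwise in $i$ and $\mu'$,
$$
\left|K\!\left(\tfrac{\widehat{\mu}_{(i)}-\mu'}{h}\right) - K\!\left(\tfrac{\mu_{(i)}-\mu'}{h}\right)\right| \;\lesssim\; \min\!\left(\tfrac{\|\widehat{\mu}_{(i)}-\mu_{(i)}\|_2}{h},\,1\right).
$$
Averaging over $i$, using $\|\cdot\|_2 \leq \|\cdot\|_1$ on $\mathbb{R}^q$, and the elementary inequality $\tfrac{1}{n}\sum_i \min(a_i,b)\leq \min\bigl(\tfrac{1}{n}\sum_i a_i,\,b\bigr)$, produces
$$
\|\widehat{p}_h - \widetilde{p}_h\|_\infty \;\lesssim\; \frac{1}{h_n^{q+1}}\min\!\left(\sum_{a\in\mathcal{A}}\tfrac{1}{n}\sum_{i=1}^n |\widehat{\mu}_a(X_i)-\mu_a(X_i)|,\;h_n\right).
$$
Under Assumption \ref{assumption:A1-sample-splitting-entropy}, I can either condition on the independent nuisance sample (sample-split branch) or invoke Donsker equicontinuity (empirical-process branch); in either case, Assumption \ref{assumption:A4-boundedness} makes each $|\widehat{\mu}_a(X_i)-\mu_a(X_i)|$ uniformly bounded, and Hoeffding's inequality replaces the empirical $L_1$ averages by $\|\widehat{\mu}_a-\mu_a\|_1$ at the cost of an additive $\sqrt{\log(2/\delta)/n}$ remainder, on an event of probability at least $1-\delta/2$. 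Combining the two bounds by a union bound and applying the stability reduction yields the theorem. The principal obstacle is securing the sharp $(\log(1/h_n))_+$ factor in the oracle step: a generic Hoeffding or McDiarmid argument only gives the weaker $\sqrt{\log n/(nh_n^q)}$ scaling, so a careful VC / bracketing argument for the shifted-kernel class is essential; the remaining pieces are largely bookkeeping once the Lipschitz-versus-bounded dichotomy has been used to extract the $\min$ structure in the plug-in error.
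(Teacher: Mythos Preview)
Your proposal is correct and follows essentially the same route as the paper: the paper also reduces to $\|\widehat p_{h}-p_{h}\|_\infty$ via the stability sandwich, splits through the oracle estimator $\widetilde p_{h}$, bounds the oracle term by the Talagrand-type inequality of \citet[Corollary~13]{KimSRW2019}, and bounds the plug-in term by combining the Lipschitz/boundedness dichotomy for $K$ with Hoeffding (their Lemma~\ref{lem:xhatprime_prime}). Your identification of the sharp $(\log(1/h_n))_+$ factor as the one nontrivial point is exactly right; everything else is the same bookkeeping.
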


The above theorem ensures that the estimated level sets in the causal clustering framework do not significantly deviate from $L_{h,t}$, provided that the error of $\widehat{\mu}_{a}$ remains small. As a consequence, we show that causal clustering may also be accomplished via level-set density-based clustering, albeit at the expense of estimating the nuisance regression functions for the outcome process. The bandwidth $h$ may be selected either by minimizing the error bounds derived from Theorem \ref{thm:density-stability_OP} or by employing data-driven methodologies \citep[e.g.,][Remark 5.1]{kim2018causal}.

\section{Empirical Analyses}

\subsection{Simulation Study} \label{sec:simulation-study}

\begin{figure}[t!]
    \centering
    \subfigure[$\nu=0.01$]{\includegraphics[width=0.24\textwidth]{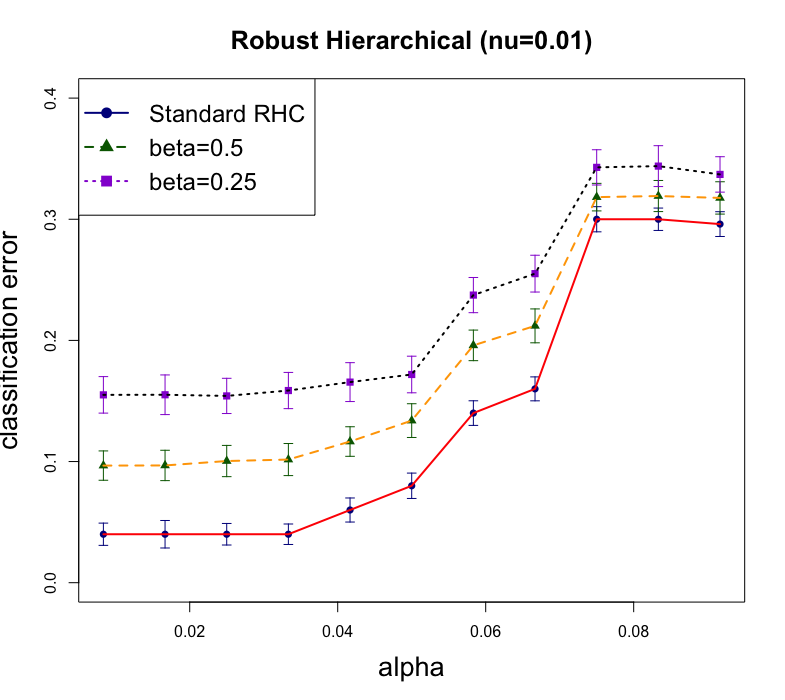}}
    \hfill%
    \subfigure[$\nu=0.05$]{\includegraphics[width=0.24\textwidth]{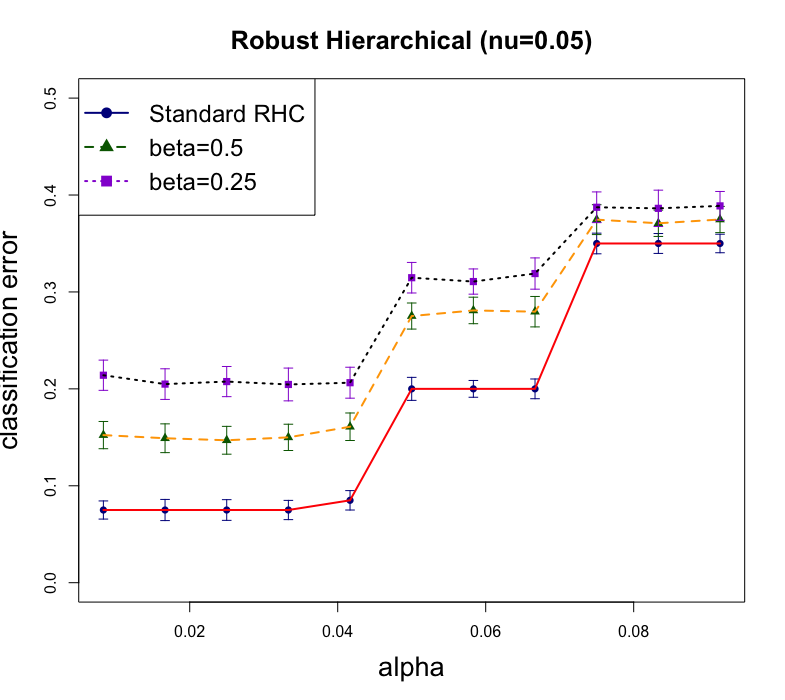}}
    \hfill%
    \subfigure[$t=0.05$]{\includegraphics[width=0.24\textwidth]{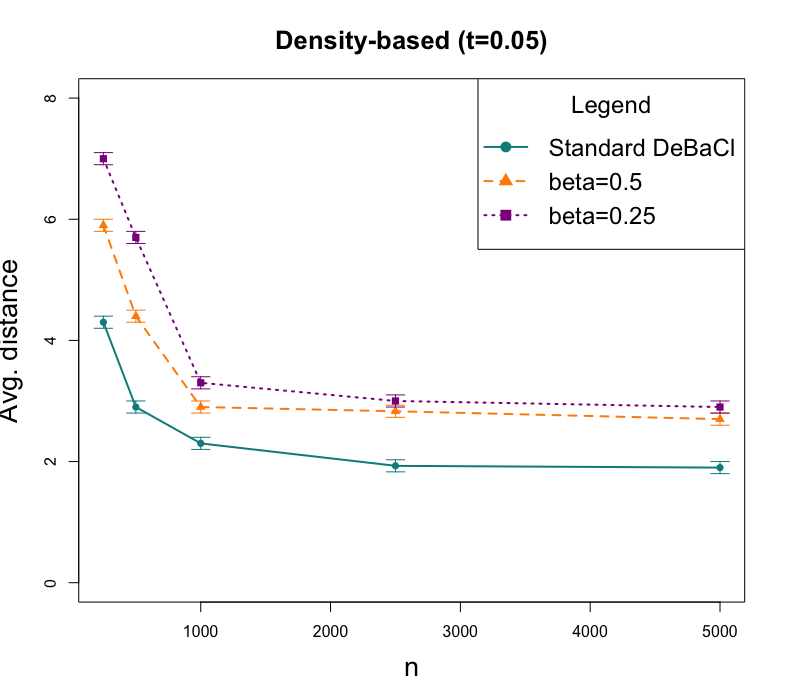}}
    \hfill%
    \subfigure[$t=0.1$]{\includegraphics[width=0.24\textwidth]{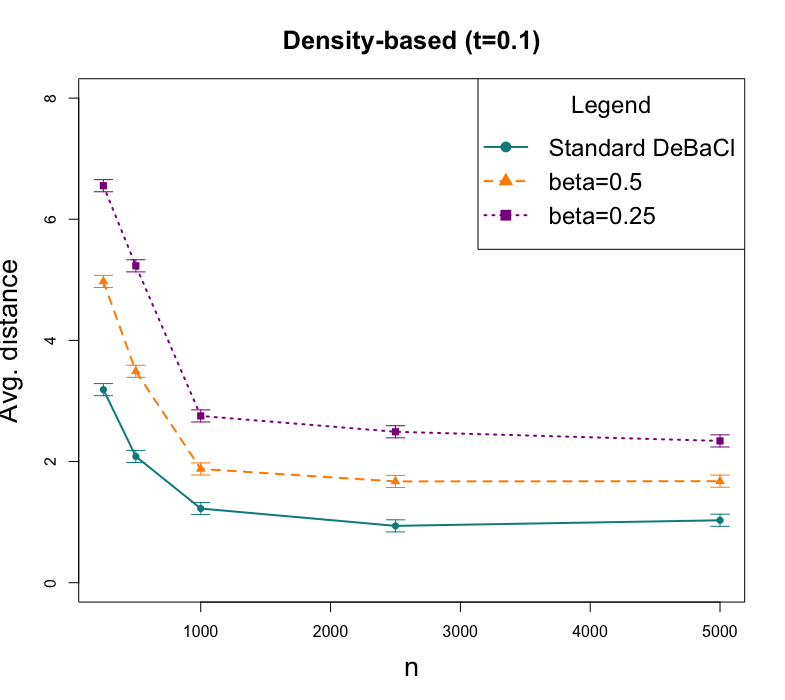}}    
    \caption{(a), (b): The y-axis represents classification error from hierarchical (causal) clustering, where we fix $\nu=0.01, 0.1$ and vary $\alpha$. (c), (d): The y-axis represents the average of $H(\widehat{L}_{h,t},L_{h,t})$ from density-based (causal) clustering, where we fix $t=0.05, 0.1$ and vary $n$.}
    \label{fig:sim}
\end{figure}

\begin{figure}[t!]
\centering
\subfigure[]{\includegraphics[width=0.3\textwidth]{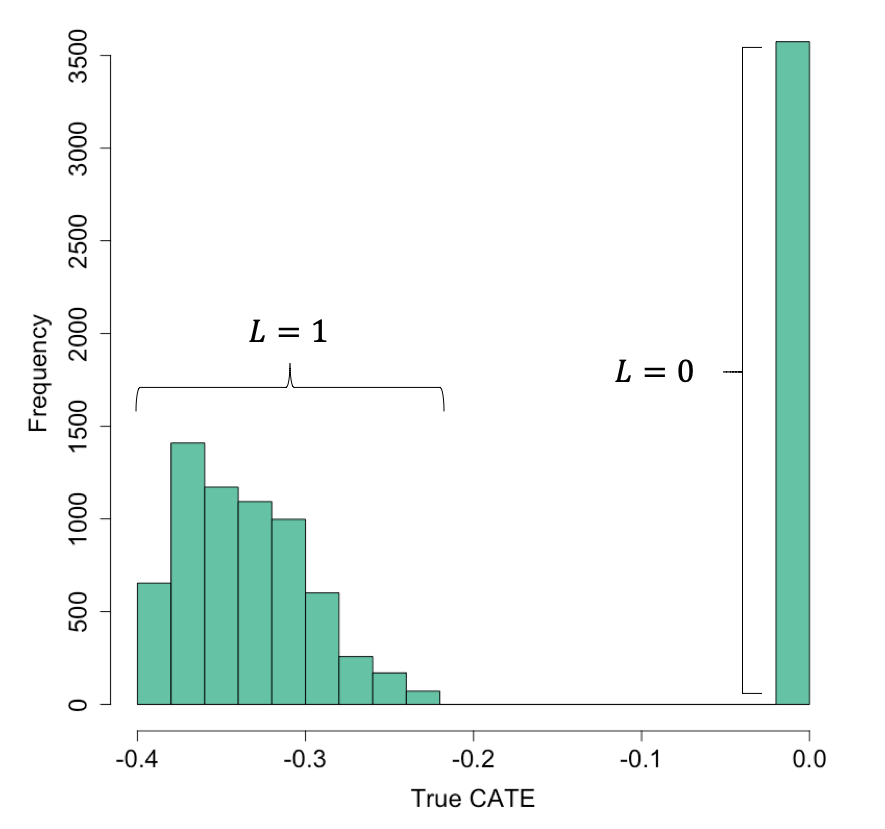}} 
\hfill%
\subfigure[]{\includegraphics[width=0.325\textwidth]{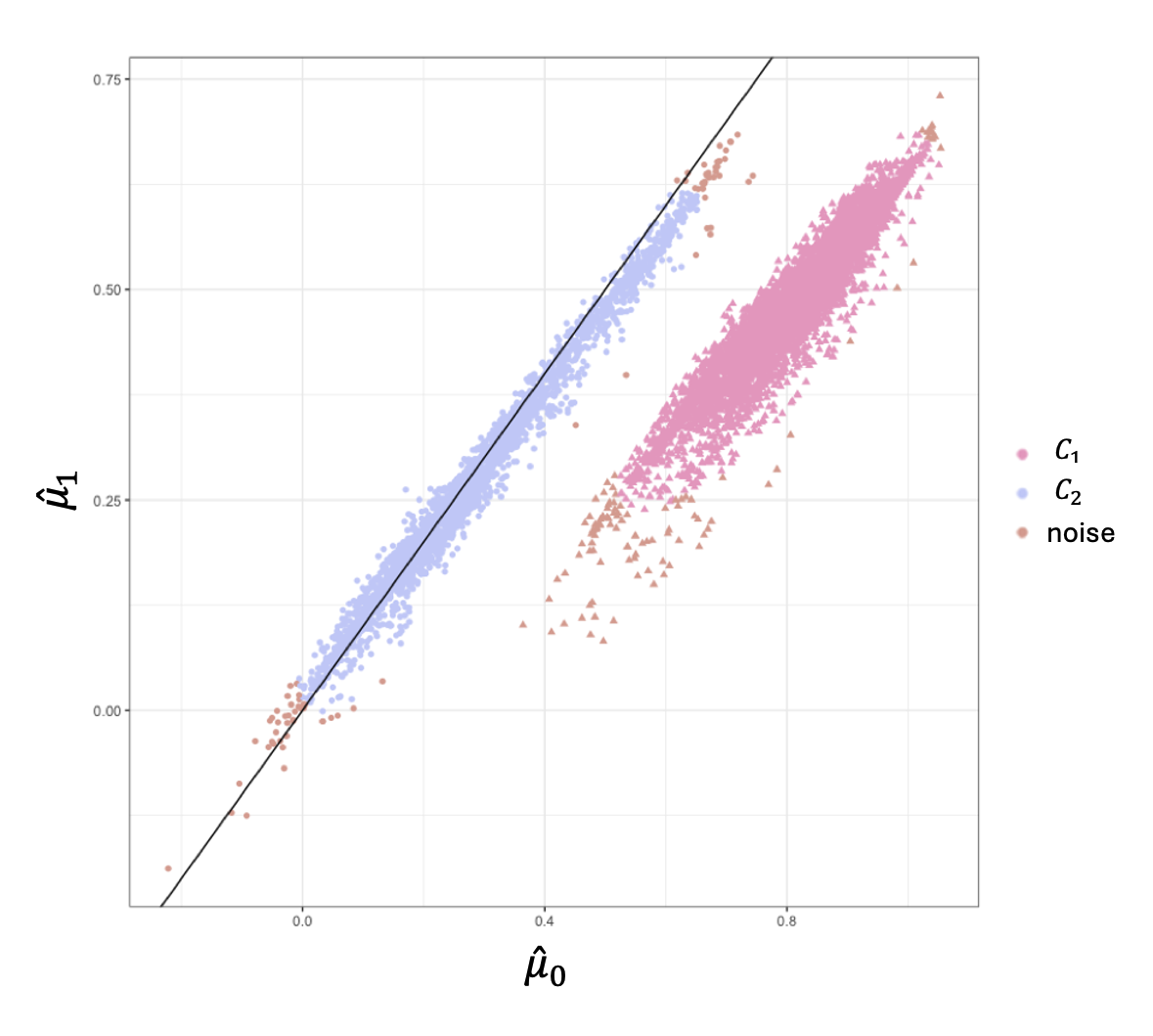}}
\hfill%
\subfigure[]{\includegraphics[width=0.3\textwidth]{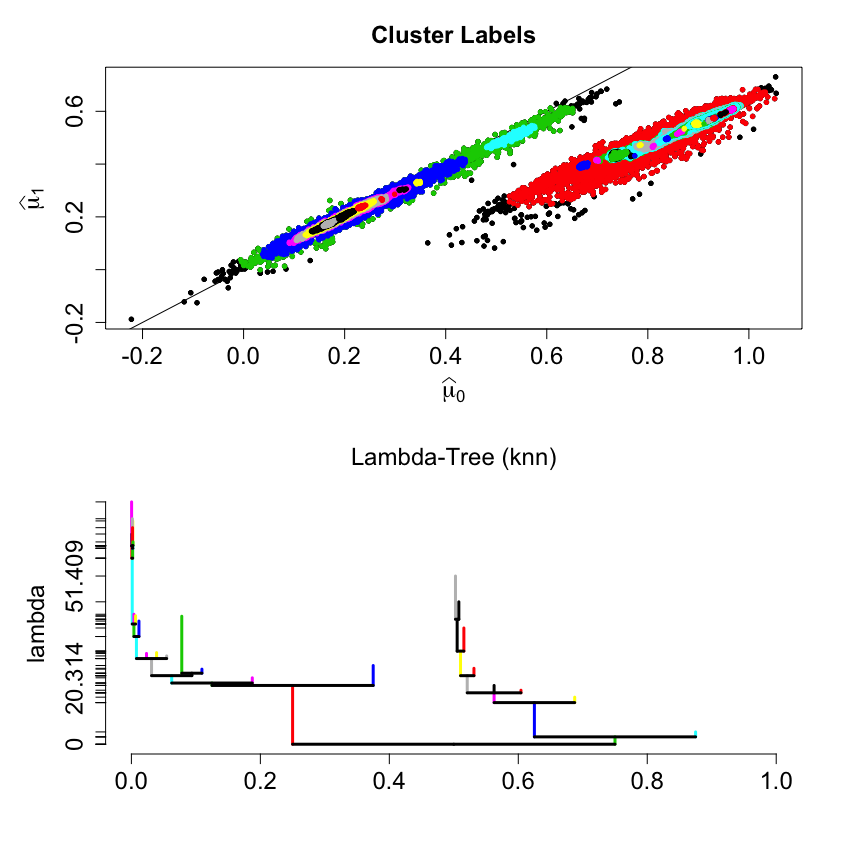}}
\hfill%
\caption{(a) Histogram of the true CATE in the test set. In the original study \citep{nie2017quasi}, individuals with zero treatment effects are assigned to the label $L=0$. (b) The result of density-based causal clustering. Units in Cluster C1 appear to have higher baseline risk ($\mu_0$). (c) We observe that points in Clusters C1 and C2 are more concentrated around the right upper area (larger $\mu_0, \mu_1$) and the lower left area (smaller $\mu_0, \mu_1$), respectively.
}
\label{fig:voting-study}
\end{figure}

Here, we explore finite-sample properties of our proposed plug-in procedures via simulation. In particular, we investigate the effect of nuisance estimation on the performance of causal clustering to empirically validate our theoretical findings in Sections \ref{sec:hierarchical-clustering} and \ref{sec:density-clustering}.

For hierarchical causal clustering, we use the simulation setup akin to that of \citet{kim2024causal}. Letting $n=2500$, we randomly pick $10$ points in a bounded hypercube $[0,1]^3$: $\{c^*_1, ... ,c^*_{10}\}$, and assign roughly $n/10$ points following truncated normal distribution to each Voronoi cell associated with $c^*_j$; these are our $\{\mu_{(i)}\}$. Next, we let $\widehat{\mu}_a = \mu_a + \xi$ with $\xi \sim N(0,n^{-\beta})$, which ensures that $\Vert \widehat{\mu}_a - \mu_a\Vert = O_\Pb(n^{-\beta})$. Following \citet{balcan2014robust}, by repeating simulations $100$ times, we compute classification error as a proxy of the clustering performance using different values of parameter $\alpha$ fixing the value of $\nu$ and $\beta$. The results are presented in Figure \ref{fig:sim} (a) \& (b) with standard deviation error bars. The simulation result supports our finding in Theorem \ref{thm:robust-hier}, indicating that the price we pay for the proposed hierarchical causal clustering is inflated $\alpha$ due to the nuisance estimation error. 

For density-based causal clustering, we utilize the toy example from \citet{fasy2014tda}, originally used to illustrate the cluster tree. We consider a mixture of three Gaussians in $\R^2$. Then, roughly $n/3$ points for each of the three clusters are generated, which are our $\{\mu_{(i)}\}$. Similarly as before, we let $\widehat{\mu}_a = \mu_a + \xi$ with $\xi \sim N(0,n^{-\beta})$. Next, letting $h=0.01$, we compute $\widetilde{p}_{h}$ and $\widehat{p}_{h}$, and the corresponding level sets $L_{h,t}$ and $\widehat{L}_{h,t}$ for different values of $t$. For each $n$, we calculate the mean Hausdorff distance between $\widehat{L}_{h,t}$ and $L_{h,t}$ through $100$ repeated simulations, and present the results in Figure \ref{fig:sim} (c) \& (d) with error bars. Again, the results corroborate the conclusion from Theorem \ref{thm:density-stability_OP} that the cost of causal clustering is associated with the nuisance estimation error.

\subsection{Case Study} \label{sec:case-study}

\begin{figure}[t!]
    \centering
    \includegraphics[width=0.91\textwidth]{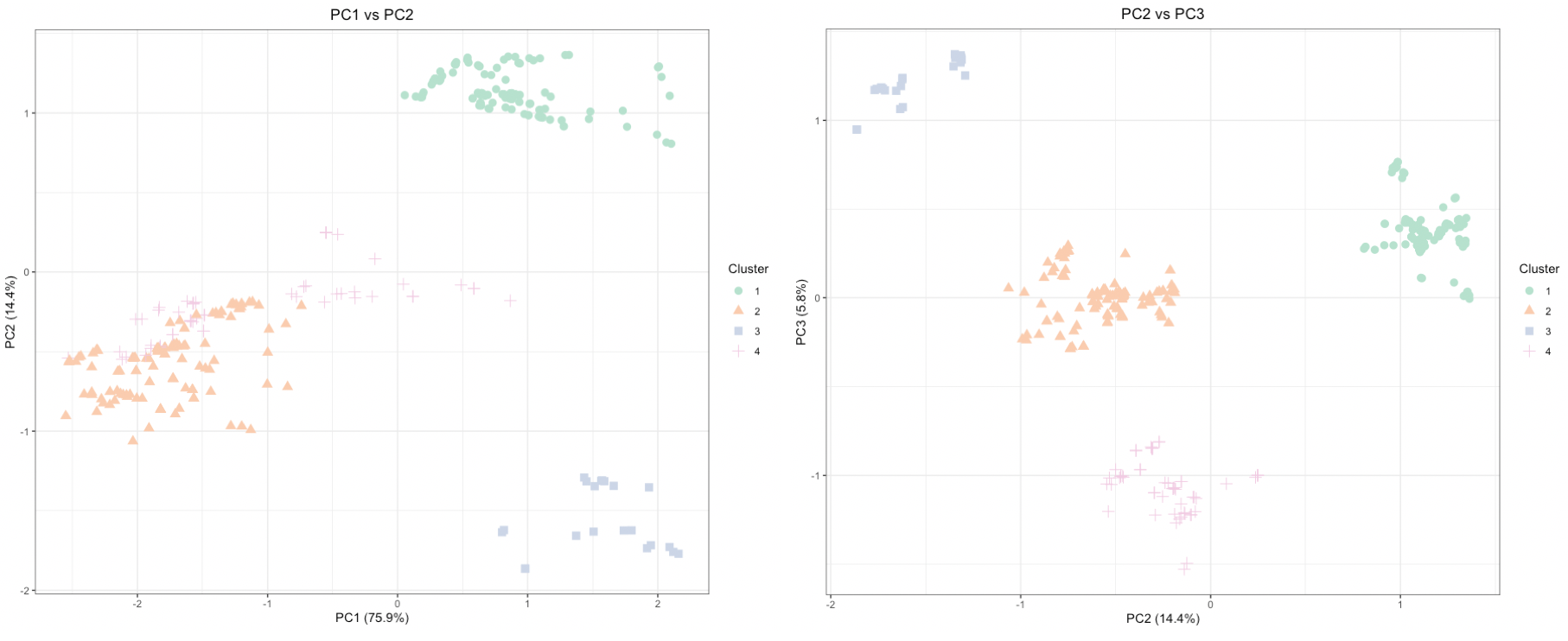}    
    \caption{The estimated causal clusters on two principal-component hyperplanes with axes representing the first and second, second and third principal components in the conditional counterfactual mean vector space, respectively.}
    \label{fig:application-employment-projection-1}
\end{figure}
\begin{figure}[t!]
    \centering    
    \includegraphics[width=0.91\textwidth]{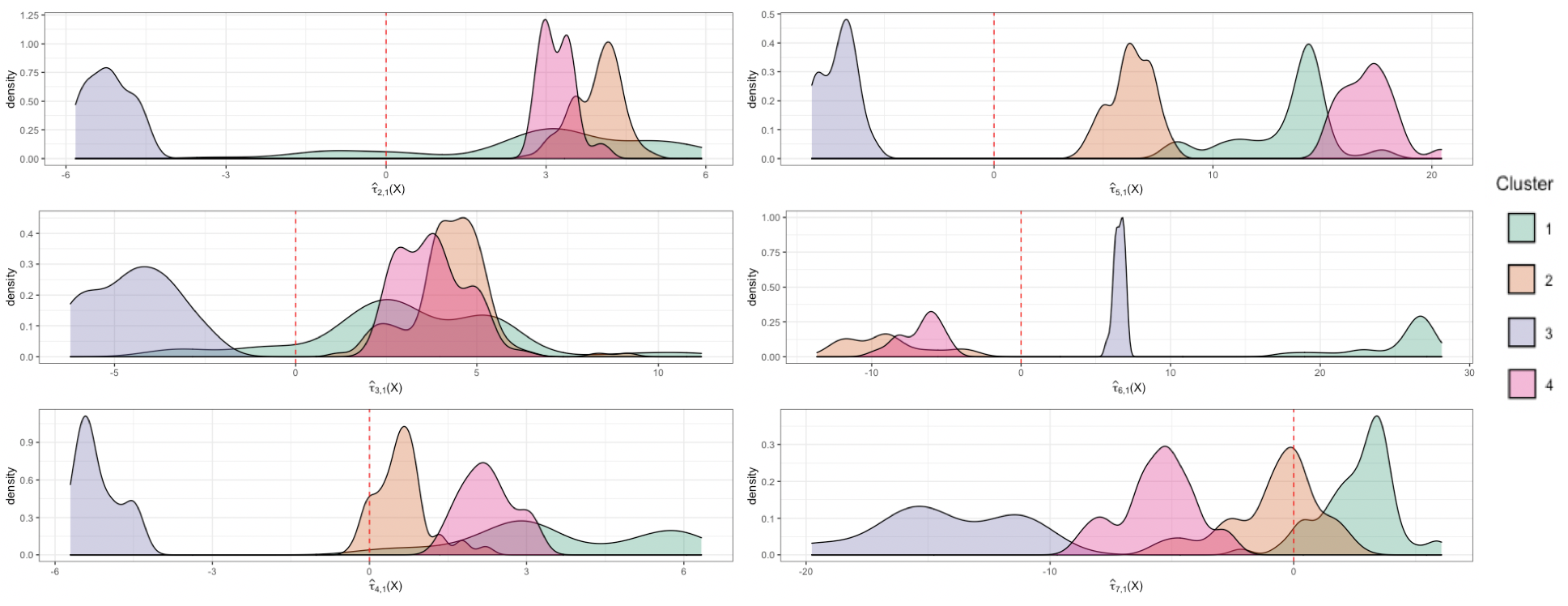}
    \caption{The density plots of the pariwise CATE of six other education levels relative to the doctoral degree across clusters. We observe a substantial degree of effect heterogeneity. The red dashed vertical lines denote the zero CATE.}
    \label{fig:application-employment-projection-2}
\end{figure}

In this section, we illustrate our method through two case studies. We use semi-synthetic data on the voting study and real-world data on employment projections.

\textbf{Voting study}.
\citet{nie2017quasi} considered a dataset on the voting study originally used by \citet{arceneaux2006comparing}. They generated synthetic treatment effects to render discovery of heterogeneous treatment effects more challenging. We use the same setup as \citet[][Chapter 2]{nie2017quasi}, where we have binary treatments, binary outcomes, and $11$ pretreatment covariates. While \citet{nie2017quasi} specifically focused on accurate estimation of the CATE, here we aim to illustrate how the proposed method can be used to uncover an intriguing subgroup structure. We randomly chose a training set of size $13000$ and a test set of size $10000$ from the entire sample. Then we estimate $\{\widehat{\mu}_{(i)}\}$ using the cross-validation-based Super Learner ensemble \citep{van2007super} to combine regression splines, support vector machine regression, and random forests on the training sample, and perform the density-based causal clustering on the test sample using \texttt{DeBaCl} function in \texttt{TDA} R package \citep{fasy2014tda}.

In Figure \ref{fig:voting-study}-(b), we see two clusters in our conditional counterfactual mean vector space that are clearly separable from each other, one with nearly zero subgroup effect (Cluster C2) and the other with negative effect (Cluster C1). They correspond to the two largest branches at the bottom of the tree (Figure \ref{fig:voting-study}-(c)). Roughly $4$\% of the points are classified as noise. Interestingly, units in Cluster C1 appear to have higher baseline risk $\mu_0$ than Cluster C2. This is indeed more clearly noticeable in Figure \ref{fig:voting-study}-(c); Clusters C1 and C2 have a higher concentration of units in the right upper area (larger $\mu_0, \mu_1$) and the lower left area (smaller $\mu_0, \mu_1$).

\textbf{Employment projection data}.

The dataset, obtained from the US Bureau of Labor Statistics (BLS), provides projected employment by occupation. Specifically, the dataset consists of projected 10-year employment changes (2018-2028) computed from the BLS macroeconomic model across various occupations. We have eight education levels (No formal education, High school, Bachelor's degree, etc.). Here, we aim to uncover subgroup structure in the effects of entry-level education on projected employment. Our data also include four covariates: baseline employment in 2018, median annual wage in 2019, work experience, and on-the-job training. 

Again we randomly split the data into two independent sets and use the super learner ensemble to estimate the nuisance regression functions. We then find clusters using robust hierarchical causal clustering described in Section \ref{sec:hierarchical-clustering}. Since we have multi-level treatments this time ($q=8$), for ease of visualization, in Figure \ref{fig:application-employment-projection-1} we present the resulting clusters in two-dimensional hyperplanes with axes representing the first and second, second and third principal components, respectively. We also present the density plots for some of the pairwise CATEs across clusters in Figure \ref{fig:application-employment-projection-2}.

In Figure \ref{fig:application-employment-projection-1}, we observe four distinct clusters which are quite clearly separable from each other on the principal component hyperplanes. It appears that some clusters show considerably different effects from the others (e.g., Cluster 3), as shown in Figure \ref{fig:application-employment-projection-2}. Our findings indicate a substantial heterogeneity in the effects of entry-level education on projected employment.

\section{Discussion} \label{sec:discussion}
% We presented and analyzed plug-in estimators for robust hierarchical and level-set density-based clustering.
Causal clustering is a new approach for studying treatment effect heterogeneity that draws on cluster analysis tools. In this work, we expanded upon this framework by integrating widely-used hierarchical and density-based clustering algorithms, where we presented and analyzed the simple and readily implementable plug-in estimators. Importantly, as we do not impose any restrictions on the outcome process, the proposed methods offer novel opportunities for clustering with generic unknown pseudo outcomes.

There are some caveats and limitations which should be addressed. First, causal clustering plays a more descriptive and discovery-based than prescriptive role compared to other approaches. It enables efficient discovery of subgroup structures and intriguing subgroup features as illustrated in our case studies, yet will likely be less useful for informing specific treatment decisions. Understanding this trade-off is thus important, and we recommend using our methods in conjunction with other approaches. Nonetheless, the clustering outputs could be potentially utilized as an useful input for subsequent learning tasks, such as precision medicine or optimal policy.
Next, our theoretical findings show that when the nuisance regression functions $\{\mu_a\}$ are modeled nonparametrically, the clustering performance essentially inherits from that of $\{\widehat{\mu}_a\}$. The convergence rate of $\widehat{\mu}_a$ can be arbitrarily slow as the dimension of the covariate space increases. \citet{kim2024causal} addressed this issue by developing an efficient semiparametric estimator that achieves the second-order bias, and so can attain fast rates even in high-dimensional covariate settings \citep{kennedy2022semiparametric}. In future work, we aim to develop more efficient semiparametric estimators for hierarchical and density-based causal clustering. Extension to other robust clustering methods, such as hierarchical density-based clustering \citep{campello2013density}, would be a promising direction for future research as well. Lastly, our proposed methods are currently limited to the standard identification strategy under the no-unmeasured-confounding assumption, which is typically vulnerable to criticism \citep[e.g.,][Chapter 12]{imbens2015causal}. To widen the breadth of the causal clustering framework, we will also be exploring extensions to other identification strategies, such as instrumental variable, mediation, and proximal causal learning.

% Further, unlike the case of estimating real-valued parameters \citep[e.g.,][]{chernozhukov2016double, zheng2010asymptotic}, the cross fitting procedure is not straightforward in our clustering setup.

\section{Broader Impact} \label{sec:societal-impact}
The proposed method provides a general framework for causal clustering that is not specifically tailored to any particular application, thereby reducing the potential for unintended societal or ethical impacts. Nonetheless, it is important to note that the identified subgroup structure was constructed entirely based on treatment effect similarity, without accounting for fairness or bias.

\clearpage
\newpage

\section{Acknowledgements}
This work was partially supported by a Korea University Grant (K2407471) and the National Research Foundation of Korea
(NRF) grant funded by the Korea governement (MSIT)(No. NRF-2022M3J6A1063595). This work was also partially supported by Institute of Information \& communications Technology Planning \& Evaluation (IITP) grant funded by the Korea government(MSIT) [NO.RS-2021-II211343, Artificial Intelligence Graduate School Program (Seoul National University)] and the New Faculty Startup Fund from Seoul National University. Part of this work was completed while Kwangho Kim was a Ph.D. student at Carnegie Mellon University.

%\clearpage
%\newpage

\bibliographystyle{plainnat}
\bibliography{reference}

\clearpage
\newpage

\onecolumn
\appendix

\section*{\centerline{Appendix}}

\section{Robust Hierarchical Clustering}

\label{sec:robust_hier_cluster}

This section summarizes definitions and theoretical results in \citet{balcan2014robust}.

We first define a clustering problem $(\mathsf{U},l)$ as follows.
Assume we have a data set $\mathsf{U}$ of $N$ objects. Each point
$\mu'\in\mathsf{U}$ has a true cluster label $l(\mu')$ in $\{C_{1},\ldots,C_{k}\}$.
Further we let $C(\mu')$ denote a cluster corresponding to the label
$l(\mu')$, and $n_{C(\mu')}$ denote the size of the cluster $C(\mu')$.

The good-neighborhood property in Definition~\ref{def:robust_hier_cluster_good_neighbor} is a generalization of both the $\nu$-strict
separation and the $\alpha$-good neighborhood property in \citet{balcan2014robust}.
It roughly means that after a portion of points are removed, each
point might allow some bad immediate neighbors but most of the immediate
neighbors are good. $\alpha$, $\nu$ can be viewed as noise parameters
indicating the proportion of data susceptible to erroneous behavior.
\begin{definition}[Property~3 in \citep{balcan2014robust}]

\label{def:robust_hier_cluster_good_neighbor}

 Suppose a clustering problem $(\mathsf{U},l)$ with $|\mathsf{U}|=N$,
and a similarity function $d:\mathsf{U}\times\mathsf{U}\to\mathbb{R}$.
We say the similarity function $d$ satisfies $(\alpha,\nu)$-good
neighborhood property for the clustering problem $(\mathsf{U},l)$,
if there exists $\mathsf{U}'\subset\mathsf{U}$ of size $(1-\nu)N$
so that for all points $\mu'\in\mathsf{U}'$ we have that all but
$\alpha N$ out of their $n_{C(\mu')\cap\mathsf{U}'}$ nearest neighbors
in $S'$ belong to the cluster $C(\mu')$.
\end{definition}

In the inductive setting, Algorithm~2 in \citet{balcan2014robust}
uses a random sample over the data set and generates a hierarchy over
this sample, and also implicitly represents a hierarchy over the
entire data set. When the data satisfies the good neighborhood properties,
Algorithm~2 in \citep{balcan2014robust} achieves small error on
the entire data set, requiring only a small sample size independent
of that of the entire data set, as in Theorem~\ref{thm:robust_hier_cluster_inductive}.

\begin{theorem}[Theorem~11 in \citep{balcan2014robust}]

\label{thm:robust_hier_cluster_inductive}

 Let $d$ be a symmetric similarity function satisfying the $(\alpha,\nu)$-good
neighborhood for the clustering problem $(\mathsf{U},l)$. As long
as the smallest target cluster has size greater than $12(\nu+\alpha)N$,
then Algorithm~2 in \citep{balcan2014robust} with parameters $n=\Theta\left(\frac{1}{\min(\alpha,\nu)}\log\frac{1}{\delta\min(\alpha,\nu)}\right)$
produces a hierarchy with a pruning that has error at most $\nu+\delta$
with respect to the true target clustering with probability at least
$1-\delta$. 
\end{theorem}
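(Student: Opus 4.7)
The plan is to prove this result in three stages by reducing it to the black-box guarantee in Balcan et al.'s Theorem 11 (stated as Theorem A.2 in the appendix). That theorem takes the combinatorial version of the good-neighborhood property on a fixed $N$-point dataset, so I need to (a) pass from the distributional $(\alpha,\nu)$-good-neighborhood property of $\Pb_{\alpha,\nu}$ to a combinatorial $(\alpha',\nu')$-good-neighborhood property on the oracle sample $\mathsf{U}^N$, (b) absorb the effect of replacing the oracle pairwise distances $\|\mu_{(i)}-\mu_{(j)}\|_2$ by plug-in distances $\|\widehat{\mu}_{(i)}-\widehat{\mu}_{(j)}\|_2$ into an extra additive inflation $\varepsilon$ of the $\alpha$-parameter, and (c) invoke Theorem A.2 with the inflated parameters on $\widehat{\mathsf{U}}^N$.

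For stage (a), the noise fraction is controlled directly: since $\Pb_{\alpha,\nu}=(1-\nu)\Pb_\alpha+\nu\Pb_{\text{noise}}$, a Hoeffding bound on the Bernoulli indicator of the noise origin shows that with probability at least $1-\delta_N/2$ the empirical noise fraction in $\mathsf{U}^N$ is at most $\nu + O(\sqrt{\log(1/\delta_N)/N})\equiv\nu'$, so the corresponding $(1-\nu')$-fraction provides a valid "clean" subset $\mathsf{U}'\subset \mathsf{U}^N$. For the $\alpha$ side, I apply uniform concentration over the VC class of closed balls in $\R^q$ (VC dimension $O(q)$, using Assumption \ref{assumption:A4-boundedness} to make the class tight) together with the fact that the empirical radius $r^N_{\mu'}$ containing exactly $n_{C(\mu')\cap\mathsf{U}'}$ nearest neighbors converges to the population radius $r_{\mu'}$: both $\Pb[\mathbb{B}(\mu',r)]$ and $\Pb[\mathbb{C}(\mu')]$ concentrate at the $\sqrt{\log(1/\delta_N)/N}$ rate, and Assumption \ref{assumption:A5-bounded-density} (bounded density) makes the radius-to-mass map Lipschitz so that the inversion costs only a constant. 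Consequently, the empirical fraction of clean points lying in $\mathbb{B}(\mu',r^N_{\mu'})\setminus\mathbb{C}(\mu')$ differs from $\Pb_\alpha[\mathbb{B}(\mu',r_{\mu'})\setminus\mathbb{C}(\mu')]\leq\alpha$ by at most $O(\sqrt{\log(1/\delta_N)/N})=\alpha'-\alpha$, yielding the $(\alpha',\nu')$-good-neighborhood property on the oracle sample $\mathsf{U}^N$.

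For stage (b), under Assumption \ref{assumption:A1-sample-splitting-entropy} (so that $\gamma=\sum_a\|\widehat{\mu}_a-\mu_a\|_\infty$ is a fixed quantity conditional on the auxiliary/training portion), the triangle inequality gives
\[
\bigl|\|\widehat{\mu}_{(i)}-\widehat{\mu}_{(j)}\|_2-\|\mu_{(i)}-\mu_{(j)}\|_2\bigr|\;\leq\;2\gamma.
\]
Hence the set of $k$-nearest neighbors of any point can only change by points that lie in an annulus of width $\leq 4\gamma$ around the boundary of its oracle nearest-neighbor ball. By the bounded-density Assumption \ref{assumption:A5-bounded-density}, any such annulus has $\Pb$-mass $O(\gamma)$, and the indicator of membership in the annulus has variance $O(\gamma)$. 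A Bernstein inequality therefore controls the empirical fraction by its mean plus $\sqrt{\gamma\log(1/\delta_N)/N}+\log(1/\delta_N)/N$. To promote this to a uniform statement over all potential ball centers, I peel across $O(\log(1/\gamma))$ dyadic scales of annulus width and apply the standard VC-type covering for the bounded class of balls; the union bound across these $\log(1/\gamma)$ scales produces the stated $\sqrt{(\gamma/N)\log(1/\gamma)}$ term. Combining with stage (a), $\widehat{\mathsf{U}}^N$ satisfies the combinatorial $(\alpha'+\varepsilon,\,\nu')$-good-neighborhood property with probability at least $1-2\delta_N$.

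Stage (c) is then a direct invocation: the hypothesis that the smallest target cluster has size greater than $12(\nu'+\alpha'+\varepsilon)N$ is precisely the size condition of Theorem A.2 with parameters $(\alpha,\nu)$ replaced by $(\alpha'+\varepsilon,\nu')$, and the prescribed $n=\Theta\bigl(\frac{1}{\min(\alpha'+\varepsilon,\nu')}\log\frac{1}{\delta\min(\alpha'+\varepsilon,\nu')}\bigr)$ is exactly their subsample size. Applying Theorem A.2 conditional on the events from stages (a)--(b) produces a pruning with error at most $\nu'+\delta$ with conditional probability $\geq 1-\delta$; a union bound over the three events gives the claimed $1-\delta-2\delta_N$. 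The main technical obstacle I foresee is the uniform Bernstein step in stage (b): naively bounding the excess nearest-neighbor set by $O(1)$ per annulus would give the crude $\sqrt{\log(1/\delta_N)/N}$ rate, whereas recovering the advertised $\sqrt{(\gamma/N)\log(1/\gamma)}$ rate requires correctly exploiting that the \emph{variance} (not just the supremum) of the annulus indicator is $O(\gamma)$ and performing a careful localized chaining/peeling argument across scales. The remaining ingredients---the distributional-to-empirical reduction in stage (a) and the black-box application in stage (c)---are standard once the combinatorial property is in hand.
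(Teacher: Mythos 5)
There is a genuine gap, and it is structural rather than technical: you have not proved the statement at hand, you have proved a different theorem that \emph{uses} the statement as a lemma. The statement is Theorem~11 of \citet{balcan2014robust} --- a purely combinatorial/algorithmic guarantee for their Algorithm~2 on a fixed $N$-point clustering problem $(\mathsf{U},l)$ whose similarity function already satisfies the combinatorial $(\alpha,\nu)$-good neighborhood property. Your three-stage plan is instead a sketch of the paper's Theorem~\ref{thm:robust-hier}: stage (a) is the paper's Lemma~\ref{lem:robust-hier-good-neighborhood-property} (passing from the distributional property of $\Pb_{\alpha,\nu}$ to the combinatorial one on $\mathsf{U}^N$), stage (b) is the annulus argument of Lemmas~\ref{lem:robust-hier-bound_prob}--\ref{lem:robust-hier-bound_diff} and Corollary~\ref{cor:robust-hier-bound_emp} (absorbing the plug-in error $\gamma$ into the inflation $\varepsilon$), and stage (c) explicitly invokes ``the black-box guarantee in Balcan et al.'s Theorem~11 (stated as Theorem A.2 in the appendix)'' --- which is precisely the statement you were asked to prove. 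As a proof of that statement, the argument is therefore circular: its only load-bearing step for the conclusion is an appeal to the conclusion itself.

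For the record, the paper does not prove this statement either; it imports it verbatim from \citet{balcan2014robust} (the appendix section is introduced as a summary of their definitions and results). An actual proof would have to analyze Algorithm~2 of that paper directly: show via Chernoff/union bounds that a random subsample of size $n=\Theta\left(\frac{1}{\min(\alpha,\nu)}\log\frac{1}{\delta\min(\alpha,\nu)}\right)$ inherits the good-neighborhood property with slightly degraded parameters, that the transductive robust-linkage phase on the subsample produces a hierarchy with a low-error pruning under the cluster-size condition $12(\nu+\alpha)N$, and that the rule inserting the remaining $N-n$ points into that hierarchy misplaces at most an additional $\delta$ fraction. None of these ingredients appears in your proposal. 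Your stages (a) and (b), on the other hand, are a faithful and essentially correct account of how the paper proves Theorem~\ref{thm:robust-hier} \emph{given} this black box --- so the work is not wasted, it is just attached to the wrong statement.
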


\section{Proofs}

Throughout the development, we let ${\Pb}$ denote the conditional expectation given the sample operator $\hat{f}$, as in $\mathbb{P}(\hat{f})=\int\hat{f}(z)d\mathbb{P}(z)$. Notice that $\Pb(\hat{f})$ is random only if $\hat{f}$ depends on samples, in which case $\Pb(\hat{f}) \neq \E(\hat{f})$. Otherwise $\Pb$ and $\E$ can be used exchangeably. For example, if $\hat{f}$ is constructed on a separate (training) sample $\mathsf{D}^n = (Z_1,...,Z_n)$, then ${\Pb}\left\{\hat{f}(Z)\right\} = \E\left\{\hat{f}(Z) \mid \mathsf{D}^n \right\}$ for a new observation $Z \sim \Pb$. Lastly, we let $d_{2}:\mathbb{R}^{q}\times\mathbb{R}^{q}\to\mathbb{R}$ be the Euclidean distance on $\mathbb{R}^{q}$, i.e., 
\[
d_{2}(x,y)\coloneqq\left\Vert x-y\right\Vert _{2}.
\]

We present the following basic lemma which we will use throughout the proofs.

\begin{lemma}    
    \label{lem:xhatprime_prime}    
    Under Assumption \ref{assumption:A1-sample-splitting-entropy}, 
    \begin{enumerate}
        \item[(a)] The conditional expectation of $\left\Vert \widehat{\mu}-\mu \right\Vert_{2}$ is bounded by the estimation error of $\widehat{\mu}_{a}$: i.e.,
        \begin{equation}
        \Pb \left(\left\Vert \widehat{\mu}-\mu \right\Vert _{2}\right)\leq\sum_{a}\left\Vert \widehat{\mu}_{a}-\mu_{a}\right\Vert _{\Pb,1}.\label{eq:xhatprime_xprime_expectation}
        \end{equation}
        \item[(b)] Under Assumption \ref{assumption:A4-boundedness}, for $\delta\in(0,1)$, $\frac{1}{n}\sum_{i=1}^{n}\left\Vert \widehat{\mu}_{(i)}-\mu_{(i)}\right\Vert _{2}$
        can be bounded with probability at least $1-\delta$ as 
        \begin{equation}
        \frac{1}{n}\sum_{i=1}^{n}\left\Vert \widehat{\mu}_{(i)}-\mu_{(i)}\right\Vert _{2}\leq\sum_{a}\left\Vert \widehat{\mu}_{a}-\mu_{a}\right\Vert _{1}+B\sqrt{\frac{\log(1/\delta)}{n}}.\label{eq:xhatprime_xprime_whp}
        \end{equation}
    \end{enumerate}
\end{lemma}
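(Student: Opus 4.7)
The lemma is a standard two-part estimation bound for the $\ell_{2}$-norm difference between the estimated and true conditional counterfactual mean vectors, and I would attack the two parts in sequence.

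For part (a), my plan is to apply the elementary norm inequality $\|x\|_{2} \leq \|x\|_{1}$ for $x \in \R^{q}$ pointwise at each $X$: namely,
\[
\|\widehat{\mu}(X) - \mu(X)\|_{2} \;\leq\; \sum_{a\in\mathcal{A}} |\widehat{\mu}_{a}(X) - \mu_{a}(X)|.
\]
Taking the conditional expectation $\Pb(\cdot)$ (which integrates over a fresh $X \sim \Pb$ while holding $\{\widehat{\mu}_{a}\}$ fixed) and swapping the finite sum with the expectation by linearity then gives $\Pb(\|\widehat{\mu} - \mu\|_{2}) \leq \sum_{a} \|\widehat{\mu}_{a} - \mu_{a}\|_{\Pb,1}$, which is the claim.

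For part (b), I would decompose the empirical mean as
\[
\frac{1}{n}\sum_{i=1}^{n} \|\widehat{\mu}_{(i)} - \mu_{(i)}\|_{2} \;=\; \Pb\bigl(\|\widehat{\mu} - \mu\|_{2}\bigr) \;+\; (\Pn - \Pb)\bigl(\|\widehat{\mu} - \mu\|_{2}\bigr),
\]
controlling the first piece by part (a) and the second via a concentration argument. By Assumption \ref{assumption:A4-boundedness} and the triangle inequality, each summand is uniformly bounded by $2B$. Under sample splitting (Assumption \ref{assumption:A1-sample-splitting-entropy}(ii)), $\widehat{\mu}$ is independent of $\{Z_{i}\}$, so Hoeffding's inequality immediately gives an $O(B\sqrt{\log(1/\delta)/n})$ upper tail with probability at least $1 - \delta$, finishing the bound.

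The one nontrivial piece is the non-sample-split case in Assumption \ref{assumption:A1-sample-splitting-entropy}(i), where $\widehat{\mu}$ depends on the same sample used to form the empirical mean, so simple Hoeffding applied to the summands fails. I would handle this with standard empirical-process machinery: since $\{\mu_{a}\}$ and $\{\widehat{\mu}_{a}\}$ lie in bounded Donsker classes, finite differences remain Donsker, and post-composition with the $1$-Lipschitz map $x \mapsto \|x\|_{2}$ preserves the Donsker property with an envelope of size $2B$. Consequently, a uniform deviation bound (for example via Talagrand's inequality applied to the resulting bounded Donsker class) controls $(\Pn - \Pb)(\|\widehat{\mu} - \mu\|_{2})$ at rate $O(B\sqrt{\log(1/\delta)/n})$ with probability $1 - \delta$, and combining with part (a) yields the stated bound.
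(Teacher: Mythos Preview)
Your proposal is correct and mirrors the paper's proof: part (a) via the pointwise inequality $\|x\|_{2}\le\|x\|_{1}$ followed by taking $\Pb$-expectation, and part (b) via Hoeffding's inequality on the bounded summands combined with part (a). If anything you are more careful than the paper, which applies Hoeffding directly without separating the two cases of Assumption~\ref{assumption:A1-sample-splitting-entropy}; your added empirical-process argument for the Donsker case (i) fills a gap the paper leaves implicit, though note that recovering the exact constant $B$ (rather than a generic $O(B)$) from Talagrand is not automatic.
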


\begin{proof}[Proof of Lemma \ref{lem:xhatprime_prime}]
    
    (a) It is immediate to see that
    \begin{align*}
    \Pb\left[\left\Vert \widehat{\mu}_{(i)}-\mu_{(i)}\right\Vert _{2}\right] & =\Pb\left[\sqrt{\sum_{a}(\widehat{\mu}_{a}(X)-\mu_{a}(X))^{2}}\right]\\
    & \leq\sum_{a}\Pb\left[\left|\widehat{\mu}_{a}(X)-\mu_{a}(X)\right|\right]\\
    & =\sum_{a}\left\Vert \widehat{\mu}_{a}-\mu_{a}\right\Vert_{\Pb,1}.
    \end{align*}
    
    (b) Noting that Assumption \ref{assumption:A4-boundedness} implies $0\leq\left\Vert \widehat{\mu}_{(i)}-\mu_{(i)}\right\Vert _{2}\leq\sqrt{2}B$ a.s., by Hoeffding's inequality we get
    \[
    \Pb \left(\frac{1}{n}\sum_{i=1}^{n}\left\Vert \widehat{\mu}_{(i)}-\mu_{(i)}\right\Vert _{2}-\Pb\left[\left\Vert \widehat{\mu}_{(i)}-\mu_{(i)}\right\Vert _{2}\right]>t\right)\leq\exp\left(-\frac{nt^{2}}{B^{2}}\right).
    \]
    Hence for any $\delta>0$, applying $t=B\sqrt{\frac{\log(1/\delta)}{n}}$
    gives 
    \[
    \Pb \left(\frac{1}{n}\sum_{i=1}^{n}\left\Vert \widehat{\mu}_{(i)}-\mu_{(i)}\right\Vert _{2}\leq \Pb\left[\left\Vert \widehat{\mu}_{(i)}-\mu_{(i)}\right\Vert _{2}\right]+B\sqrt{\frac{\log(1/\delta)}{n}}\right)\geq1-\delta.
    \]
    Then applying \eqref{eq:xhatprime_xprime_expectation} gives 
    \[
    \Pb \left(\frac{1}{n}\sum_{i=1}^{n}\left\Vert \widehat{\mu}_{(i)}-\mu_{(i)}\right\Vert _{2}\leq\sum_{a}\left\Vert \widehat{\mu}_{a}-\mu_{a}\right\Vert _{1}+B\sqrt{\frac{\log(1/\delta)}{n}}\right)\geq1-\delta.
    \]
    
\end{proof}

\subsection*{Proof of Proposition~\ref{lem:hier-link-bound}}

We rewrite Proposition~\ref{lem:hier-link-bound} with detailed constants relation.

\begin{proposition} \label{lem:hier-link-bound_appendix}        
    Let $D$ denote the single, average, or complete linkage between sets of points, induced by the distance function such that
    $
        d(x,y) \leq \mathsf{C} \left\Vert x - y \right\Vert_{1}
    $ for some constant $\mathsf{C}>0$.
    Then under Assumption \ref{assumption:A1-sample-splitting-entropy}, for any two sets $S_1, S_2$ in $\{\mu_{(i)}\}$ and their estimates $\widehat{S}_1, \widehat{S}_2$ with $\{\widehat{\mu}_{(i)}\}$, 
    \[
%    \left\vert D(S_1, S_2) - D(\widehat{S}_1, \widehat{S}_2) \right\vert \leq 2 \sum_{a\in\mathcal{A}}\left\Vert\widehat{\mu}_a - \mu_a \right\Vert_{\infty}.
    \left\vert D(S_1, S_2) - D(\widehat{S}_1, \widehat{S}_2) \right\vert \leq 2\mathsf{C} \sum_{a\in\mathcal{A}}\left\Vert\widehat{\mu}_a - \mu_a \right\Vert_{\infty}.
    \]    
\end{proposition}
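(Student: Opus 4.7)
The plan is to first establish a uniform Lipschitz-type bound on $|d(x,y) - d(\hat x,\hat y)|$ in terms of the coordinate-wise sup errors $\|\widehat{\mu}_a - \mu_a\|_\infty$, then lift that bound to each of the three set-level linkages because min, max, and arithmetic mean all preserve uniform perturbation bounds. The hypothesis $d(x,y) \leq \mathsf{C}\|x-y\|_1$ together with the fact that $d$ is a distance (hence a metric, so the triangle inequality holds) delivers exactly the Lipschitz property we need.

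First I would prove a pointwise bound: for any $s_1, s_2, \hat s_1, \hat s_2 \in \mathbb{R}^q$, the triangle inequality gives
\[
d(s_1, s_2) \;\leq\; d(s_1, \hat s_1) + d(\hat s_1, \hat s_2) + d(\hat s_2, s_2) \;\leq\; \mathsf{C}\|s_1-\hat s_1\|_1 + d(\hat s_1,\hat s_2) + \mathsf{C}\|s_2-\hat s_2\|_1,
\]
and a symmetric estimate holds with $(s,\hat s)$ swapped, so $|d(s_1,s_2)-d(\hat s_1,\hat s_2)| \leq \mathsf{C}(\|s_1-\hat s_1\|_1 + \|s_2-\hat s_2\|_1)$. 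Specializing to $s_k = \mu_{(i_k)}$ and $\hat s_k = \widehat{\mu}_{(i_k)}$ and bounding $\|\mu_{(i)} - \widehat{\mu}_{(i)}\|_1 = \sum_{a\in\mathcal{A}} |\mu_a(X_i) - \widehat{\mu}_a(X_i)| \leq \sum_{a\in\mathcal{A}}\|\widehat{\mu}_a - \mu_a\|_\infty$ yields the uniform pairwise bound
\[
\bigl|d(\mu_{(i)},\mu_{(j)}) - d(\widehat{\mu}_{(i)},\widehat{\mu}_{(j)})\bigr| \;\leq\; 2\mathsf{C}\sum_{a\in\mathcal{A}}\|\widehat{\mu}_a - \mu_a\|_\infty \;=:\; \eta,
\]
valid for every pair indexing $S_1 \times S_2$ (and its estimated counterpart).

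Second I would lift $\eta$ through each linkage by observing that if two finite families $\{f_k\}$ and $\{g_k\}$ satisfy $|f_k - g_k|\leq \eta$ uniformly in $k$, then $|\min_k f_k - \min_k g_k|\leq \eta$, $|\max_k f_k - \max_k g_k|\leq \eta$, and $\bigl|\tfrac{1}{K}\sum_k f_k - \tfrac{1}{K}\sum_k g_k\bigr|\leq \eta$. Applying this to the indexed family of pairwise distances over $S_1\times S_2$ and $\widehat S_1 \times \widehat S_2$ (with the natural pairing between $\mu_{(i)}$ and $\widehat{\mu}_{(i)}$) immediately gives $|D(S_1,S_2) - D(\widehat S_1,\widehat S_2)|\leq \eta$ in each of the three cases, which is exactly the conclusion.

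The only delicate point is the first step: the growth bound $d(x,y)\leq \mathsf{C}\|x-y\|_1$ alone does not imply Lipschitz continuity, so one really does need the triangle inequality. I will therefore read ``distance function'' in the hypothesis as meaning a genuine metric on $\mathbb{R}^q$ (which covers all standard choices such as $\ell_p$ distances and common rescalings thereof). Everything else is a one-line application of triangle inequality and monotonicity of $\min/\max/$average under uniform perturbations; no probabilistic content enters and Assumption \ref{assumption:A1-sample-splitting-entropy} is not actually invoked in the deterministic bound itself.
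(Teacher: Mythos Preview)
Your proposal is correct and follows the same two-step approach as the paper: obtain a uniform bound on $|d(\mu_{(i)},\mu_{(j)}) - d(\widehat\mu_{(i)},\widehat\mu_{(j)})|$ and then lift it through the linkage. Your execution is in fact tidier than the paper's---the paper derives the pointwise bound via the reverse triangle inequality for $\|\cdot\|_1$ (tacitly treating $d$ as an $L_1$-type distance rather than invoking the metric triangle inequality for $d$ as you do), and its single-linkage step only explicitly controls one direction of the difference, whereas your uniform-perturbation argument for $\min/\max/$average handles both directions and all three linkages in one stroke.
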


\begin{proof}[Proof of Proposition~\ref{lem:hier-link-bound_appendix}]
    
    Recall that we are given a pair of points $s_{1} = \left( \mu_1(X_1),..., \mu_q(X_1) \right)$, $s_{2} = \left( \mu_1(X_2), ..., \mu_q(X_2) \right)$, and their estimates  $\widehat{s}_{1} = \left( \widehat{\mu}_1(X_1), ..., \widehat{\mu}_q(X_1) \right)$, $\widehat{s}_{2} = \left( \widehat{\mu}_1(X_2), ... , \widehat{\mu}_q(X_2) \right)$ for $\forall X_1, X_2 \in \mathcal{X}$. To prove the theorem, first we upper bound the maximum discrepancy between $d({s_{1}}, {s_{2}})$ and $d(\widehat{s}_{1}, \widehat{s}_{2})$. Since our distance function satisfies
    \[
        d(x,y) \leq \mathsf{C} \left\Vert x - y \right\Vert_{1}
    \]
    for any $x,y \in \R^p$, we may get
    \begin{align*}
        &  d({s_{1}}, {s_{2}})- d(\widehat{s}_{1}, \widehat{s}_{2})   \\
        & \leq \mathsf{C} \left\Vert \mu(X_1) -  \mu(X_2) - \left\{\widehat{\mu}(X_1)  -  \widehat{\mu}(X_2)\right\} \right\Vert_1 \\  
        & \leq \mathsf{C} \sum_{j=1}^{2} \sum_{a\in\mathcal{A}} \left\vert \widehat{\mu}_a(X_j) - \mu_a(X_j) \right\vert \\
        & \leq 2\mathsf{C} \sum_{a\in\mathcal{A}}\left\Vert\widehat{\mu}_a - \mu_a \right\Vert_{\infty}.
    \end{align*} 
    where the first inequality follows by $\Vert x \Vert_{1} - \Vert y \Vert_{1} \leq \Vert x - y \Vert_{1}$. 

    Let $({s_1}^*, {s_2}^*) = \underset{s_1\in S_1, s_2\in S_2}{\argmin}d(s_1,s_2)$ and ${\widehat{s}_1}^*, {\widehat{s}_2}^*$ denote their estimates.
    Then by the definition of single linkage we have
    \begin{align*}
    \left\vert D(S_1, S_2) - D(\widehat{S}_1, \widehat{S}_2) \right\vert &= \left\vert \underset{\hat{a} \in \widehat{A}, \hat{b} \in \widehat{B} }{\min}d({\widehat{s}_1}^*, {\widehat{s}_2}^*) - d({s}^*_1, {s}^*_2) \right\vert \\
    & \leq \left| d({\widehat{s}_1}^*, {\widehat{s}_2}^*)  - d({s}^*_1, {s}^*_2) \right| \\
    %		& \leq \mathsf{C}\sum_{a\in\mathcal{A}} \left\Vert\widehat{\mu}_a - \mu_a \right\Vert_2.
    & \leq 2\mathsf{C} \sum_{a\in\mathcal{A}}\left\Vert\widehat{\mu}_a - \mu_a \right\Vert_{\infty}.
    \end{align*}

    The same logic applies to the complete and average linkages.    
    
\end{proof}

\subsection*{Proof of Theorem \ref{thm:robust-hier}} \label{proof:hierarchical-2}
	
Recall that we let $\mu$ denote a point in the conditional counterfactual mean vector space given $\mathcal{X}$, $\mathcal{A}$. We also use
the notation $\mathsf{U}^{N}\coloneqq\{\mu_{(1)},\ldots,\mu_{(N)}\}$,
where $\mu_{(i)}$'s are i.i.d. samples from $\mathbb{P}$. Further by Assumption \ref{assumption:A5-bounded-density}, we assume that every distribution satisfying the good neighborhood property in Definition \ref{def::good-ngh-distribution} has a density bounded by $p_{\mu}<\infty$. We begin with introducing some useful lemmas before proving our main theorem.

\begin{lemma}

\label{lem:robust-hier-bound_prob}

Under Assumptions \ref{assumption:A4-boundedness}, \ref{assumption:A5-bounded-density}, 
% for any $\mu$
\[
\sup_{w\in\mathbb{R}^{q},r>0}\mathbb{P}\left(\mu\in \mathbb{B}(w,r+s)\backslash \mathbb{B}(w,r)\right)\leq \mathsf{C}_1 s,
\]
where $\mathsf{C}_1$ is some constant that depends on $p_{\mu}$, $B$, and $q$.

\end{lemma}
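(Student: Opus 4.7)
The plan is first to convert the probability into a volume, and then reduce the whole problem to a uniform bound on a spherical-cap surface area. By Assumption \ref{assumption:A4-boundedness}, $\mu$ is supported in $\overline{\mathbb{B}(0,B)}$, and by Assumption \ref{assumption:A5-bounded-density}, $\Pb_{\alpha,\nu}$ admits a Lebesgue density bounded above by $p_{\mu}$. Hence, for every measurable $A\subseteq\mathbb{R}^{q}$,
\[
\Pb\!\left(\mu\in A\right)\;\le\;p_{\mu}\cdot\mathrm{Vol}\!\left(A\cap\overline{\mathbb{B}(0,B)}\right),
\]
so it suffices to show that the volume of the annulus $\mathbb{B}(w,r+s)\setminus\mathbb{B}(w,r)$ intersected with $\overline{\mathbb{B}(0,B)}$ is at most a constant depending only on $q$ and $B$ times $s$, uniformly in $w\in\mathbb{R}^{q}$ and $r>0$.

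Next I would foliate the ambient space by the level sets of the $1$-Lipschitz function $f(x)=\|x-w\|_{2}$ and invoke the coarea formula to obtain
\[
\mathrm{Vol}\!\left((\mathbb{B}(w,r+s)\setminus\mathbb{B}(w,r))\cap\overline{\mathbb{B}(0,B)}\right)\;=\;\int_{r}^{r+s}A(u)\,du,
\]
where $A(u)$ denotes the $(q-1)$-dimensional surface area of the spherical cap $\partial\mathbb{B}(w,u)\cap\overline{\mathbb{B}(0,B)}$. The lemma then follows at once provided $A(u)$ can be controlled uniformly in $w$ and $u$ by a constant depending only on $q$ and $B$.

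The main obstacle is precisely this uniform cap-area bound, since for large radius $u$ the cap can a priori be rather tilted. I would handle it by a two-case argument. When $u\le 2B$, bound $A(u)$ crudely by the total surface area of $\partial\mathbb{B}(w,u)$, which is a dimensional constant times $u^{q-1}\le(2B)^{q-1}$. When $u>2B$, the cap lies in $\overline{\mathbb{B}(0,B)}$, so its chordal diameter is at most $2B$; consequently its angular radius on $\partial\mathbb{B}(w,u)$ satisfies $\theta_{0}\le\arcsin(B/u)\le\pi B/(2u)$, and direct integration of the spherical-cap formula $A(u)\asymp u^{q-1}\int_{0}^{\theta_{0}}\sin^{q-2}\theta\,d\theta$ yields $A(u)\le C(q)B^{q-1}$. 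Combining the two cases produces a single uniform constant $C'(q)$ with $A(u)\le C'(q)B^{q-1}$; plugging into the coarea identity gives
\[
\Pb\!\left(\mu\in\mathbb{B}(w,r+s)\setminus\mathbb{B}(w,r)\right)\;\le\;p_{\mu}\,C'(q)\,B^{q-1}\cdot s,
\]
so $\mathsf{C}_{1}=p_{\mu}\,C'(q)\,B^{q-1}$ depends only on $p_{\mu}$, $B$, and $q$, and the stated supremum bound holds.
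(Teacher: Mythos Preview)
Your argument is correct, but it proceeds quite differently from the paper's. Both proofs first bound the probability by $p_\mu$ times the Lebesgue measure of the annulus intersected with a fixed bounded set, and both then reduce (via the coarea formula) to a uniform bound on the $(q-1)$-dimensional measure of $\partial\mathbb{B}(w,u)$ intersected with that set. The difference lies in the choice of bounding set and in how the slice is controlled. You intersect with the ball $\overline{\mathbb{B}(0,B)}$, which makes each slice a genuine spherical cap, and then bound the cap area by a direct two-case computation (full sphere when $u\le 2B$, small-angle cap estimate when $u>2B$). The paper instead intersects with the cube $[-2B,2B]^q$ and avoids any cap computation by a monotonicity trick: it radially pushes the slice $\partial\mathbb{B}(w,t)\cap[-2B,2B]^q$ outward to slices at radii $t+u$ inside a slightly enlarged cube, uses that the push is distance non-decreasing (hence area non-decreasing), and integrates over a thickness $2B/q$ to dominate the slice area by the enlarged cube's volume divided by $2B/q$. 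Your approach is more elementary and exploits the nicer geometry of the ball, while the paper's trick is slicker and would apply to any fixed bounded set; both yield a constant of the form $C(q)\,p_\mu\,B^{q-1}$.
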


% \begin{proof}

% Let $\lambda_{q}$ be the $q$-dimensional Lebesgue measure. By Assumption \eqref{assumption:A4-boundedness}, ${\rm supp}(W)\subset[-2B,2B]^{q}$,
% and hence for any $w\in\mathbb{R}^{q}$ and $r,s>0$,
% \begin{align*}
% \lambda_{q}\left(\left\{\mathbb{B}(w,r+s)\backslash \mathbb{B}(w,r)\right\}\cap{\rm supp}(W)\right) & \leq\lambda_{q}\left(\left\{\mathbb{B}(w,r+s)\backslash \mathbb{B}(w,r)\right\}\cap[-2B,2B]^{q}\right)\\
% & = O\left(p(4B)^{p-1}s\right).
% \end{align*}
% And hence for all $w\in\mathbb{R}^{q}$ and $r>0$, Under Assumption \eqref{assumption:A5-bounded-density},
% \begin{align*}
% \mathbb{P}\left(\mu\in \mathbb{B}(w,r+s)\backslash \mathbb{B}(w,r)\right) & \leq p_{\mu} \int_{\left(\mathbb{B}(w,r+s)\backslash \mathbb{B}(w,r)\right)\cap{\rm supp}(W)} \lambda_{q}\left(dw\right)\\
%  & = O\left(p_{\mu}(4B)^{p-1}ps\right).
% \end{align*}

% \end{proof}

\begin{proof}

Let $\lambda_{q}$ be the $q$-dimensional Lebesgue measure. By Assumption \eqref{assumption:A4-boundedness}, ${\rm supp}(p_{\mu})\subset[-2B,2B]^{q}$,
and hence for any $w\in\mathbb{R}^{q}$ and $r,s>0$,
\[
\lambda_{q}\left(\left(\mathbb{B}(w,r+s)\backslash\mathbb{B}(w,r)\right)\cap{\rm supp}(p_{\mu})\right)\leq\lambda_{q}\left(\left(\mathbb{B}(w,r+s)\backslash\mathbb{B}(w,r)\right)\cap[-2B,2B]^{q}\right).
\]
Now, we bound $\lambda_{q-1}(\partial \mathbb{B}(w,t)\cap[-2B,2B]^{q})$ for
any $t\in\mathbb{R}$. First, note that for any $u \geq0$, by considering
that the map $\varphi:\partial \mathbb{B}(w,t)\cap[-2B,2B]^{q}\to\partial \mathbb{B}(w,t+u)\cap[-2B-u,2B+u]^{q}$
by $\varphi(w+tv)=w+(t+u)v$ for unit vector $v$ satisfies $\left\Vert \varphi(x)-\varphi(y)\right\Vert \geq\left\Vert x-y\right\Vert $,
we have 
\[
\lambda_{q-1}(\partial \mathbb{B}(w,t)\cap[-2B,2B]^{q})\leq\lambda_{q-1}(\partial \mathbb{B}(w,t+u)\cap[-2B-u,2B+u]^{q}).
\]
And hence 
\begin{align*}
 & \frac{2B}{q}\lambda_{q-1}(\partial \mathbb{B}(w,t)\cap[-2B,2B]^{q})\\
 & =\int_{0}^{\frac{2B}{q}}\lambda_{q-1}(\partial \mathbb{B}(w,t)\cap[-2B,2B]^{q})du\\
 & \leq\int_{0}^{\frac{2B}{q}}\lambda_{q-1}\left(\partial \mathbb{B}(w,t+u)\cap\left[-2B-u,2B+u\right]^{q}\right)du\\
 & \leq\int_{0}^{\frac{2B}{q}}\lambda_{q-1}\left(\partial \mathbb{B}(w,t+u)\cap\left[-2(1+\frac{1}{q})B,2(1+\frac{1}{q})B\right]^{q}\right)du\\
 & =\lambda_{q}\left(\mathbb{B}(w,t+B)\backslash \mathbb{B}(w,t))\cap\left[-2(1+\frac{1}{q})B,2(1+\frac{1}{q})B\right]^{q}\right)\\
 & \leq\lambda_{q}\left(\left[-2(1+\frac{1}{q})B,2(1+\frac{1}{q})B\right]^{q}\right)\leq e4^{q}B^{q},
\end{align*}
and hence 
\[
\lambda_{q-1}(\partial \mathbb{B}(w,t)\cap[-2B,2B]^{q})\leq e2^{2q-1}B^{q-1}q.
\]
Then $\lambda_{q}\left(\left(\mathbb{B}(w,r+s)\backslash \mathbb{B}(w,r)\right)\cap[-2B,2B]^{q}\right)$
is bounded as 
\begin{align*}
\lambda_{q}\left(\left(\mathbb{B}(w,r+s)\backslash \mathbb{B}(w,r)\right)\cap[-2B,2B]^{q}\right) & =\int_{0}^{s}\lambda_{q-1}(\partial \mathbb{B}(w,r+t)\cap[-2B,2B]^{q})dt\\
 & \leq\int_{0}^{s}e2^{2q-1}B^{q-1}pdt=e2^{2q-1}B^{q-1}qs.
\end{align*}
And hence for all $w\in\mathbb{R}^{q}$ and $r>0$, Under Assumption
\ref{assumption:A5-bounded-density}, 
\begin{align*}
\mathbb{P}\left(\mu \in\mathbb{B}(w,r+s)\backslash\mathbb{B}(w,r)\right) & \leq p_{\mu}\int_{\left(\mathbb{B}(w,r+s)\backslash\mathbb{B}(w,r)\right)\cap{\rm supp}(p_{\mu})}\lambda_{q}\left(dw\right)\\
 & \leq ep_{\mu}2^{2q-1}B^{q-1}qs.
\end{align*}

\end{proof}

\begin{lemma}

\label{lem:robust-hier-bound_diff}

%Given a set of points $S$ with size $n$, i.e., $\vert S \vert = n$,
Suppose $\mathsf{U}^{N}\coloneqq\{\mu_{(1)},\ldots,\mu_{(N)}\}$ are
i.i.d. samples from $\mathbb{P}$. 
With probability $1-\delta_{N}$,

\begin{align*}
& \sup_{w\in\mathbb{R}^{q},r>0}\left|\frac{\left|\mathsf{U}^{N}\cap(\mathbb{B}(w,r+s)\backslash \mathbb{B}(w,r))\right|}{N}-\mathbb{P}\left(\mu\in \mathbb{B}(w,r+s)\backslash \mathbb{B}(w,r)\right)\right| \\
&\leq \mathsf{C}_2\left(\frac{1}{N}\log\left(\frac{1}{\delta_{N}}\right)+\sqrt{\frac{s}{N}\log\left(\frac{1}{s}\right)}+\sqrt{\frac{s}{N}\log\left(\frac{1}{\delta_{N}}\right)}\right),
\end{align*}
where $\mathsf{C}_2$ is a constant depending only
on $q$, $B$, $p_{\mu}$.

\end{lemma}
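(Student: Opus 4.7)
\textbf{Proof plan for Lemma \ref{lem:robust-hier-bound_diff}.} The plan is to recognize the left-hand side as the supremum of an empirical process indexed by a VC class of annuli with small variance proxy, and then apply a localized Talagrand--Bernstein uniform concentration inequality. Fix the shell width $s>0$ throughout, and consider the class of sets
\[
\mathcal{A}_{s}\;=\;\bigl\{\mathbb{B}(w,r+s)\setminus\mathbb{B}(w,r):\,w\in\mathbb{R}^{q},\,r>0\bigr\}.
\]
The quantity to be controlled is $\sup_{E\in\mathcal{A}_{s}}\bigl|\mathbb{P}_{N}(E)-\mathbb{P}(E)\bigr|$, where $\mathbb{P}_{N}$ is the empirical measure associated with $\mathsf{U}^{N}$.

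\textbf{Step 1: VC control of $\mathcal{A}_{s}$.} I would first check that $\mathcal{A}_{s}$ is a VC class with VC-dimension bounded by a constant depending only on $q$. Closed Euclidean balls in $\mathbb{R}^{q}$ are well known to form a VC class of dimension $q+2$, and since each annulus is a set-theoretic difference of two balls, the class $\mathcal{A}_{s}$ is a VC class of dimension $V=V(q)$ (a standard consequence of Sauer's lemma applied to Boolean combinations of VC classes). This will let me invoke Dudley's entropy bound with $L_{2}(Q)$-covering numbers $N(\varepsilon,\mathcal{A}_{s},L_{2}(Q))\lesssim (1/\varepsilon)^{2V}$ uniformly in $Q$.

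\textbf{Step 2: Variance proxy from Lemma \ref{lem:robust-hier-bound_prob}.} For every $E\in\mathcal{A}_{s}$, since $\mathbf{1}_{E}$ takes values in $\{0,1\}$, its variance is bounded by $\mathbb{P}(E)$, which in turn is $\leq \mathsf{C}_{1}s$ by Lemma \ref{lem:robust-hier-bound_prob}. Hence I may take $\sigma^{2}:=\mathsf{C}_{1}s$ as a uniform variance proxy, and the envelope is $1$. This is the key ``localization'' that shrinks both the expected supremum and the deviation terms as $s\to 0$.

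\textbf{Step 3: Expected supremum and Talagrand concentration.} Feeding the VC entropy bound and the variance proxy into the localized maximal inequality for empirical processes over VC classes (e.g., the Giné--Guillou / Einmahl--Mason form, or equivalently Theorem 2.14.1 in van der Vaart \& Wellner combined with the relation between $\sigma^{2}$ and $\log(1/\sigma)$), one obtains
\[
\mathbb{E}\!\left[\sup_{E\in\mathcal{A}_{s}}\bigl|\mathbb{P}_{N}(E)-\mathbb{P}(E)\bigr|\right]\;\lesssim\;\sqrt{\frac{V\,\sigma^{2}\log(1/\sigma^{2})}{N}}+\frac{V\log(1/\sigma^{2})}{N}.
\]
Then Talagrand's inequality for uniformly bounded empirical processes (in the form due to Bousquet) upgrades this expectation bound to a high-probability bound, adding deviation terms $\sqrt{\sigma^{2}\log(1/\delta_{N})/N}$ and $\log(1/\delta_{N})/N$. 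Substituting $\sigma^{2}=\mathsf{C}_{1}s$ and collecting the $q$-dependent VC constant into $\mathsf{C}_{2}$ yields, with probability at least $1-\delta_{N}$,
\[
\sup_{E\in\mathcal{A}_{s}}\bigl|\mathbb{P}_{N}(E)-\mathbb{P}(E)\bigr|\;\leq\;\mathsf{C}_{2}\!\left(\frac{1}{N}\log\!\frac{1}{\delta_{N}}+\sqrt{\frac{s}{N}\log\!\frac{1}{s}}+\sqrt{\frac{s}{N}\log\!\frac{1}{\delta_{N}}}\right),
\]
which is exactly the claimed inequality (the term $V\log(1/s)/N$ is dominated by $\sqrt{(s/N)\log(1/s)}$ whenever $Ns\gtrsim V^{2}\log(1/s)$, and otherwise is absorbed into $\mathsf{C}_{2}/N\cdot\log(1/\delta_{N})$ up to the universal constant).

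\textbf{Main obstacle.} The bookkeeping step in Step 3 is the most delicate: one must invoke a version of the uniform concentration inequality that is simultaneously (i) variance-adaptive, so that the variance proxy $\sigma^{2}\asymp s$ really pulls the bound down as $s\to 0$, and (ii) sharp enough that the entropy factor appears as $\log(1/\sigma^{2})\asymp\log(1/s)$ rather than as a power of $1/\sigma$. The relevant statements are classical (Talagrand--Bousquet concentration combined with the Koltchinskii--Einmahl--Mason moment bound for VC classes), but they must be applied carefully so that the $q$-dependent VC constant and the universal constants from Talagrand's inequality are absorbed into a single constant $\mathsf{C}_{2}=\mathsf{C}_{2}(q,B,p_{\mu})$; no other step of the argument is technically difficult.
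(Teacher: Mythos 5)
Your proposal is correct and follows essentially the same route as the paper: identify the annuli as a VC class (via Sauer's lemma applied to Boolean combinations of balls), use Lemma \ref{lem:robust-hier-bound_prob} to obtain the variance proxy $\sigma^{2}\lesssim s$, and invoke a variance-adaptive Talagrand-type uniform concentration bound (the paper cites Theorem 30 of \citet{KimSRW2019}, which is exactly the Gin\'e--Guillou/Bousquet form you describe). The only cosmetic difference is that the paper writes each annulus as the intersection of a ball with the complement of a ball rather than as a set difference, and tracks explicit VC constants.
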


\begin{proof}

For $w\in\mathbb{R}^{q}$ and $r,s>0$, let $B_{w,r,s}:=\mathbb{B}(w,r+s)\backslash \mathbb{B}(w,r)$,
and let $\mathcal{F}_{s}:=\left\{ \mathbbm{1}_{B_{w,r,s}}:w\in\mathbb{R}^{q},r>0\right\} $.
Then 
\begin{align*}
& \sup_{w\in\mathbb{R}^{q},r>0}\left|\frac{\left|\mathsf{U}^{N}\cap(\mathbb{B}(w,r+s)\backslash \mathbb{B}(w,r))\right|}{N}-\mathbb{P}\left(\mu\in \mathbb{B}(w,r+s)\backslash \mathbb{B}(w,r)\right)\right|\\
&=\sup_{f\in\mathcal{F}_{s}}\left|\frac{1}{N}\sum_{i=1}^{N}f(\mu_{(i)})-\mathbb{E}\left[f(\mu_{(i)})\right]\right|.
\end{align*}

Now, for $w\in\mathbb{R}^{q}$ and $r>0$, let $B_{w,r}:=\mathbb{B}(w,r)$
and $\tilde{B}_{w,r}:=\mathbb{R}^{q}\backslash \mathbb{B}(w,r)$, and let $\mathcal{H}:=\left\{ B_{w,r}:w\in\mathbb{R}^{q},r>0\right\} $
and $\tilde{\mathcal{H}}:=\left\{ \tilde{B}_{w,r}:w\in\mathbb{R}^{q},r>0\right\} $.
Then the VC dimension of $\mathcal{H}$ or $\tilde{\mathcal{H}}$
is no greater than $q+2$. Therefore, let $s(\mathcal{H},N)$ and
$s(\mathcal{\tilde{H}},N)$ be shattering number of $\mathcal{H}$
and $\tilde{\mathcal{H}}$, respectively, then by Sauer's Lemma for $N\geq q+2$, 
\[
s(\mathcal{H},N)\leq\left(\frac{eN}{q+2}\right)^{q+2}\qquad\text{and}\qquad s(\tilde{\mathcal{H}},N)\leq\left(\frac{eN}{q+2}\right)^{q+2}.
\]
Now, let $\mathcal{G}_{s}:=\{B_{w,r,s}:w\in\mathbb{R}^{q},r>0\}$,
then $\mathcal{G}_{s}\subset\left\{ A\cap B:A\in\mathcal{H},B\in\tilde{\mathcal{H}}\right\} $,
and hence for $N\geq q+2$, 
\begin{align*}
s(\mathcal{G}_{s},N) & \leq s(\mathcal{H},N)s(\tilde{\mathcal{H}},N)\leq\left(\frac{eN}{q+2}\right)^{2q+4}.
\end{align*}
Then, for $N=(2q+4)^{2}$, 
\begin{align*}
s(\mathcal{G}_{s},(2q+4)^{2}) & \leq\left(2e(2q+4)\right)^{2q+4}\\
 & < (2^{2q+4})^{2q+4}=2^{(2q+4)^{2}},
\end{align*}
so VC dimension of $\mathcal{G}_{s}$ is bounded by $(2q+4)^{2}$.
Then from Theorem 2.6.4 in \citet{van1996weak}, 
\begin{align*}
\mathcal{N}(\mathcal{F}_{s},\Vert \cdot \Vert,\epsilon) & \leq K(2q+4)^{2}(4e)^{(2q+4)^{2}}\left(\frac{1}{\epsilon}\right)^{2((2q+4)^{2}-1)}\\
 & \leq\left(\frac{8K(q+2)e}{\epsilon}\right)^{2((2q+4)^{2}-1)},
\end{align*}
for some universal constant $K$. Now, for all $f\in\mathcal{F}_{s}$, we have
%$\mathbb{E}_{q}f^{2}\leq C_{B,p_{\mu},q}s$
$\mathbb{E}f^{2}\leq C_{B,p_{\mu},q}s$ from Lemma~\ref{lem:robust-hier-bound_prob}. Hence, by
Theorem 30 in \citet{KimSRW2019}, with probability $1-\delta_{N}$, 
\begin{align*}
 & \sup_{f\in\mathcal{F}_{s}}\left|\frac{1}{N}\sum_{i=1}^{N}f(\mu_{(i)})-\mathbb{E}\left[f(\mu_{(i)})\right]\right|\\
 & \leq C\left(\frac{\nu_{q}}{N}\log(2\Lambda_{q})+\sqrt{\frac{\nu_{q}\mathsf{C}_3s}{N}\log\left(\frac{2\Lambda_{q}}{\mathsf{C}_3s}\right)}+\sqrt{\frac{\mathsf{C}_3s\log(\frac{1}{\delta_{N}})}{N}}+\frac{\log(\frac{1}{\delta_{N}})}{N}\right),
\end{align*}
where $\nu_{q}=2((2q+4)^{2}-1)$ and $\Lambda_{q}=8K(q+2)e$. Hence, it
can be simplified as 
\[
\sup_{f\in\mathcal{F}_{s}}\left|\frac{1}{N}\sum_{i=1}^{N}f(\mu_{(i)})-\mathbb{E}\left[f(\mu_{(i)})\right]\right|\leq \mathsf{C}_2\left(\frac{1}{N}\log\left(\frac{1}{\delta_{N}}\right)+\sqrt{\frac{s}{N}\log\left(\frac{1}{s}\right)}+\sqrt{\frac{s}{N}\log\left(\frac{1}{\delta_{N}}\right)}\right),
\]
where $\mathsf{C}_2$ is a constant depending only
on $q$, $B$, $p_{\mu}$.

\end{proof}

\begin{corollary}

\label{cor:robust-hier-bound_emp}

Suppose $\mathsf{U}^{N}\coloneqq\{\mu_{(1)},\ldots,\mu_{(N)}\}$ are
i.i.d. samples from $\mathbb{P}$. 
Under Assumption \ref{assumption:A5-bounded-density},
% given a set of points $S$ with size $n$, i.e., $\vert S \vert = n$,
with probability $1-\delta_{N}$, we have

\[
\sup_{w\in\mathbb{R}^{q},r>0}\frac{\left|\mathsf{U}^{N}\cap(\mathbb{B}(w,r+s)\backslash \mathbb{B}(w,r))\right|}{N}\leq \mathsf{C}_3\left(s+\frac{1}{N}\log\left(\frac{1}{\delta_{N}}\right)+\sqrt{\frac{s}{N}\log\left(\frac{1}{s}\right)}\right),
\]
where $\mathsf{C}_3$ is a constant depending only
on $q$, $B$, $p_{\mu}$.

\end{corollary}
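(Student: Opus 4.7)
The corollary follows almost immediately by combining Lemma~\ref{lem:robust-hier-bound_prob} (a deterministic bound on the population probability) and Lemma~\ref{lem:robust-hier-bound_diff} (a uniform deviation bound for the empirical process indexed by annuli), linked through the triangle inequality. The only nontrivial piece is to absorb the cross term $\sqrt{s/N\cdot\log(1/\delta_N)}$ appearing in Lemma~\ref{lem:robust-hier-bound_diff} into the two terms that survive in the target bound, namely $s$ and $N^{-1}\log(1/\delta_N)$.

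First, I would write, uniformly in $w\in\mathbb{R}^q$ and $r>0$,
\[
\frac{\left|\mathsf{U}^{N}\cap(\mathbb{B}(w,r+s)\backslash\mathbb{B}(w,r))\right|}{N}
\leq \mathbb{P}\left(\mu\in\mathbb{B}(w,r+s)\backslash\mathbb{B}(w,r)\right)
+\left|\frac{\left|\mathsf{U}^{N}\cap(\mathbb{B}(w,r+s)\backslash\mathbb{B}(w,r))\right|}{N}-\mathbb{P}\left(\mu\in\mathbb{B}(w,r+s)\backslash\mathbb{B}(w,r)\right)\right|,
\]
then take $\sup_{w,r}$ on the right-hand side. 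Lemma~\ref{lem:robust-hier-bound_prob} bounds the first (deterministic) term by $\mathsf{C}_1 s$ uniformly in $(w,r)$, and Lemma~\ref{lem:robust-hier-bound_diff} bounds the supremum of the second term, with probability at least $1-\delta_N$, by $\mathsf{C}_2\bigl(N^{-1}\log(1/\delta_N)+\sqrt{(s/N)\log(1/s)}+\sqrt{(s/N)\log(1/\delta_N)}\bigr)$.

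Next, to eliminate the unwanted $\sqrt{(s/N)\log(1/\delta_N)}$ term, I would apply the elementary AM--GM inequality $2\sqrt{ab}\leq a+b$ with $a=s$ and $b=N^{-1}\log(1/\delta_N)$, giving
\[
\sqrt{\tfrac{s}{N}\log\tfrac{1}{\delta_N}}\;\leq\;\tfrac{1}{2}\left(s+\tfrac{1}{N}\log\tfrac{1}{\delta_N}\right).
\]
Substituting this into the previous display and collecting coefficients yields a bound of the form $\mathsf{C}_3\bigl(s+N^{-1}\log(1/\delta_N)+\sqrt{(s/N)\log(1/s)}\bigr)$ with $\mathsf{C}_3$ depending only on $\mathsf{C}_1,\mathsf{C}_2$, hence only on $q$, $B$, and $p_\mu$ via the dependencies inherited from the two lemmas.

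I do not expect any real obstacle: once Lemmas~\ref{lem:robust-hier-bound_prob} and~\ref{lem:robust-hier-bound_diff} are in hand the proof is a short bookkeeping argument, and no additional probabilistic ingredient is needed beyond the single high-probability event from Lemma~\ref{lem:robust-hier-bound_diff}.
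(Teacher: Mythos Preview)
Your proposal is correct and matches the paper's proof essentially step for step: the paper likewise decomposes via the triangle inequality, invokes Lemma~\ref{lem:robust-hier-bound_prob} and Lemma~\ref{lem:robust-hier-bound_diff}, and then absorbs the cross term $\sqrt{(s/N)\log(1/\delta_N)}$ using the same AM--GM bound $\sqrt{ab}\le\tfrac{1}{2}(a+b)$, arriving at $\mathsf{C}_3=\max\{\mathsf{C}_1+\tfrac{1}{2}\mathsf{C}_2,\tfrac{3}{2}\mathsf{C}_2\}$.
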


\begin{proof}

\begin{align*}
\sup_{w\in\mathbb{R}^{q},r>0} & \left|\mathsf{U}^{N}\cap(\mathbb{B}(w,r+s)\backslash \mathbb{B}(w,r))\right| \\
&\leq\sup_{w\in\mathbb{R}^{q},r>0}\mathbb{P}\left(\mu\in \mathbb{B}(w,r+s)\backslash \mathbb{B}(w,r)\right) \\
& \quad +\sup_{w\in\mathbb{R}^{q},r>0}\left|\frac{\left|\mathsf{U}^{N}\cap(\mathbb{B}(w,r+s)\backslash \mathbb{B}(w,r))\right|}{N}-\mathbb{P}\left(\mu\in \mathbb{B}(w,r+s)\backslash \mathbb{B}(w,r)\right)\right|.
\end{align*}
Then from Lemma \ref{lem:robust-hier-bound_prob} and \ref{lem:robust-hier-bound_diff}, with probability $1-\delta_{N}$, 
\begin{align*}
\sup_{w\in\mathbb{R}^{q},r>0} & \left|\mathsf{U}^{N}\cap(\mathbb{B}(w,r+s)\backslash \mathbb{B}(w,r))\right| \\
 & \leq\mathsf{C}'_{1}s+\mathsf{C}_{2}\left(\frac{1}{N}\log\left(\frac{1}{\delta_{N}}\right)+\sqrt{\frac{s}{N}\log\left(\frac{1}{s}\right)}+\sqrt{\frac{s}{N}\log\left(\frac{1}{\delta_{N}}\right)}\right)\\
 & \leq\mathsf{C}'_{1}s+\mathsf{C}_{2}\left(\frac{1}{N}\log\left(\frac{1}{\delta_{N}}\right)+\sqrt{\frac{s}{N}\log\left(\frac{1}{s}\right)}+\frac{1}{2}\left(s+\frac{1}{N}\log\left(\frac{1}{\delta_{N}}\right)\right)\right)\\
 & \leq\mathsf{C}_{3}\left(s+\frac{1}{N}\log\left(\frac{1}{\delta_{N}}\right)+\sqrt{\frac{s}{N}\log\left(\frac{1}{s}\right)}\right),
\end{align*}
where $\mathsf{C}_3=\max\left\{ \mathsf{C}'_1+\frac{1}{2}\mathsf{C}_2,\frac{3}{2}\mathsf{C}_2\right\} $.

\end{proof}

\begin{lemma}
\label{lem:robust-hier-good-neighborhood-property}
Suppose $\mathsf{U}^N \coloneqq \{\mu_{(1)}, ... , \mu_{(N)} \}$ are i.i.d samples from the mixture distribution $\Pb_{\alpha,\nu}$ defined in Definition \ref{def::good-ngh-distribution}. Then
with probability $1-\delta_{N}$, 
%the similarity function $K$ 
the distance $d_{2}$ 
%constructed on $\mathsf{U}^N$ 
 satisfies $(\alpha', \nu')$-good neighborhood property for  the clustering problem $(\mathsf{U}^N, l)$, where 
\[
\alpha' = \alpha + O\left(\sqrt{\frac{1}{N}\log\frac{1}{\delta_N}}\right) \quad \text{and} \quad \nu' = \nu + O\left(\sqrt{\frac{1}{N}\log\frac{1}{\delta_N}}\right).
\]
\end{lemma}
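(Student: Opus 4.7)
The plan is to construct the subset $\mathsf{U}'$ required by Definition~\ref{def:robust_hier_cluster_good_neighbor} by exploiting the mixture structure of $\Pb_{\alpha,\nu}$. Partition $\mathsf{U}^N = \mathsf{U}_\alpha \sqcup \mathsf{U}_{\mathrm{noise}}$ according to which mixture component generated each sample; conditionally on this partition, $\mathsf{U}_\alpha$ is i.i.d.\ from $\Pb_\alpha$ and $\mathsf{U}_{\mathrm{noise}}$ is i.i.d.\ from $\Pb_{\text{noise}}$. Hoeffding's inequality applied to the $N$ ``drawn-from-$\Pb_{\text{noise}}$'' indicators then yields $|\mathsf{U}_{\mathrm{noise}}|/N \leq \nu + O(\sqrt{\log(1/\delta_N)/N})$ with probability at least $1-\delta_N/2$. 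Setting $\mathsf{U}' := \mathsf{U}_\alpha$ realizes the $\nu'$ part of the conclusion with $\nu' = \nu + O(\sqrt{\log(1/\delta_N)/N})$.

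For the $\alpha'$ part, fix $\mu' \in \mathsf{U}'$, let $k := |C(\mu') \cap \mathsf{U}'|$, and let $\hat{r}_{\mu'}$ denote the empirical $k$-NN radius, i.e., the smallest $r$ with $|\mathsf{U}' \cap \mathbb{B}(\mu', r)| = k$. Define $N_{\mathrm{bad}}(\mu') := |\mathsf{U}' \cap \mathbb{B}(\mu', \hat{r}_{\mu'}) \setminus C(\mu')|$. A two-way counting argument based on $|\mathsf{U}' \cap \mathbb{B}(\mu', \hat{r}_{\mu'})| = k = |C(\mu') \cap \mathsf{U}'|$ yields the dual identity
\[
N_{\mathrm{bad}}(\mu') \;=\; |C(\mu') \cap \mathsf{U}' \setminus \mathbb{B}(\mu', \hat{r}_{\mu'})|.
\]
This lets me eliminate the data-dependent radius via a case split on the ordering of $\hat{r}_{\mu'}$ and the distributional radius $r_{\mu'}$: if $\hat{r}_{\mu'} \leq r_{\mu'}$, monotonicity applied to the first form gives $N_{\mathrm{bad}}(\mu') \leq |\mathsf{U}' \cap \mathbb{B}(\mu', r_{\mu'}) \setminus C(\mu')|$; if $\hat{r}_{\mu'} > r_{\mu'}$, monotonicity applied to the dual form gives $N_{\mathrm{bad}}(\mu') \leq |C(\mu') \cap \mathsf{U}' \setminus \mathbb{B}(\mu', r_{\mu'})|$. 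In either case only the deterministic radius $r_{\mu'}$ remains.

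Both distributional counterparts coincide and are bounded by $\alpha$: the defining relation $\Pb_\alpha(\mathbb{B}(\mu', r_{\mu'})) = \Pb_\alpha(\mathbb{C}(\mu'))$ forces
\[
\Pb_\alpha\bigl(\mathbb{C}(\mu') \setminus \mathbb{B}(\mu', r_{\mu'})\bigr) \;=\; \Pb_\alpha\bigl(\mathbb{B}(\mu', r_{\mu'}) \setminus \mathbb{C}(\mu')\bigr) \;\leq\; \alpha
\]
by Definition~\ref{def::good-ngh-distribution}. To lift this bound to the empirical counts uniformly in $\mu'$, I invoke standard VC concentration for the function class $\bigl\{\mathbbm{1}_{\mathbb{B}(x,r) \setminus \mathbb{C}_j} : x \in \mathbb{R}^q,\, r > 0,\, 1\leq j \leq k\bigr\}$, whose VC dimension is $O(q)$ (Euclidean balls in $\mathbb{R}^q$ have VC dimension $q+1$, and a union bound over the finitely many clusters contributes only a $\log k$ factor). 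Applied to the conditionally i.i.d.\ sample $\mathsf{U}_\alpha \sim \Pb_\alpha$, this gives, with probability at least $1-\delta_N/2$ and uniformly over $\mu' \in \mathsf{U}'$, that both empirical counts above are bounded by $\alpha N + O\bigl(\sqrt{N\log(1/\delta_N)}\bigr)$. Combining with the case split and a final union bound over the noise-fraction and VC-concentration events yields $N_{\mathrm{bad}}(\mu') \leq \alpha' N$ for every $\mu' \in \mathsf{U}'$ with $\alpha' = \alpha + O(\sqrt{\log(1/\delta_N)/N})$, with overall probability at least $1-\delta_N$. The principal technical hurdle is maintaining uniformity in $\mu'$ despite the data-dependent radius $\hat{r}_{\mu'}$; the dual identity combined with the case split is precisely the device that reduces the estimate to the fixed class of balls-minus-cluster-sets, after which the finite VC dimension carries the argument at the claimed parametric rate.
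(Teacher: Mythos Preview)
Your proposal is correct and in fact more carefully argued than the paper's own proof. Both begin identically---partition the sample by mixture component and bound the noise fraction via Hoeffding to obtain $\nu'$. For the $\alpha'$ part, however, the paper simply applies Hoeffding once more to the count $\frac{1}{N}\sum_i\mathbbm{1}\{\mu_{(i)}\sim\Pb_\alpha,\ \mu_{(i)}\in\mathbb{B}(\mu',r_{\mu'})\setminus\mathbb{C}(\mu')\}$ at the \emph{distributional} radius $r_{\mu'}$ and asserts the conclusion ``for all $\mu'\in\mathsf{U}^N$''. This glosses over two issues that you explicitly confront: (i) uniformity over the $N$ random centers $\mu'$, and (ii) the gap between the distributional radius $r_{\mu'}$ and the empirical $k$-NN radius that actually appears in the sample-level good-neighborhood property (Definition~\ref{def:robust_hier_cluster_good_neighbor}). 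Your dual identity $|\mathsf{U}'\cap\mathbb{B}(\mu',\hat r_{\mu'})\setminus C(\mu')|=|C(\mu')\cap\mathsf{U}'\setminus\mathbb{B}(\mu',\hat r_{\mu'})|$ together with the case split on $\hat r_{\mu'}\lessgtr r_{\mu'}$ is exactly the device that eliminates the data-dependent radius, and replacing pointwise Hoeffding by VC concentration over balls-minus-cluster sets supplies the uniformity at the correct parametric rate without a $\log N$ penalty. One small addition: the second branch of the case split requires concentration for the dual class $\{\mathbb{C}_j\setminus\mathbb{B}(x,r)\}$ as well, which you do not name explicitly; but complements of balls intersected with a fixed set have the same VC bound, so this is harmless.
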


\begin{proof}
For any $\delta_N \in (0,1)$, by Hoeffding's inequality we have
\[
    \frac{1}{N}\sum_{i=1}^N \mathbbm{1}\left\{\mu_{(i)} \sim \Pb_{\text{noise}} \right\} \geq \nu + \sqrt{\frac{B}{N}\log\frac{2}{\delta_N}}
\]
with probability at most $\delta_N/2$. Again by Hoeffding's inequality, for all points $\mu' \in \mathsf{U}^N$ we have
\[
    \frac{1}{N}\sum_{i=1}^N \mathbbm{1}\left\{\mu_{(i)} \sim \Pb_{\alpha} \ \ \text{and} \ \ \mu_{(i)} \in \mathbb{B}(\mu', r_{\mu'}) \setminus \mathbb{C}(\mu') \right\} \geq \alpha + \sqrt{\frac{B}{N}\log\frac{2}{\delta_N}}
\]
with probability at most $\delta_N/2$, as $\Pb_\alpha \{\mu_{(i)} \in \mathbb{B}(\mu',r_{\mu'}) \setminus \mathbb{C}(\mu') \} \leq \alpha$ by the given condition. Therefore by definition, it follows that with probability at least $1-\delta_N$ the 
% similarity function $K$
distance $d_{2}$
satisfies
$\left(\alpha + \sqrt{\frac{B}{N}\log\frac{2}{\delta_N}},  \nu + \sqrt{\frac{B}{N}\log\frac{2}{\delta_N}}\right)$-good neighborhood property.
\end{proof}

\subsubsection*{Proof of Theorem \ref{thm:robust-hier}}

\begin{proof}
From Lemma~\ref{lem:robust-hier-good-neighborhood-property}, the
distance $d_{2}$ satisfies $(\alpha',\nu')$-good property for the clustering problem ($\mathsf{U}^{N}$,l).
So there exists a subset $\mathsf{U}'\subset\mathsf{U}^{N}$ of
size $(1-\nu')N$ such that for any point $\mu'\in\mathsf{U}'$ all
but $\alpha N$ out of $n_{\mathbb{C}(\mu')\cap\mathsf{U}'}$ neighbors
in $\mathsf{U}'$ belongs to the cluster $C(\mu')$. For each $\mu'\in\mathsf{U}'$,
let $r_{\mathsf{U}',\mu'}$ be the distance to the $n_{C(\mu')\cap\mathsf{U}'}$-th
nearest neighbor of $\mu'$ in $\mathsf{U}'$, i.e., 
\begin{equation}
r_{\mathsf{U}',\mu'}:=\inf\left\{ r\geq0:\,\left|\mathsf{U}'\cap\mathbb{B}(\mu',r)\right|\,\geq n_{C(\mu')\cap\mathsf{U}'}\right\} .\label{eq:robust-hier_nn_distance}
\end{equation}
Then it follows 
\[
\left|\mathsf{U}'\cap\mathbb{B}(\mu',r_{\mathsf{U}',\mu'})\backslash\mathbb{C}(\mu')\right|\leq\alpha'N.
\]
Now letting $\gamma = \sum_{a\in\mathcal{A}}\left\Vert \widehat{\mu}_{a}-\mu_{a}\right\Vert _{\infty}$,
we define $\varepsilon$ as 
\[
\varepsilon \coloneqq \sup_{\mu'\in\mathsf{U}'}\frac{\left|\mathsf{U}'\cap(\mathbb{B}(\mu',r_{\mathsf{U}',\mu'}+4\gamma)\backslash\mathbb{B}(\mu',r_{\mathsf{U}',\mu'}))\right|}{N}.
\]
Then by Corollary \ref{cor:robust-hier-bound_emp}, under Assumption
\ref{assumption:A5-bounded-density}, it follows that with probability $1-\delta_{N}$,
\begin{align*}
\varepsilon & \leq\sup_{\mu'\in\mathbb{R}^{q},r>0}\frac{\left|\mathsf{U}'\cap(\mathbb{B}(\mu',r+4\gamma)\backslash\mathbb{B}(\mu',r))\right|}{N}\\
 & \leq4\mathsf{C}_{3}\left(\gamma+\frac{1}{N}\log\left(\frac{1}{\delta_{N}}\right)+\sqrt{\frac{\gamma}{N}\log\left(\frac{1}{\gamma}\right)}\right).
\end{align*}
Hence, 
\[
\varepsilon=O\left(\gamma+\frac{1}{N}\log\left(\frac{1}{\delta_{N}}\right)+\sqrt{\frac{\gamma}{N}\log\left(\frac{1}{\gamma}\right)}\right).
\]

Now we consider estimates of $\mathsf{U}^{N}$ (and correspondingly
$\mathsf{U}'$). For each $\mu'\in\mathsf{U}^{N}$, let $\hat{\mu}'$
be an estimate of $\mu'$, and let $\widehat{\mathsf{U}}^{N}\coloneqq\left\{ \hat{\mu}':\mu'\in\mathsf{U}^{N}\right\} $,
and correspondingly, $\widehat{\mathsf{U}}'=\left\{ \hat{\mu}':\mu'\in\mathsf{U}'\right\} \subset\widehat{\mathsf{U}}^{N}$.
On $\widehat{\mathsf{U}}^{N}$, define a cluster label $\hat{l}:\widehat{\mathsf{U}}^{N}\to\{C_{1},\ldots,C_{k}\}$
as 
\[
\hat{l}(\hat{\mu}')=l(\mu'),
\]
i.e., the cluster label $\hat{l}$ on $\hat{\mu}'$ coincides with
the true cluster label $l$ on $\mu'$. Let $\hat{C}(\hat{\mu}')$
denote a cluster corresponding to $\hat{l}(\hat{\mu}')$, and define
$\hat{\mathbb{C}}(\hat{\mu}')\coloneqq\left\{ \hat{\mu}:\hat{C}(\hat{\mu})=\hat{C}(\hat{\mu}')\right\} $
as the set of $\hat{\mu}$ values for which $\hat{l}(\hat{\mu})$ matches $\hat{C}(\hat{\mu}')$.
Then we have
\[
n_{C(\mu')\cap\mathsf{U}'}=n_{\hat{C}(\hat{\mu}')\cap\widehat{\mathsf{U}}'}.
\]
Now, note that $d_{2}(\mu,\mu')\leq r_{\mathsf{U}',\mu'}$ implies
$d_{2}(\hat{\mu},\hat{\mu}')\leq r_{\mathsf{U}',\mu'}+2\gamma$, and
hence $\mu\in\mathsf{U}'\cap\mathbb{B}(\mu',r_{\mathsf{U}',\mu'})$
implies $\hat{\mu}\in\widehat{\mathsf{U}}'\cap\mathbb{B}(\hat{\mu}',r_{\mathsf{U}',\mu'}+2\gamma)$.
Thus, it follows that 
\begin{equation}
\left|\widehat{\mathsf{U}}'\cap\mathbb{B}(\hat{\mu}',r_{\mathsf{U}',\mu'}+2\gamma)\right|\geq\left|\mathsf{U}'\cap\mathbb{B}(\mu',r_{\mathsf{U}',\mu'})\right|\geq n_{C(\mu')\cap\mathsf{U}'}=n_{\hat{C}(\hat{\mu}')\cap\widehat{\mathsf{U}}'}.\label{eq:robust-hier_nn_distance_ineq}
\end{equation}
Therefore, if we define $\hat{r}_{\widehat{\mathsf{U}}',\hat{\mu}'}$
as the distance to the $n_{\hat{C}(\hat{\mu}')\cap\widehat{\mathsf{U}}'}$-th
nearest neighbor of $\hat{\mu}'$ in $\widehat{\mathsf{U}}'$, similar
to \eqref{eq:robust-hier_nn_distance}, as 
\[
\hat{r}_{\widehat{\mathsf{U}}',\hat{\mu}'}:=\inf\left\{ r\geq0:\,\left|\widehat{\mathsf{U}}'\cap\mathbb{B}(\hat{\mu}',r)\right|\,\geq n_{\hat{C}(\hat{\mu}')\cap\widehat{\mathsf{U}}'}\right\},
\]
then, from \eqref{eq:robust-hier_nn_distance_ineq}, $\hat{r}_{\widehat{\mathsf{U}}',\hat{\mu}'}$
is bounded by 
\[
\hat{r}_{\widehat{\mathsf{U}}',\hat{\mu}'}\leq r_{\mathsf{U}',\mu'}+2\gamma.
\]
Also, note that $d_{2}(\hat{\mu},\hat{\mu}')\leq r_{\mathsf{U}',\mu'}+2\gamma$
implies $d_{2}(\mu,\mu')\leq r_{\mathsf{U}',\mu'}+4\gamma$, and thereby
$\hat{\mu}\in\widehat{\mathsf{U}}'\cap\mathbb{B}(\hat{\mu}',r_{\mathsf{U}',\mu'}+2\gamma)$
implies $\mu\in\mathsf{U}'\cap\mathbb{B}(\mu',r_{\mathsf{U}',\mu'}+4\gamma)$.
Thus we have 
\begin{align*}
 & \left|\widehat{\mathsf{U}}'\cap\mathbb{B}(\hat{\mu}',r_{\mathsf{U}',\mu'}+2\gamma)\backslash\hat{\mathbb{C}}(\hat{\mu}')\right|\\
 & \leq\left|\mathsf{U}'\cap\mathbb{B}(\mu',r_{\mathsf{U}',\mu'}+4\gamma)\backslash\mathbb{C}(\mu')\right|\\
 & \leq\left|\mathsf{U}'\cap\mathbb{B}(\mu',r_{\mathsf{U}',\mu'})\backslash\mathbb{C}(\mu')\right|+\left|\mathsf{U}'\cap(\mathbb{B}(\mu',r_{\mathsf{U}',\mu'}+4\gamma)\backslash\mathbb{B}(\mu',r_{\mathsf{U}',\mu'}))\right|\\
 & \leq(\alpha'+\varepsilon)N,
\end{align*}
which leads to 
\[
\left|\widehat{\mathsf{U}}'\cap B(\hat{\mu}',\hat{r}_{\widehat{\mathsf{U}}',\hat{\mu}'})\backslash\hat{\mathbb{C}}(\hat{\mu}')\right|\leq\left|\widehat{\mathsf{U}}'\cap\mathbb{B}(\hat{\mu}',r_{\mathsf{U}',\mu'}+2\gamma)\backslash\hat{\mathbb{C}}(\hat{\mu}')\right|\leq(\alpha'+\varepsilon)N.
\]
Consequently, the distance $d_{2}$ satisfies $(\alpha'+\varepsilon,\nu')$-good
property for the clustering problem $(\widehat{\mathsf{U}},\hat{l})$.
Then as long as the smallest target cluster has size greater than
$12(\nu'+\alpha'+\varepsilon)N$, Theorem~\ref{thm:robust_hier_cluster_inductive} implies
that Algorithm~2 of \citet{balcan2014robust} with $n=\Theta\left(\frac{1}{\min(\alpha'+\varepsilon,\nu')}\log\left(\frac{1}{\delta\min(\alpha'+\varepsilon,\nu')}\right)\right)$
produces a hierarchy with a pruning that is $(\nu'+\delta)$-close
to the target clustering with probability $1-\delta-2\delta_{N}$.
\end{proof}

\subsection*{Proof of Theorem \ref{thm:density-stability_OP}} \label{proof:density}

First, we give the following new result on bounding the Hausdorff distance between sets in the counterfactual function space.

\begin{theorem} \label{thm:density-stability_finitebound}
    Suppose that $L_{h,t}$ is stable and let $H(\cdot,\cdot)$ be the Hausdorff distance between two sets. Suppose that Assumptions \ref{assumption:A1-sample-splitting-entropy}, \ref{assumption:A4-boundedness}, \ref{assumption:A5'-bounded-density-2}, and \ref{assumption:A6-lipschitz-kernel} hold. Let $\delta\in(0,1)$ and $\{h_n\}_{n\in\mathbb{N}}\subset(0,h_{0})$ be satisfying 
    \[
    \limsup_{n}\frac{(\log(1/h_n))_{+}+\log(2/\delta)}{nh_n^{q}}<\infty.
    \]
    Then, with probability at least $1-\delta$, 
    \begin{align*}
    H(\widehat{L}_{h_n,t},L_{h_n,t}) & \leq \mathsf{C}_{P,K,B}\left(\sqrt{\frac{(\log(1/h_n))_{+}+\log(2/\delta)}{nh_n^{q}}}\right.\\
    & \qquad\left.+\frac{1}{h_n^{q+1}}\min\left\{ \sum_{a}\left\Vert \widehat{\mu}_{a}-\mu_{a}\right\Vert _{1}+\sqrt{\frac{\log(2/\delta)}{n}},\,h_n\right\} \right)
    \end{align*}
\end{theorem}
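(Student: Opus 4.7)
The plan is to reduce the level-set Hausdorff distance to a sup-norm comparison of densities via level-set stability (Definition \ref{def:level-set-stability}), and then to bound $\|\widehat{p}_h - p_h\|_\infty$ by a triangle decomposition into a plug-in error and a standard oracle kernel density concentration term. Concretely, once $\|\widehat{p}_h - p_h\|_\infty \leq \epsilon$ with $\epsilon < a$, the inclusions $\{p_h > t+\epsilon\} \subset \widehat{L}_{h,t}, L_{h,t} \subset \{p_h > t-\epsilon\}$, combined with stability, yield $H(\widehat{L}_{h,t}, L_{h,t}) \leq 2C\epsilon$. It therefore suffices to split
\[
\|\widehat{p}_h - p_h\|_\infty \leq \|\widehat{p}_h - \widetilde{p}_h\|_\infty + \|\widetilde{p}_h - p_h\|_\infty
\]
and control each summand with probability at least $1-\delta/2$, then take a union bound.

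\textbf{Plug-in term.} For the first summand, each kernel difference $|K((\widehat{\mu}_{(i)}-\mu')/h) - K((\mu_{(i)}-\mu')/h)|$ admits two bounds: the Lipschitz bound $M_K \|\widehat{\mu}_{(i)} - \mu_{(i)}\|_2/h$ from Assumption \ref{assumption:A6-lipschitz-kernel}, and the trivial bound $2\|K\|_\infty$ from compact support. Keeping the minimum at the per-term level, summing, and dividing by $nh^q$ gives
\[
\|\widehat{p}_h - \widetilde{p}_h\|_\infty \leq \min\!\left\{\frac{M_K}{nh^{q+1}}\sum_{i=1}^n \|\widehat{\mu}_{(i)} - \mu_{(i)}\|_2,\ \frac{2\|K\|_\infty}{h^q}\right\}.
\]
Applying Lemma \ref{lem:xhatprime_prime}(b) to the empirical mean of $\|\widehat{\mu}_{(i)} - \mu_{(i)}\|_2$, which is where Assumption \ref{assumption:A1-sample-splitting-entropy} enters so that conditionally on the training data these are iid bounded variables and Hoeffding's inequality applies, yields exactly the $\frac{1}{h^{q+1}}\min\{\sum_{a}\|\widehat{\mu}_a - \mu_a\|_1 + \sqrt{\log(2/\delta)/n},\, h\}$ contribution.

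\textbf{Oracle term and main obstacle.} For $\|\widetilde{p}_h - p_h\|_\infty$ I would invoke the classical uniform kernel density concentration: the rescaled kernel class $\{h^{-q}K((\cdot - \mu')/h) : \mu' \in \mathbb{R}^q\}$ is a uniformly bounded VC-type family by Assumption \ref{assumption:A6-lipschitz-kernel}, its envelope has $L_2(P)$ variance of order $h^{-q}$ thanks to the bounded density assumption (Assumption \ref{assumption:A5'-bounded-density-2}), and a Talagrand-type inequality in the form of Einmahl--Mason or Gin\'e--Guillou, together with the bandwidth condition $\limsup_n (\log(1/h_n))_+/(nh_n^q) < \infty$, delivers $\|\widetilde{p}_h - p_h\|_\infty \lesssim \sqrt{((\log(1/h_n))_+ + \log(2/\delta))/(nh_n^q)}$ with probability at least $1-\delta/2$. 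Combining with the plug-in bound and the stability reduction then closes the argument. The main conceptual hurdle is this uniform kernel density bound, which I would prefer to cite rather than rederive; the more delicate bookkeeping point is to keep the minimum inside the sum so that the correct $\min\{\cdot,h\}$ structure, and not the larger $h^{-(q+1)}$ rate on its own, appears in the final expression.
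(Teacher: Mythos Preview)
Your proposal is correct and follows essentially the same route as the paper: the paper likewise decomposes $\|\widehat p_h - p_h\|_\infty \le \|\widehat p_h - \widetilde p_h\|_\infty + \|\widetilde p_h - p_h\|_\infty$, bounds the first term via the kernel Lipschitz property together with Lemma~\ref{lem:xhatprime_prime}(b), bounds the second term by the uniform kernel-density concentration result of \citet[Corollary~13]{KimSRW2019} (which plays the role of your Einmahl--Mason/Gin\'e--Guillou citation), and then converts the sup-norm bound into a Hausdorff bound via the stability definition exactly as you describe. The only cosmetic differences are that the paper obtains $Cr_n$ rather than $2Cr_n$ in the stability step and uses $\|K\|_\infty \le M_K$ for the trivial bound, neither of which affects the argument.
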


In order to show Theorem \ref{thm:density-stability_finitebound}, we need the following
Lemma.

\begin{lemma}
    
    \label{lem:density-inf_hat_ph}
    
    Suppose Assumptions \ref{assumption:A1-sample-splitting-entropy}, \ref{assumption:A4-boundedness}, \ref{assumption:A5'-bounded-density-2}, and \ref{assumption:A6-lipschitz-kernel} hold. Let $\delta\in(0,1)$
    and $\{h_n\}_{n\in\mathbb{N}}\subset(0,h_{0})$ be satisfying 
    \[
    \limsup_{n}\frac{(\log(1/h_n))_{+}+\log(2/\delta)}{nh_n^{q}}<\infty.
    \]
    Then, with probability at least $1-\delta$, 
    \begin{align*}
    & \left\Vert \widehat{p}_{h_n}-p_{h_n}\right\Vert _{\infty}\\
    & \leq \mathsf{C}_{P,K,B}\left(\sqrt{\frac{(\log(1/h_n))_{+}+\log(2/\delta)}{nh_n^{q}}}+\frac{1}{h_n^{q+1}}\min\left\{ \sum_{a}\left\Vert \widehat{\mu}_{a}-\mu_{a}\right\Vert _{1}+\sqrt{\frac{\log(2/\delta)}{n}},\,h_n\right\} \right).
    \end{align*}
    for some constant $\mathsf{C}_{P,K,B}$ depending only on $P$, $K$, $B$.
    
\end{lemma}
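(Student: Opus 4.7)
The plan is to decompose the deviation via the triangle inequality
\[
\|\widehat{p}_{h_n}-p_{h_n}\|_\infty \;\leq\; \|\widehat{p}_{h_n}-\widetilde{p}_{h_n}\|_\infty \;+\; \|\widetilde{p}_{h_n}-p_{h_n}\|_\infty,
\]
where $\widetilde{p}_{h_n}$ is the oracle kernel density estimator built from the unobserved $\{\mu_{(i)}\}$. The first term isolates the additional cost of plugging in $\widehat{\mu}$, and the second is the classical uniform deviation of a genuine KDE; they will be controlled separately and combined by a union bound at level $\delta/2$ each.

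For the oracle term, note that $\{\mu_{(i)}\}=\{\mu(X_i)\}$ are iid from the pushforward distribution $P$, so $\widetilde{p}_{h_n}$ is a bona fide KDE. Under Assumption \ref{assumption:A6-lipschitz-kernel} the class $\mathcal{F}_{h}=\{K((\cdot-w)/h):\,w\in\mathbb{R}^q\}$ is uniformly bounded by $\|K\|_\infty$ and VC-subgraph with constant complexity (Lipschitz on a compact support), and the density bound from Assumption \ref{assumption:A5'-bounded-density-2} gives the variance control $\mathrm{Var}[K((\mu-w)/h)]\lesssim h^q$. A Talagrand-type inequality, in the same spirit as the one already invoked in Lemma~\ref{lem:robust-hier-bound_diff} via Theorem~30 of \citet{KimSRW2019}, then yields
\[
\|\widetilde{p}_{h_n}-p_{h_n}\|_\infty \;\lesssim\; \sqrt{\frac{(\log(1/h_n))_{+}+\log(2/\delta)}{n h_n^q}}
\]
with probability at least $1-\delta/2$. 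Assumption~\ref{assumption:A1-sample-splitting-entropy} is not required here because only the true $\mu_{(i)}$ enter $\widetilde{p}_{h_n}$.

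For the plug-in term I would derive two complementary bounds and keep the smaller. The Lipschitz bound from Assumption~\ref{assumption:A6-lipschitz-kernel} gives, uniformly in $w\in\mathbb{R}^q$,
\[
\bigl|\widehat{p}_{h_n}(w)-\widetilde{p}_{h_n}(w)\bigr| \;\leq\; \frac{M_K}{n h_n^{q+1}}\sum_{i=1}^n\|\widehat{\mu}_{(i)}-\mu_{(i)}\|_2,
\]
and Lemma~\ref{lem:xhatprime_prime}(b), whose hypotheses (Assumptions~\ref{assumption:A1-sample-splitting-entropy} and \ref{assumption:A4-boundedness}) are in force, bounds the right-hand side by $\frac{M_K}{h_n^{q+1}}\bigl(\sum_a\|\widehat{\mu}_a-\mu_a\|_1 + B\sqrt{\log(2/\delta)/n}\bigr)$ with probability at least $1-\delta/2$. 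Separately, the deterministic envelope bound $\|\widehat{p}_{h_n}-\widetilde{p}_{h_n}\|_\infty \leq 2\|K\|_\infty/h_n^q = (2\|K\|_\infty/h_n^{q+1})\cdot h_n$ always holds. Taking the minimum of these two controls produces exactly the factor $\frac{1}{h_n^{q+1}}\min\{\cdots,\,h_n\}$ appearing in the stated bound.

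Combining via a union bound gives the claim at overall confidence $1-\delta$. The only genuinely delicate step is the oracle KDE bound: one must verify that the VC-subgraph complexity, the uniform envelope, and the variance scaling all hold uniformly along the bandwidth sequence $\{h_n\}$, which is where Assumptions~\ref{assumption:A5'-bounded-density-2} and \ref{assumption:A6-lipschitz-kernel} are used. This is precisely the Giné--Guillou-style argument already leveraged earlier in the paper, so the rest of the plan reduces to one Lipschitz calculation and a direct application of Lemma~\ref{lem:xhatprime_prime}.
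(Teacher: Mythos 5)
Your proposal is correct and follows essentially the same route as the paper: the same triangle-inequality decomposition through the oracle estimator $\widetilde{p}_{h_n}$, the same Gin\'e--Guillou/Talagrand-type uniform deviation bound for the oracle KDE term (the paper invokes Corollary 13 of \citet{KimSRW2019} after restricting to the compact support guaranteed by Assumption \ref{assumption:A4-boundedness}), and the same combination of the Lipschitz bound with the trivial envelope bound, followed by Lemma \ref{lem:xhatprime_prime}(b), to obtain the $\frac{1}{h_n^{q+1}}\min\{\cdots,h_n\}$ factor for the plug-in term.
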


For showing Lemma \eqref{lem:density-inf_hat_ph}, we note that $\left\Vert \widehat{p}_{h_{n}}-p_{h_{n}}\right\Vert _{\infty}$
can be upper bounded as 
\begin{equation}
\left\Vert \widehat{p}_{h_n}-p_{h_n}\right\Vert _{\infty}\leq\left\Vert \widetilde{p}_{h_n}-p_{h_n}\right\Vert _{\infty}+\left\Vert \widehat{p}_{h_n}-\widetilde{p}_{h_n}\right\Vert _{\infty}.\label{eq:density-inf_hat_ph_factorization}
\end{equation}
Therefore, in what follows we shall provide high probability bound for $\left\Vert \widetilde{p}_{h_n}-p_{h_n}\right\Vert _{\infty}$
in Lemma \eqref{lem:density-inf_tilde_ph} and $\left\Vert \widehat{p}_{h_n}-\widetilde{p}_{h_n}\right\Vert _{\infty}$
in Lemma \eqref{lem:density-inf_hat_tilde}. Then applying these to
\eqref{eq:density-inf_hat_ph_factorization} will conclude the proof.

The following is from applying \citet[Corollary 13]{KimSRW2019}.

\begin{lemma}
    
    \label{lem:density-inf_tilde_ph}
    
    Under Assumptions \ref{assumption:A1-sample-splitting-entropy}, \ref{assumption:A4-boundedness}, \ref{assumption:A5'-bounded-density-2}, and \ref{assumption:A6-lipschitz-kernel}, if we let $\delta\in(0,1)$ and
    $\{h_n\}_{n\in\mathbb{N}}\subset(0,h_{0})$ be satisfying 
    \[
    \limsup_{n}\frac{(\log(1/h_n))_{+}+\log(2/\delta)}{nh_n^{q}}<\infty,
    \]
    then with probability at least $1-\delta$ it follows 
    \[
    \left\Vert \widetilde{p}_{h_n}-p_{h_n}\right\Vert _{\infty}\leq \mathsf{C}_{P,K}\sqrt{\frac{(\log(1/h_n))_{+}+\log(2/\delta)}{nh_n^{q}}},
    \]
    where $\mathsf{C}_{P,K}$ depends only on $P$ and $K$.
    
\end{lemma}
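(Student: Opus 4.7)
My plan is to recognize $\widetilde{p}_{h_n}$ as an ordinary kernel density estimator built from the i.i.d.\ sample $\mu_{(1)},\ldots,\mu_{(n)}\sim P$, so that the claim reduces to a uniform (sup-norm) concentration bound for an empirical process indexed by the translation family
\[
\mathcal{F}_{h_n} = \left\{ f_{\mu'}: w \mapsto \frac{1}{h_n^{q}}\,K\!\left(\frac{w-\mu'}{h_n}\right)\; :\; \mu'\in\mathbb{R}^{q} \right\},
\]
since $\widetilde{p}_{h_n}(\mu')=\frac{1}{n}\sum_i f_{\mu'}(\mu_{(i)})$ and $p_{h_n}(\mu')=\E[f_{\mu'}(\mu)]$. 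This is the classical setup for KDE sup-norm rates, so I would conclude by invoking an off-the-shelf Talagrand-type deviation bound for VC-class KDEs, which is exactly Corollary~13 of \citet{KimSRW2019}.

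The three preconditions I would verify are standard. First, the envelope bound: Assumption~\ref{assumption:A6-lipschitz-kernel} implies $K$ is bounded (being Lipschitz on a compact support), so $|f_{\mu'}|\leq \|K\|_{\infty}/h_n^{q}$. Second, the variance bound, via Assumption~\ref{assumption:A5'-bounded-density-2}:
\[
\E[f_{\mu'}^{2}(\mu)] = \frac{1}{h_n^{q}}\int K^{2}(u)\,p(\mu'+h_n u)\,du \;\leq\; \frac{\|p\|_{\infty}\|K\|_{\infty}}{h_n^{q}}.
\]
Third, a VC-subgraph (or uniform covering) bound on $\mathcal{F}_{h_n}$: the subgraphs of the translation family $\{K((\cdot-\mu')/h_n)\}$ form a VC class of index depending only on $q$, again thanks to compact support and Lipschitz regularity of $K$. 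With these three inputs, Talagrand's inequality in its KDE form yields, with probability at least $1-\delta$,
\[
\|\widetilde{p}_{h_n}-p_{h_n}\|_{\infty} \;\lesssim\; \sqrt{\frac{(\log(1/h_n))_{+}+\log(2/\delta)}{n h_n^{q}}} + \frac{(\log(1/h_n))_{+}+\log(2/\delta)}{n h_n^{q}},
\]
and the second (linear) term is absorbed into the first since the bandwidth scaling $\limsup_{n}(\log(1/h_n))_{+}/(n h_n^{q})<\infty$ keeps the argument of the square root bounded, so $x\leq\sqrt{Cx}$ holds in the relevant range.

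The main obstacle is the VC-index (or bracketing-entropy) calculation for $\mathcal{F}_{h_n}$ together with the accompanying Dudley entropy integral, which is what produces the $\sqrt{\log(1/h_n)}$ factor inside the expectation bound; Assumption~\ref{assumption:A6-lipschitz-kernel} is precisely what makes this step tractable. Note that Assumption~\ref{assumption:A1-sample-splitting-entropy} plays no role here, since $\widetilde{p}_{h_n}$ is the oracle KDE depending only on the (unobserved) true $\mu_{(i)}$'s and not on their estimates; it is carried along only for consistency with the downstream combination of this bound with a Lipschitz-in-$\mu$ control of $\|\widehat{p}_{h_n}-\widetilde{p}_{h_n}\|_{\infty}$ inside the proof of Lemma~\ref{lem:density-inf_hat_ph}.
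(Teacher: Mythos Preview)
Your proposal is correct and follows essentially the same route as the paper: both reduce the claim to the standard KDE sup-norm concentration result and invoke Corollary~13 of \citet{KimSRW2019}. The only notable difference is that the paper first uses Assumption~\ref{assumption:A4-boundedness} to restrict the supremum to the compact set $\mathbb{X}=\mathbb{B}(0,B+h_0)$ (outside of which both $\widetilde{p}_{h_n}$ and $p_{h_n}$ vanish by compact support of $K$), and then verifies the specific hypotheses of that corollary---the local volume bound via Proposition~5 of \citet{KimSRW2019} (from Assumption~\ref{assumption:A5'-bounded-density-2}) and the tail condition $\int_0^\infty t\sup_{\|x\|\geq t}K^2(x)\,dt<\infty$ (from Assumption~\ref{assumption:A6-lipschitz-kernel})---rather than the generic envelope/variance/VC triple you outline; but these are just two presentations of the same verification.
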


\begin{proof}
    
    Consider $\mathbb{X}=\mathbb{B}(0, B+h_0)$.
    Then by Assumption \eqref{assumption:A4-boundedness} for $\forall w\in\mathbb{R}^{q}\backslash\mathbb{X}$ it follows
    \[
    \frac{\Vert \mu_{(i)}-w \Vert_2}{h} > 1.
    \]
    ${\rm supp}(K)\subset\overline{B(0,1)}$ from Assumption \eqref{assumption:A6-lipschitz-kernel} implies that
    \[
    \widetilde{p}_{h_n}(w)=\frac{1}{n}\sum_{i=1}^{n}K\left(\frac{\mu_{(i)}-w}{h}\right)=0\text{ a.s.},
    \]
    and consequently that $p_{h_n}(w)=0$ as well. Therefore, 
    \begin{equation}
    \left\Vert \widetilde{p}_{h_n}-p_{h_n}\right\Vert _{\infty}=\sup_{w\in\mathbb{X}}\left|\widetilde{p}_{h_n}(w)-p_{h_n}(w)\right|.\label{eq:density-inf_tilde_ph_boundedset}
    \end{equation}
    Under Assumption \ref{assumption:A5'-bounded-density-2}, $P$ has a bounded density $p$, so by \citet[Proposition 5]{KimSRW2019} we have that
    \[
    \limsup_{r\to0}\sup_{x\in\mathbb{X}}\frac{\int_{\mathbb{B}(x,r)}p(w)dw}{r^{q}}  <\infty.
    \]
    Note that under Assumption \eqref{assumption:A6-lipschitz-kernel}, we have that $\left|K(x)-K(y)\right|\leq M_{K}\left\Vert x-y\right\Vert _{2}$ for any $x,y \in\mathbb{R}^{q}$ and ${\rm supp}(K)\subset\overline{\mathbb{B}(0,1)}$, which together implies that $\left\Vert K\right\Vert _{\infty}\leq M_{K} < \infty$. Hence,
    \[
    \int_{0}^{\infty}t\sup_{\left\Vert x\right\Vert \geq t}K^{2}(x)dt\leq\int_{0}^{1}t M^2_{K} dt=\frac{1}{2}M^2_{K}<\infty.
    \]
    Then applying \citet[Corollary 13]{KimSRW2019} gives that with probability
    at least $1-\delta$, 
    \begin{equation}
    \sup_{w\in\mathbb{X}}\left|\widetilde{p}_{h_n}(w)-p_{h_n}(w)\right|\leq \mathsf{C}_{P,K}\sqrt{\frac{(\log(1/h_n))_{+}+\log(2/\delta)}{nh_n^{q}}},\label{eq:density-inf_tilde_ph_upper}
    \end{equation}
    where $\mathsf{C}_{P,K}$ depends only on $P$ and $K$. Finally \eqref{eq:density-inf_tilde_ph_boundedset}
    and \eqref{eq:density-inf_tilde_ph_upper} together imply that with probability at least $1-\delta$,
    \[
    \left\Vert \widetilde{p}_{h_n}-p_{h_n}\right\Vert _{\infty}\leq \mathsf{C}_{P,K}\sqrt{\frac{(\log(1/h_n))_{+}+\log(2/\delta)}{nh_n^{q}}}.
    \]
    
\end{proof}

\begin{lemma}
    
    \label{lem:density-inf_hat_tilde}
    
    Under Assumptions \ref{assumption:A1-sample-splitting-entropy} ,\ref{assumption:A4-boundedness}, and \ref{assumption:A6-lipschitz-kernel}, Then 
    \[
    \left\Vert \widehat{p}_{h_n}-\widetilde{p}_{h_n}\right\Vert _{\infty}\leq\frac{\mathsf{C}_{M_{K},B}}{h_n^{q+1}}\min\left\{ \sum_{a}\Pb\left[\left\Vert \widehat{\mu}_{a}-\mu_{a}\right\Vert _{1}\right]+\sqrt{\frac{\log(1/\delta)}{n}},\,h_n\right\} ,
    \]
    where $\mathsf{C}_{M_{K},B}$ depends only on $M_{K}$ and $B$.
    
\end{lemma}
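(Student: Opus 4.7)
The plan is to bound $\|\widehat{p}_{h_n}-\widetilde{p}_{h_n}\|_\infty$ by two complementary inequalities and combine them via the minimum. The first bound uses the Lipschitz continuity of $K$ and is tight when $\widehat{\mu}$ is close to $\mu$ relative to the bandwidth; the second is a trivial deterministic bound that takes over whenever the nuisance estimates are far from the truth. Together they produce the $\min$ structure in the claim.

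For the Lipschitz bound, I fix $w\in\mathbb{R}^q$ and write
\[
\widehat{p}_{h_n}(w)-\widetilde{p}_{h_n}(w)=\frac{1}{nh_n^q}\sum_{i=1}^n\left[K\!\left(\frac{\widehat{\mu}_{(i)}-w}{h_n}\right)-K\!\left(\frac{\mu_{(i)}-w}{h_n}\right)\right].
\]
Applying Assumption \ref{assumption:A6-lipschitz-kernel} to each summand in absolute value gives $M_K\|\widehat{\mu}_{(i)}-\mu_{(i)}\|_2/h_n$, so taking the supremum over $w$ yields
\[
\|\widehat{p}_{h_n}-\widetilde{p}_{h_n}\|_\infty \leq \frac{M_K}{h_n^{q+1}}\cdot\frac{1}{n}\sum_{i=1}^n\|\widehat{\mu}_{(i)}-\mu_{(i)}\|_2.
\]
Applying Lemma \ref{lem:xhatprime_prime}(b) to the empirical average on the right then delivers, with probability at least $1-\delta$, the first branch $\sum_a\Pb[\|\widehat{\mu}_a-\mu_a\|_1]+\sqrt{\log(1/\delta)/n}$ of the asserted minimum.

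For the second branch, I observe that since $K$ is Lipschitz with constant $M_K$, hence continuous, and vanishes outside $\overline{\mathbb{B}(0,1)}$, it must vanish on the boundary sphere as well; comparing $K(x)$ for $\|x\|\leq 1$ with $K(y)$ as $y$ approaches the boundary from outside gives $\|K\|_\infty\leq M_K$. A straightforward triangle inequality then produces
\[
\|\widehat{p}_{h_n}-\widetilde{p}_{h_n}\|_\infty \leq \frac{2\|K\|_\infty}{h_n^q} \leq \frac{2M_K\cdot h_n}{h_n^{q+1}},
\]
which supplies the $h_n$ branch. Since both inequalities hold simultaneously (the second deterministically, the first on a high-probability event), taking their minimum and absorbing constants into $\mathsf{C}_{M_K,B}$ yields the stated bound.

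The argument is essentially mechanical; there is no substantive obstacle. The only non-automatic step is deriving $\|K\|_\infty\leq M_K$ from Lipschitzness and compact support, which requires a brief continuity argument at the boundary of $\overline{\mathbb{B}(0,1)}$, and the minor bookkeeping required to track the constants through both bounds so that a single $\mathsf{C}_{M_K,B}$ works for the combined estimate.
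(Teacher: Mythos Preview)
Your proposal is correct and follows essentially the same approach as the paper. The only cosmetic difference is that the paper applies the pointwise bound $|K(x)-K(y)|\leq\min\{M_K\|x-y\|_2,\,M_K\}$ at the level of each summand and then pulls the minimum outside the average, whereas you derive the two global bounds separately and take the minimum at the end; this yields a harmless extra factor of~$2$ in the $h_n$ branch, which is absorbed into $\mathsf{C}_{M_K,B}$.
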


\begin{proof}
    
    By Assumption \ref{assumption:A6-lipschitz-kernel} it follows that $\left|K(x)-K(y)\right|\leq M_{K}\left\Vert x-y\right\Vert _{2}$ for any $x,y \in\mathbb{R}^{q}$
    and ${\rm supp}(K)\subset\overline{\mathbb{B}(0,1)}$, which together implies that $\left|K(x)-K(y)\right|\leq M_{K}$ and $\left\Vert K\right\Vert _{\infty}\leq M_{K}$.
    Thus it follows
    \[
    \left|K(x)-K(y)\right|\leq\min\left\{ M_{K}\left\Vert x-y\right\Vert _{2},M_{K}\right\}.
    \]
    Now for any $w\in\mathbb{R}^{q}$, $\left|\widehat{p}_{h_n}(w)-\widetilde{p}_{h_n}(w)\right|$
    is upper bounded by
    \begin{align*}
    \left|\widehat{p}_{h_n}(w)-\widetilde{p}_{h_n}(w)\right| & \leq\frac{1}{nh_n^{q}}\sum_{i=1}^{n}\left|K\left(\frac{\widehat{\mu}_{(i)}-w}{h_n}\right)-K\left(\frac{\mu_{(i)}-w}{h_n}\right)\right|\\
    & \leq\frac{1}{nh_n^{q}}\sum_{i=1}^{n}\min\left\{ M_{K}\frac{\left\Vert \widehat{\mu}_{(i)}-\mu_{(i)}\right\Vert _{2}}{h_n},M_{K}\right\} \\
    & \leq\frac{M_{K}}{h_n^{q+1}}\min\left\{ \frac{1}{n}\sum_{i=1}^{n}\left\Vert \widehat{\mu}_{(i)}-\mu_{(i)}\right\Vert _{2},h_n\right\} .
    \end{align*}
    Since this holds for any $w\in\mathbb{R}^{q}$, 
    \[
    \left\Vert \widehat{p}_{h_n}-\widetilde{p}_{h_n}\right\Vert _{\infty}\leq\frac{M_{K}}{h_n^{q+1}}\min\left\{ \frac{1}{n}\sum_{i=1}^{n}\left\Vert \widehat{\mu}_{(i)}-\mu_{(i)}\right\Vert _{2},h_n\right\} .
    \]
    Then under (A4), applying \eqref{eq:xhatprime_xprime_whp} from
    Lemma \ref{lem:xhatprime_prime} gives that with probability $1-\delta$,
    $\left\Vert \widehat{p}_{h_n}-\widetilde{p}_{h_n}\right\Vert _{\infty}$
    is upper bounded as
    
    \begin{align*}
    \left\Vert \widehat{p}_{h_n}-\widetilde{p}_{h_n}\right\Vert _{\infty} & \leq\frac{M_{K}}{h_n^{q+1}}\min\left\{ \left\Vert \widehat{\mu}_{a}-\mu_{a}\right\Vert _{1}+2B\sqrt{\frac{\log(1/\delta)}{n}},\,h_n\right\} \\
    & \leq\frac{\mathsf{C}_{M_{K},B}}{h_n^{q+1}}\min\left\{ \left\Vert \widehat{\mu}_{a}-\mu_{a}\right\Vert _{1}+\sqrt{\frac{\log(1/\delta)}{n}},\,h_n\right\} ,
    \end{align*}
    where $\mathsf{C}_{M_{K},B}=M_{K}\max\{1,2B\}$.
    
\end{proof}

Now we are ready to prove Lemma \ref{lem:density-inf_hat_ph}.

\begin{proof}[Proof of Lemma \ref{lem:density-inf_hat_ph}]
    
    As in \eqref{eq:density-inf_hat_ph_factorization}, we upper bound $\left\Vert \widehat{p}_{h_n}-p_{h_n}\right\Vert _{\infty}$
    as
    \[
    \left\Vert \widehat{p}_{h_n}-p_{h_n}\right\Vert _{\infty}\leq\left\Vert \widehat{p}_{h_n}-\widetilde{p}_{h_n}\right\Vert _{\infty}+\left\Vert \widetilde{p}_{h_n}-p_{h_n}\right\Vert _{\infty}.
    \]
    Then by Lemma \ref{lem:density-inf_tilde_ph}
    and \ref{lem:density-inf_hat_tilde}, with probability $1-\delta$ it follows that
    \begin{align*}
    & \left\Vert \widehat{p}_{h_n}-p_{h_n}\right\Vert _{\infty} \\
    & \leq \mathsf{C}_{P,K}\sqrt{\frac{(\log(1/h_n))_{+}+\log(2/\delta)}{nh_n^{q}}}+\frac{\mathsf{C}_{M_{K},B}}{h_n^{q+1}}\min\left\{ \sum_{a}\left\Vert \widehat{\mu}_{a}-\mu_{a}\right\Vert _{1}+\sqrt{\frac{\log(2/\delta)}{n}},\,h_n\right\} \\
    & \leq \mathsf{C}_{P,K,B}\left(\sqrt{\frac{(\log(1/h_n))_{+}+\log(2/\delta)}{nh_n^{q}}}+\frac{1}{h_n^{q+1}}\min\left\{ \sum_{a}\left\Vert \widehat{\mu}_{a}-\mu_{a}\right\Vert _{1}+\sqrt{\frac{\log(2/\delta)}{n}},\,h_n\right\} \right),
    \end{align*}
    where $\mathsf{C}_{P,K,B}$ depends only on $P$, $K$, $B$.
    
\end{proof}

We are now in a position to prove Theorem \ref{thm:density-stability_OP}.

\subsubsection*{Proof of Theorem \ref{thm:density-stability_OP}}
Recall that $L_{h_n,t}$ is stable if there exist $a>0$ and $C>0$ such that, for all $0<\zeta<a$, $H(L_{h_n,t-\zeta},L_{h_n,t+\zeta})\leq C\zeta$.

\begin{proof}
    
    Let us define
    \[
    r_{n}:=\mathsf{C}_{P,K,B}\left(\sqrt{\frac{(\log(1/h_n))_{+}+\log(2/\delta)}{nh_n^{q}}}+\frac{1}{h_n^{q+1}}\min\left\{ \sum_{a}\left\Vert \widehat{\mu}_{a}-\mu_{a}\right\Vert _{1}+\sqrt{\frac{\log(2/\delta)}{n}},\,h_n\right\} \right),
    \]
    which is RHS of the inequality in Lemma \ref{lem:density-inf_hat_ph}. 
    
    Suppose that we are given a sufficiently large $n$ so that 
    $\left\Vert \widehat{p}_{h_n}-p_{h_n}\right\Vert _{\infty}<r_{n}$ holds with probability at least $1-\delta$ where $r_n < a$ for some constant $a>0$.
    We aim to show two things:
    (a) for every $x\in L_{h_n,t}$ there exists $y\in\widehat{L}_{h_n,t}$
    with $\left\Vert x-y\right\Vert _{2}\leq Cr_{n}$, and (b) for every
    $x\in\widehat{L}_{h_n,t}$ there exists $y\in L_{h_n,t}$ with $\left\Vert x-y\right\Vert _{2}\leq Cr_{n}$.
    
    To show (a), consider $x\in L_{h_n,t}$, Then by the stability property
    of $L_{h_n,t}$, there exists $y\in L_{h_n,t+r_{n}}$ such that
    $\left\Vert x-y\right\Vert _{2}\leq Cr_{n}$. Then $p_{h_n}(y)>t+r_{n}$
    which implies that
    \[
    \widehat{p}_{h_n}(y)\geq p_{h_n}(y)-\left\Vert \widehat{p}_{h_n}-p_{h_n}\right\Vert _{\infty} > p_{h_n}(y)-r_n > t.
    \]
    Hence we conclude $y\in\widehat{L}_{h_n, t}$ with $\left\Vert x-y\right\Vert _{2}\leq Cr_{n}$.
    
    Similarly, to show (b), consider $x\in\widehat{L}_{h_n, t}$ so that $\widehat{p}_{h_n}(x)>t$.
    Thus we have
    \[
    p_{h_n}(x)\geq\widehat{p}_{h_n}(x)-\left\Vert \widehat{p}_{h_n}-p_{h_n}\right\Vert _{\infty}>t-r_{n},
    \]
    which leads to $x\in L_{h_n,t-r_{n}}$. Then again by the stability property of $L_{h_n,t}$, there
    exists $y\in L_{h_n,t}$ such that $\left\Vert x-y\right\Vert _{2}\leq Cr_{n}$.
    
    Hence by definition, we upper bound the Hausdorff distance $H(\widehat{L}_{h,t},L_{h,t})$ by
    \begin{align*}
    & Cr_{n}\\
    & =C\mathsf{C}_{P,K,B}\left(\sqrt{\frac{(\log(1/h_n))_{+}+\log(2/\delta)}{nh_n^{q}}}+\frac{1}{h_n^{q+1}}\min\left\{ \sum_{a}\left\Vert \widehat{\mu}_{a}-\mu_{a}\right\Vert _{1}+\sqrt{\frac{\log(2/\delta)}{n}},\,h_n\right\} \right).
    \end{align*}
    
\end{proof}

% \clearpage
% \newpage

% \input{checklist.tex}

\end{document}